\DeclareMathOperator*{\argmin}{arg\,min}
\newcommand{\bC}{\boldsymbol{C}}
\newcommand{\bX}{\boldsymbol{X}}
\newcommand{\bbeta}{\boldsymbol{\beta}}
\newcommand{\btau}{\boldsymbol{\tau}}
\newcommand{\btheta}{{\boldsymbol{\theta}}}
\newcommand{\bTheta}{\boldsymbol{\Theta}}
\newcommand{\bgamma}{\boldsymbol{\gamma}}
\newcommand{\mI}{\mathbb{I}}
\newcommand{\cI}{\mathcal{I}}
\newcommand{\cA}{\mathcal{A}}
\newcommand{\cS}{\mathcal{S}}
\newcommand{\bbE}{\mathbb{E}}
\newcommand{\bbI}{\mathbb{I}}
\newtheorem{Thm}{Theorem}
\newtheorem{lemma}{Lemma}
\newcommand{\blind}{0}
\begin{document}
\def\spacingset#1{\renewcommand{\baselinestretch}%
{#1}\small\normalsize} \spacingset{1}

\title{Partition-Mallows Model and Its Inference \\ for Rank Aggregation}

  \author[1]{Wanchuang Zhu}
  \author[2]{Yingkai Jiang}
  \author[3]{Jun S. Liu}
  \author[1]{Ke Deng}
  \affil[1]{Center for Statistical Science \& Department of Industrial Engineering, Tsinghua University, Beijing 100084, China}
  \affil[2]{Yau Mathematical Sciences Center \& Department of Mathematical Sciences, Tsinghua University, Beijing 100084, China}
  \affil[3]{Department of Statistics, Harvard University, Cambridge, MA 02138, USA}
  \date{}
  \maketitle

\if0\blind

\bigskip
\begin{abstract}
Learning how to aggregate ranking lists has been an active research area for many years and its advances have played a vital role in many applications ranging from bioinformatics to internet commerce. The problem of discerning reliability of rankers based only on the rank data is of great interest to many practitioners, but has received less attention from researchers. By dividing the ranked entities into two disjoint groups, i.e., relevant and  irrelevant/background ones, and incorporating the Mallows model for the relative ranking of relevant entities, we propose a framework for rank aggregation that can not only distinguish quality differences among the rankers but also provide the detailed ranking information for relevant entities. Theoretical properties of the proposed approach are established, and its advantages over existing approaches are demonstrated via simulation studies and real-data applications. Extensions of the proposed method to handle partial ranking lists and conduct covariate-assisted rank aggregation are also discussed. 

\end{abstract}

\noindent%
{\it Keywords:} 
Meta-analysis; Heterogeneous rankers; Mallows model; Partial ranking lists; Covariate-assisted aggregation.
\newpage
\spacingset{1.5} % DON'T change the spacing!
\section{Introduction}
\label{sec:intro}

Rank data arise naturally in many fields, such as web searching \citep{renda2003web}, design of recommendation systems \citep{baltrunas2010group} and genomics \citep{BADER20111099}. Many probabilistic models have been proposed for analyzing this type of data, among which the Thurstone model \citep{Thurstone1927}, the Mallows model \citep{mallows1957non} and the Plackett-Luce model \citep{luce1959,Plackett1975} are the most well-known representatives. The Thurstone model assumes that each entity possesses a hidden score and all the scores come from a joint probability distribution. The Mallows model is a location model defined on the permutation space of ordered entities, in which the probability mass of a permuted order is an exponential function of its distance from the true order. The Plackett-Luce model assumes that the preference of entity $E_i$ is associated with a weight $w_i$, and describes a recursive procedure for generating a random ranking list: entities are picked one by one with the probability proportional to their weights in a sequential fashion without replacement, and ranked based on their order of being selected.

Rank aggregation aims to derive a ``better'' aggregated ranking list $\hat\tau$ from multiple ranking lists $\tau_1, \tau_2,\cdots, \tau_m$. It is a classic problem and has been studied in a variety of contexts for decades. Early applications of rank aggregation can be traced back to the 18th-century France, where the idea of rank aggregation was proposed to solve the problem of political elections \citep{de1781memoire}. In the past 30 years, efficient rank aggregation algorithms have played important roles in  many fields, such as web searching \citep{renda2003web}, information retrieval \citep{fagin2003efficient}, design of recommendation systems \citep{baltrunas2010group}, social choice studies \citep{porello2012ontology,soufiani2014statistical}, genomics \citep{BADER20111099} and bioinformatics \citep{2010Integration,chen2016drhp}. 

Some popular approaches for rank aggregation are based on certain summary statistics. These methods simply calculate a summary statistics, such as the mean, median or geometric mean, for each entity $E_i$ based on its rankings across different ranking lists, and obtain the aggregated ranking list based on these summary statistics. 
Optimization-based methods obtain the aggregated ranking by minimizing a user-defined objective function, i.e., let $\hat{\tau} = \arg\min\limits_{\tau} \dfrac{1}{m} \sum\limits_{i=1}^m d\left(\tau, \tau_i\right)$, where distance measurement $d(\cdot,\cdot)$ could be either \textit{Spearman's footrule distance} \citep{diaconis1977spearman} or the \textit{Kendall tau distance} \citep{diaconis1988group}.
More detailed studies on these optimization-based methods can be found in \citet{young1978consistent,young1988condorcet,dwork2001rank}.

In early 2000s, a novel class of Markov chain-based methods have been proposed  \citep{dwork2001rank,2010Integration,Lin2010Space,Deconde2011Combining}, which first use the observed ranking lists to construct a probabilistic transition matrix 
among the entities and then use the magnitudes of the entities' equilibrium probabilities of the resulting Markov chain to rank them.
The boosting-based method \textit{RankBoost} \citep{freund2003efficient}, employs a \textit{feedback function} $\Phi(i,j)$ to construct the final ranking, where $\Phi(i,j)>0$  (or $\leq 0$)  indicates that entity $E_i$ is (or is not) preferred to entity $E_j$.
Some statistical methods utilize aforementioned probabilistic models (such as the Thurstone model) and derive the maximum likelihood estimate (MLE) of the final ranking. More recently, researchers have began to pay attention to rank aggregation methods for pairwise comparison data \citep{rajkumar2014statistical,chen2015spectral,fanjianqing2017Spectral}.

We note that all aforementioned methods assume that the rankers of interest are equally reliable. In practice, however, it is very common that some rankers are more reliable than the others, whereas some are nearly non-informative and may be regarded as ``spam rankers''. 
Such differences in rankers' qualities, if ignored in analysis, may significantly corrupt the rank aggregation and lead to seriously misleading results. 
To the best of our knowledge, the earliest effort to address this critical issue can be traced to \citet{aslam2001models}, which derived an aggregated ranking list by calculating a weighted summation of the observed ranking lists, known as the \textit{Borda Fuse}. 
\citet{2010Integration} extended the objective function of \citet{dwork2001rank} to a weighted fashion. 
Independently, \citet{liu2007supervised} proposed a supervised rank aggregation to determine weights of the rankers by training with some external data. 
Although assigning weights to rankers is an intuitive and simple way to handle quality differences, how to scientifically determine these weights is a critical and unsolved problem in the aforementioned works. 

Recently, \citet{deng2014bayesian} proposed BARD, a Bayesian approach to deal with quality differences among independent rankers
without the need of external information. BARD introduces a partition model, which assumes that all involved entities can be partitioned into two groups: the relevant ones and the background ones. 
A rationale of the approach is that, in many applications, distinguishing relevant entities from background ones has the priority over the construction of a final ranking of all entities. 
Under this setting, BARD decomposes the information in a ranking list into three components: (i) the relative rankings of all background entities, which is assumed to be uniform; (ii) the relative ranking of each relevant entity among all background ones, which takes the form of a truncated power-law; and, (iii) the relative rankings of all relevant entities, which is again uniform.  
The parameter of the truncated power-law distribution, which is ranker-specific, naturally serves as a quality measure for each ranker, as a ranker of a higher quality means a less spread truncated power-law distribution.

\citet{fan2019} proposed a stage-wise data generation process based on an extended Mallows model (EMM) introduced by \citet{Fligner1986Distance}. EMM assumes that each entity comes from a two-components mixture model involving a uniform distribution to model non-informative entities, a modified Mallows model for informative entities and a ranker-specific proportion parameter.
\citet{li2020bayesian} followed the Thurstone model framework to deal with available covariates for the entities as well as different qualities of the rankers.
In their model,  each entity is associated with a Gaussian-distributed latent score and a ranking list is determined by the ranking of these scores. 
The quality of each ranker is determined by the standard deviation parameter in the Gaussian model so that a larger standard deviation indicates a poorer quality ranker. 

Although these recent papers have proposed different ways for learning the quality variation among rankers, they all suffer from some limitations. 
The BARD method \citep{deng2014bayesian} simplifies the problem by assuming that all relevant entities are exchangeable. 
In many applications, however, the observed ranking lists often have a strong ordering information for relevant entities, and simply labeling these entities as ``relevant'' without considering their relative rankings tends to lose too much information and oversimplify the problem. 
\citet{fan2019} does not explicitly measure quality differences by their extended Mallows model. 
Although they mentioned that some of their model parameters can indicate the rankers' qualities,
it is not clear how to properly combine multiple indicators to produce an easily interpretable quality measurement. 
The learning framework of \citet{li2020bayesian} based on Gaussian latent variables appears to be more suitable for incorporating covariates than for handling heterogeneous rankers.

In this paper, we propose a \emph{partition-Mallows model} (PAMA), which combines the partition modeling framework of \citet{deng2014bayesian} with the Mallows model, to accommodate the detailed ordering information among the relevant entities. 
The new framework can not only quantify the quality difference of rankers and distinguish relevant entities from background entities like BARD, but also provide an explicit ranking estimate among the relevant entities in rank aggregation. 
In contrast to the strategy of imposing the Mallows on the full ranking lists, which tends to be sensitive to noises in low-ranking entities, the combination of the partition and Mallows models allows us to focus on highly ranked entities, which typically contain high-quality signals in data, and is thus more robust.
Both simulation studies and real data applications show that the proposed approach is superior to existing methods, e.g., BARD and EMM, for a large class of rank aggregation problems.

The rest of this paper is organized as follows. A brief review of BARD and the Mallows model is presented in Section \ref{sec:overview} as preliminaries. The proposed PAMA model is described in Section \ref{sec:model} with some key theoretical properties established. Statistical inference of the PAMA model, including the Bayesian inference and the pursuit of MLE, is detailed in Section \ref{sec:statinfer}.
Performance of PAMA is evaluated and compared to existing methods via simulations in Section \ref{sec:simulation}. Two real data applications are shown in Section \ref{sec:realdata} to demonstrate strength of the PAMA model in practice. Finally, we conclude the article with a short discussion in Section \ref{sec:discussion}. 

\section{Notations and Preliminaries}
\label{sec:overview}
Let $U = \left\{E_1, E_2, \cdots, E_n \right\}$ be the set of entities to be ranked. 
We use ``$E_i \preceq E_j$" to represent that entity $E_i$ is preferred to entity $E_j$ in a ranking list $\tau$, and denote the position of entity $E_i$ in $\tau$ by $\tau(i)$. 
Note that more preferred entities always have lower rankings. 
Our research interest is to aggregate $m$ observed ranking lists, $\tau_1,\ldots, \tau_m$, presumably constructed by $m$ rankers independently into one consensus ranking list which is supposed to be ``better'' than each individual one.

\subsection{BARD and Its Partition Model}
The partition model in BARD \citep{deng2014bayesian} assumes that $U$ can be partitioned into two non-overlapping subsets: $U=U_R\cup U_B$, with $U_R$ representing the set of relevant entities and $U_B$ for the background ones. 
Let $I=\{I_i\}_{i\in U}$ be the vector of group indicators, where $I_i=\mathbb{I}(E_i \in U_R)$ and $\mathbb{I}(\cdot)$ is the indicator function. 
This formulation makes sense in many applications where people are only concerned about a fixed number of top-ranked entities. 
Under this formulation, the information in a ranking list $\tau_k$ can be equivalently represented by a triplet $(\tau_k^0, \tau_k^{1\mid 0}, \tau_k^1)$, where $\tau_k^0$ denotes relative rankings of all background entities, $\tau_k^{1\mid 0}$ denotes relative rankings of relevant entities among the background entities and $\tau_k^1$ denotes relative rankings of all relevant entities. 

\citet{deng2014bayesian} suggested a three-component model for $\tau_k$ by taking advantage of its equivalent decomposition:
\begin{eqnarray}\label{Eq:RankDecomposition}
P(\tau_k \mid I)=P(\tau_k^0,\tau_k^{1\mid 0},\tau_k^1 \mid I)=P(\tau_k^0 \mid I)\times P(\tau_k^{1\mid 0} \mid I)\times P(\tau_k^1 \mid \tau_k^{1\mid 0},I),
\end{eqnarray}
where both $P(\tau_k^0 \mid I)$ (relative ranking of the background entities) and $P(\tau_k^1 \mid \tau_k^{1\mid 0},I) $ (relative ranking of the relevant entities conditional on their set of positions relative to background entities) are uniform, and the relative ranking of a relevant entity $E_i$ among background ones follows a power-law distribution with parameter $\gamma_k>0$, i.e.,
$$P(\tau_k^{1 \mid 0}(i)=t\mid I ) = q(t\mid \gamma_k, n_0) \propto t^{-\gamma_k}\cdot\mI(1\leq t\leq n_0+1),$$
leading to the following explicit forms for the three terms in equation~(\ref{Eq:RankDecomposition}):
\begin{eqnarray}
\label{eqn:bardtau0}
P(\tau_k^0 \mid I)&=&\frac{1}{n_0!},\\
\label{eqn:bardgamma}
P(\tau_k^{1\mid 0} \mid I)&=&\prod_{i \in U_R} q(\tau_k^{1 \mid 0}(i)\mid \gamma_k, I)=\frac{1}{(B_{\tau_k,I})^{\gamma_k}\times(C_{\gamma_k,n_1})^{n_1}},\\
\label{eqn:bardtau1}
P(\tau_k^1 \mid \tau_k^{1\mid 0},I)&=&\frac{1}{A_{\tau_k,I}}\times\mI\big(\tau_k^1 \in \mathcal{A}_{U_R}(\tau_k^{1\mid0})\big),
\end{eqnarray}
where
$n_1=\sum_{i=1}^n I_i$ and $n_0=n-n_1$ are the counts of relevant and background entities respectively, $B_{\tau_k,I}=\prod_{i \in U_R} \tau_k^{1\mid 0}(i)$,  $C_{\gamma_k,n_1}=\sum_{t=1}^{n_0+1} t^{-\gamma_k}$ is the normalizing constant of the power-law distribution, $\mathcal{A}_{U_R}(\tau_k^{1\mid0})$ is the set of $\tau_k^1$'s that are compatible with  $\tau_k^{1\mid 0}$,  and $A_{\tau_k,I}=\#\{\mathcal{A}_{U_R}(\tau_k^{1\mid0})\}=\prod_{t=1}^{n_0+1}(n_{\tau_k,t}^{1\mid 0}!)$ with $n_{\tau_k,t}^{1\mid 0}= \sum_{i \in U_R} \mI(\tau_k^{1 \mid 0}(i) = t)$. 

Intuitively, this model assumes that each ranker first randomly places all background entities to generate $\tau_k^0$, then ``inserts" each relevant entity independently into the list of background entities according to a truncated power-law distribution to generate $\tau_k^{1\mid 0}$, and finally draws $\tau_k^1$ uniformly from $\mathcal{A}_{U_R}(\tau_k^{1\mid0})$. 
In other words, $\tau_k^0$ serves as a baseline for modeling $\tau_k^{1\mid0}$ and $\tau_k^{1}$. 
It is easy to see from the model that a more reliable ranker should possess a larger $\gamma_k$. With the assumption of independent rankers, we have the full-data likelihood:
\begin{eqnarray}
P(\tau_1,\cdots,\tau_m \mid I,\bgamma)&=&\prod_{k=1}^mP(\tau_k\mid I,\gamma_k)\nonumber\\
&=&[(n_0)!]^{-m}\times\prod_{k=1}^m\frac{\mI\big(\tau_k^1 \in \mathcal{A}_{U_R}(\tau_k^{1\mid0})\big)}{A_{\tau_k,I}\times(B_{\tau_k,I})^{\gamma_k}\times\big(C_{\gamma_k,n_1}\big)^{n_1}},
\end{eqnarray}
where $\bgamma=(\gamma_1,\cdots,\gamma_m)$.  A detailed Bayesian inference procedure for $(I,\bgamma)$ via Markov chain Monte Carlo can be found in \citet{deng2014bayesian}.

\subsection{The Mallows Model} \label{sec:mallows}
 \cite{mallows1957non} proposed the following probability model for a ranking list $\tau$ of $n$ entities:
\begin{equation}\label{Eq:MallowsModel}
\pi(\tau \mid \tau_0, \phi) = \dfrac{1}{Z_n(\phi)}\cdot\exp\{-\phi\cdot  d(\tau,\tau_0)\},
\end{equation}
where $\tau_0$ denotes the true ranking list, $\phi>0$ characterizing the reliability of $\tau$, function $d(\cdot,\cdot)$ is a distance metric between two ranking lists, and 
\begin{equation}\label{eq:Mallow-norm}
    Z_n(\phi)=\sum_{\tau'}\exp\{-\phi\cdot d(\tau', \tau_0)\}=\frac{\prod_{t=2}^n(1-e^{-t\phi})}{(1-e^{-\phi})^{n-1}}
\end{equation}
being the normalizing constant, whose analytic form was derived in  \citet{diaconis1988group}.
Clearly, a larger $\phi$ means that $\tau$ is more stable and concentrates in a tighter neighborhood of $\tau_0$. 
A common choice of $d(\cdot, \cdot)$ is the Kendall tau distance.

The Mallows model under the Kendall tau distance can also be equivalently described by an alternative multistage model, which selects and positions entities one by one in a sequential fashion, where $\phi$ serves as a common parameter that governs the probabilistic behavior of each entity in the stochastic process \citep{mallows1957non}.
Later on, \citet{Fligner1986Distance} extended the Mallows model by allowing $\phi$ to vary at different stages, i.e., introducing a position-specific parameter $\phi_i$ for each position $i$, which leads to a very flexible, in many cases too flexible, framework to model rank data. 
To stabilize the generalized Mallows model by \citet{Fligner1986Distance}, \citet{fan2019} proposed to put a structural constraint on $\phi_i$s of the form $\phi_i=\phi\cdot(1-\alpha^i)$ with $0<\phi <1$ and $0\leq \alpha \leq 1$.
As a probabilistic model for rank data, the Mallows model enjoys great interpretability, model compactness, inference and computation efficiency.
For a comprehensive review of the Mallows model and its extensions, see \citet{Irurozki2014PerMallows} and \citet{fan2019}.

\section{The Partition-Mallows Model} \label{sec:model}
The partition model employed by BARD \citep{deng2014bayesian} tends to oversimplify the problem for scenarios where we care about the detailed rankings of relevant entities. 
To further enhance the partition model of BARD so that it can reflect the detailed rankings of relevant entities, we describe a new partition-Mallows model in this section.

\subsection{The Reverse Partition Model}\label{subsec:RevPar}
To combine the partition model with the Mallows model, a naive strategy is to simply replace the uniform model for the relevant entities, i.e., $P(\tau_k^1 \mid \tau_k^{1|0},I)$ in (\ref{Eq:RankDecomposition}), by the Mallows model, which leads to the updated Equation \eqref{eqn:bardtau1} as below: 
$$P(\tau_k^1 \mid \tau_k^{1\mid 0},I) = \frac{\pi(\tau_k^1)} {Z_{\tau_k,I}} \times \mI \big(\tau_k^1 \in \mathcal{A}_{U_R}(\tau_k^{1\mid0}) \big),$$
where $\pi(\tau_k^1)$ is the Mallows density of $\tau_k^1$ and
$Z_{\tau_k,I}=\sum_{\tau\in \mathcal{A}_{U_R}(\tau_k^{1\mid0})} \pi(\tau)$
is the normalizing constant of the Mallows model with a constraint due to the compatibility of $\tau_k^1$ with respect to $\mathcal{A}_{U_R}(\tau_k^{1\mid0})$. 
Apparently, the calculation of $Z_{\tau_k,I}$, which involves the summation over the whole space of $\mathcal{A}_{U_R}(\tau_k^{1\mid0})$, whose size is $A_{\tau_k,I}=\#\{\mathcal{A}_{U_R}(\tau_k^{1\mid0})\}=\prod_{t=1}^{n_0+1}(n_{\tau_k,t}^{1\mid 0}!)$, is infeasible for most practical cases, rendering such a naive combination of the Mallows model and the partition model impractical. 

To avoid the challenging computation caused by the constraints due to $\mathcal{A}_{U_R}(\tau_k^{1\mid0})$, we rewrite the partition model by switching the roles of $\tau_k^0$ and $\tau_k^1$ in the model: instead of decomposing $\tau_k$ as $(\tau_k^0,\tau_k^{1\mid 0},\tau_k^1)$ conditioning on the group indicators $I$, we decompose $\tau_k$ into an alternative triplet $(\tau_k^1,\tau_k^{0\mid 1},\tau_k^0)$,  where $\tau_k^{0\mid 1}$ denotes the {\it relative reverse rankings} of background entities among the relevant ones.
Formally, we note that $\tau_k^{0\mid 1}(i) \triangleq n_1+2-\tau_{k|\{i\} \cup U_R}(i)$ for any $i\in U_R$, where $\tau_{k|\{i\} \cup U_R}(i)$ denotes the relative ranking of a background entity among the relevant ones. 
In this {\it reverse partition model}, we first order the relevant entities  according to a certain distribution and then use them as a reference system to ``insert'' the background entities. 
Figure~\ref{Tab:decomposition2} illustrates the equivalence between $\tau_k$ and its two alternative presentations, $(\tau_k^0,\tau_k^{1\mid 0},\tau_k^1)$ and $(\tau_k^1,\tau_k^{0\mid 1},\tau_k^0)$. 

Given the group indicator vector $I$, the reverse partition model based on $(\tau_k^1,\tau_k^{0\mid 1},\tau_k^0)$ gives rise to the following distributional form for $\tau_k$:
\begin{eqnarray}\label{Eq:RankDecompositionInverse}
P(\tau_k \mid I)=P(\tau_k^1,\tau_k^{0\mid 1},\tau_k^0 \mid I)=P(\tau_k^1 \mid I)\times P(\tau_k^{0\mid 1} \mid I)\times P(\tau_k^0 \mid \tau_k^{0\mid 1},I),
\end{eqnarray}
which is analogous to  (\ref{Eq:RankDecomposition}) for the original partition model in BARD. Comparing to (\ref{Eq:RankDecomposition}), however, the new form (\ref{Eq:RankDecompositionInverse}) enables us to specify an unconstrained marginal distribution for $\tau_k^1$.
Moreover, due to the symmetry between $\tau_k^{1\mid 0}$ and $\tau_k^{0\mid 1}$, it is highly likely that the power-law distribution, which was shown in \cite{deng2014bayesian} to approximate the distribution of $\tau_k^{1\mid 0}(i)$ well for each $E_i\in U_R$, can also model $\tau_k^{0\mid 1}(i)$ for each $E_i\in U_B$ reasonably well. Detailed numerical validations are shown in Supplementary Material.

If we assume that all relevant entities are exchangeable, all background entities are exchangeable, and the relative reserve ranking of a background entity among the relevant entities follows a power-law distribution, we have
\begin{eqnarray}
\label{eqn:InvBARD_tau1}
P(\tau_k^1 \mid I)&=&\frac{1}{n_1!},\\
\label{eqn:InvBARD_tau01}
P(\tau_k^{0\mid 1} \mid I,\gamma_k)&=&\prod_{i \in U_B} P(\tau_k^{0 \mid 1}(i)\mid I,\gamma_k)=\frac{1}{(B^*_{\tau_k,I})^{\gamma_k}\times (C^*_{\gamma_k,n_1})^{n_0}},\\
\label{eqn:InvBARD_tau0}
P(\tau_k^0 \mid \tau_k^{0\mid 1},I)&=&\frac{1}{A^*_{\tau_k,I}}\times\mI\big(\tau_k^0 \in \mathcal{A}_{U_R}(\tau_k^{0\mid1})\big),
\end{eqnarray}
where $n_1$ and $n_0$ are numbers of relevant and background entities, respectively, $B^*_{\tau_k,I}=\prod_{i \in U_B} \tau_k^{0 \mid 1}(i)$ is the unnormalized part of the power-law, $C^*_{\gamma_k,n_1}=\sum_{t=1}^{n_1+1} t^{-\gamma_k}$ is the normalizing constant, 
$\mathcal{A}_{U_B}(\tau_k^{0\mid1})$ is the set of all $\tau_k^0$ that are compatible with a given $\tau_k^{0\mid 1}$, and $A^*_{\tau_k,I}=\#\{\mathcal{A}_{U_B}(\tau_k^{0\mid1})\}=\prod_{t=1}^{n_1+1}(n_{\tau_k,t}^{0\mid 1}!)$ with $n_{\tau_k,t}^{0\mid 1}= \sum_{i \in U_B} \mI(\tau_k^{0\mid 1}(i) = t)$. Apparently, the likelihood of this reverse-partition model shares the same structure as that of the original partition model in BARD, and thus can be inferred in a similar way.

\subsection{The Partition-Mallows Model} \label{subsec:BMM}
The reverse partition model introduced in section ~\ref{subsec:RevPar} allows us to freely  model $\tau_k^1$  beyond a uniform distribution, which is infeasible for the original partition model in BARD. 
Here we employ the Mallows model for $\tau_k^1$ due to its interpretability, compactness and computability. 
To achieve this, we replace the group indicator vector $I$ in the partition model by a more general indicator vector $\cI=\{\cI_i\}_{i=1}^n$, which takes value in $\Omega_\cI$, the space of all permutations of $\{1,\cdots,n_1,\underbrace{0,\ldots,0}_{n_0}\}$, with $\cI_i=0$ if $E_i\in U_B$, and $\cI_i=k>0$ if $E_i\in U_R$ and is ranked at position $k$ among all relevant entities in $U_R$.
Figure~\ref{Tab:decomposition2} provides an illustrative example of assigning an enhanced indicator vector $\cI$ to a universe of 10 entities with $n_1=5$. 
 
Based on the status of $\cI$, we can define subvectors $\cI^+$ and $\cI^0$, where $\cI^+$ stands for the subvector of $\cI$ containing all positive elements in $\cI$, and $\cI^0$ for the remaining zero elements in $\cI$.
Figure~\ref{Tab:decomposition2} demonstrates the constructions of $\cI$, $\cI^+$ and $\cI^0$, and the equivalence between $\tau_k$, $(\tau_k^0,\tau_k^{1\mid 0},\tau_k^1)$, and $(\tau_k^1,\tau_k^{0\mid 1},\tau_k^0)$ given $\cI$. 
Note that different from the partition model in BARD, in which we allow the number of relevant entities represented by $n_1$ to vary nearby its expected value, the number of relevant entities in the new model, is assumed to be fixed and known for conceptual and computational convenience.
In other words, we have $|U_R|=n_1$ in the new setting.

\begin{figure}[h]
    \centering
    \begin{tabular}{
    p{0.5cm}<{\centering}|p{0.5cm}<{\centering}|p{0.5cm}<{\centering}p{0.5cm}<{\centering}p{0.5cm}<{\centering}|p{0.5cm}<{\centering}p{0.5cm}<{\centering}p{0.5cm}<{\centering} p{0.5cm}<{\centering}p{0.5cm}<{\centering}|p{0.5cm}<{\centering}|p{0.5cm}<{\centering} p{0.5cm}<{\centering}p{0.5cm}<{\centering}|p{0.5cm}<{\centering}|p{0.5cm}<{\centering}}
\cline{1-3}  \cline{5-6} \cline{8-8}\cline{10-12} \cline{14-16}
         $\cI^+$ &$\cI^0$ &$I$& &$\cI$& $U$&  &$\tau_k$ &  &$\tau_k^1$&$\tau_k^{0\mid 1}$&$\tau_k^0$ & & $\tau_k^0$&$\tau_k^{1\mid 0}$&$\tau_k^1$  \\ \cline{1-3} \cline{5-6}  \cline{8-8}\cline{10-12} \cline{14-16}
          1 & - & 1 & & 1& $E_1$    &  & 2 & & 2& - & - &&-&1&2\\
          2 & - & 1 & & 2& $E_2$    &  & 6 & & 4& - & - &&-&3&4\\
          3 & - & 1 & & 3& $E_3$    &  & 4 & & 3& - & -&&-&2&3\\
          4 & - & 1 & & 4& $E_4$    &  & 1 & & 1& - & -&&-&1&1\\
          5 & - & 1 &$\Longleftarrow$& 5& $E_5$    &  & 7 & $\Longleftrightarrow$& 5&-& -&$\Longleftrightarrow$&-&3&5\\
          - & 0 & 0 & & 0& $E_6$    &  & 5 & & -& 3 & 2&&2&-&-\\
          - & 0 & 0 & & 0& $E_7$    &  & 3 & & -& 4 & 1&&1&-&-\\
          - & 0 & 0 & & 0& $E_8$    &  & 8 & & -& 1 & 3&&3&-&-\\
          - & 0 & 0 & & 0& $E_9$    &  & 9 & & -& 1 & 4&&4&-&-\\
          - & 0 & 0 & & 0& $E_{10}$ &  & 10& & -& 1 & 5&&5&-&-\\  \cline{1-3} \cline{5-6}  \cline{8-8}\cline{10-12} \cline{14-16}
    \end{tabular}
    \caption{An illustrative example of construction of $\cI^{+}$, $\cI^0$ and $I$ based on the enhanced indicator vector $\cI$ of $n_1=5$ to a universe of 10 entities, and the decomposition of a ranking list $\tau_k$ into triplet $(\tau_k^1,\tau_k^{0\mid 1},\tau_k^0)$ and $(\tau_k^0,\tau_k^{1\mid 0},\tau_k^1)$ respectively given $\cI$.
    }
    \label{Tab:decomposition2}
\end{figure}

As an analogy of Equations (\ref{Eq:RankDecomposition}) and (\ref{Eq:RankDecompositionInverse}), we have the following decomposition of $\tau_k$ given the enhanced indicator vector $\cI$: 
\begin{eqnarray}\label{Eq:RankDecomposition_BARDM}
P(\tau_k \mid \cI)=P(\tau_k^1,\tau_k^{0\mid 1},\tau_k^0 \mid \cI)=P(\tau_k^1 \mid\cI)\times P(\tau_k^{0\mid 1} \mid\cI)\times P(\tau_k^0 \mid \tau_k^{0\mid 1},\cI).
\end{eqnarray}
Assume that $\tau_k^1\mid\cI$ follows the Mallows model (with parameter $\phi_k$) centered at $\cI^+$:
\begin{eqnarray}\label{Eq:tau1_BARDM}
P(\tau_k^1 \mid \cI, \phi_k)=P(\tau_k^1 \mid\cI^+,\phi_k)=\frac{\exp\{-\phi_k\cdot d_{\tau}(\tau_k^1,\cI^+)\}}
{Z_{n_1}(\phi_k)},
\end{eqnarray}
where $d_{\tau}(\cdot,\cdot)$ denotes Kendall tau distance and
$Z_{n_1}(\phi_k)$ is defined as in \eqref{eq:Mallow-norm}.
Clearly, a larger $\phi_k$ indicates that ranker $\tau_k$ is of a higher quality, as the distribution is more concentrated at the ``true ranking" defined by $\cI^+$. 
Since the relative ranking of background entities are of no interest to us, we still assume that they are randomly ranked. 
Together with the power-law assumption for $\tau_k^{0\mid1}(i)$, we have
\begin{eqnarray}
\label{Eq:tau01_BARDM}
P(\tau_k^{0\mid 1} \mid\cI)&=&P(\tau_k^{0\mid 1} \mid I,\gamma_k)
=\frac{1}{(B^*_{\gamma_k,I})^{\gamma_k}\times (C^*_{\gamma_k,n_1})^{n-n_1}},\\
\label{Eq:tau0_BARDM}
P(\tau_k^0 \mid \tau_k^{0\mid 1},\cI)&=&P(\tau_k^0 \mid \tau_k^{0\mid 1},I)=\frac{1}{A^*_{\tau_k,I}}\times\mI\big(\tau_k^0 \in \mathcal{A}_{U_R}(\tau_k^{0\mid1})\big),
\end{eqnarray}
where notations $A^*_{\tau_k,I}$, $B^*_{\tau_k,I}$ and $C^*_{\gamma_k,n_1}$ are the same as in the reverse-partition model. 
We call the resulting model the \emph{Partition-Mallows model}, abbreviated as PAMA.

Different from the partition and reverse partition models, which quantify the quality of ranker $\tau_k$ with only one parameter $\gamma_k$ in the power-law distribution, the PAMA model contains two quality parameters $\phi_k$ and $\gamma_k$, with the former indicating the ranker's ability of ranking relevant entities and the latter reflecting the ranker's ability in differentiating relevant entities from background ones. 
Intuitively, $\phi_k$ and $\gamma_k$ reflect the quality of ranker $\tau_k$ in two different aspects. 
However, considering that a good ranker is typically strong in both dimensions, it looks quite natural to further simplify the model by assuming
\begin{equation}\label{Eq:phik2phi}
\phi_k=\phi\cdot\gamma_k,
\end{equation}
with $\phi>0$ being a common factor for all rankers. 
This assumption, while reducing the number of free parameters by almost half, captures the natural positive correlation between $\phi_k$ and $\gamma_k$ and serves as a first-order (i.e., linear) approximation to the functional relationship between $\phi_k$ and $\gamma_k$. 
A wide range of numerical studies based on simulated data suggest that the linear approximation showed in \eqref{Eq:phik2phi} works reasonably well for many typical scenarios for rank aggregation. In contrast, the more flexible model with both $\phi_k$ and $\gamma_k$ as free parameters (which is referred to as PAMA$^*$) suffers from unstable performance from time to time.
Detailed evidences to support assumption \eqref{Eq:phik2phi} can be found in Supplementary Material.

Plugging in  \eqref{Eq:phik2phi} into \eqref{Eq:tau1_BARDM}, we have a simplified model for $\tau_k^1$ given $\cI$ as follows:
\begin{eqnarray}\label{Eq:tau1_BARDM_final}
P(\tau_k^1 \mid\cI, \phi,\gamma_k)=P(\tau_k^1 \mid\cI^+,\phi,\gamma_k)=\frac{\exp\{-\phi\cdot\gamma_k\cdot d_{\tau}(\tau_k^1,\cI^+)\}}{Z_{n_1}(\phi\cdot\gamma_k)}.
\end{eqnarray}
Combining \eqref{Eq:tau01_BARDM}, \eqref{Eq:tau0_BARDM} and \eqref{Eq:tau1_BARDM_final}, we get the full likelihood of $\tau_k$:
\begin{eqnarray}\label{Eq:tau_BARDM_final}
P(\tau_k \mid\cI, \phi,\gamma_k)&=&P(\tau_k^1\mid\cI,\phi,\gamma_k)\times P(\tau_k^{0|1}\mid\cI,\gamma_k)\times P(\tau_k^{0}\mid \tau_k^{0|1},\cI)\nonumber\\
&=&\frac{\mI\big(\tau_k^0 \in \mathcal{A}_{U_R}(\tau_k^{0\mid1})\big)}{A^*_{\tau_k,I}\times(B^*_{\tau_k,I})^{\gamma_k}\times(C^*_{\gamma_k,n_1})^{n-n_1}\times (D^*_{\tau_k,\cI})^{\phi\cdot\gamma_k}\times E^*_{\phi,\gamma_k}},
\end{eqnarray}
where $D^*_{\tau_k,\cI}=\exp\{d_{\tau}(\tau_k^1,\cI^+)\}$, $E^*_{\phi,\gamma_k}=Z_{n_1}(\phi\cdot\gamma_k)=\frac{\prod_{t=2}^{n_1}(1-e^{-t\phi\gamma_k})}{(1-e^{-\phi\gamma_k})^{n_1-1}}$, and $A^*_{\tau_k,\cI}$, $B^*_{\tau_k,\cI}$ and $C^*_{\tau_k,n_1}$ keep the same meaning as in the reverse partition model. At last, for the set of observed ranking lists $\btau=(\tau_1,\cdots,\tau_m)$ from $m$ independent rankers, we have the joint likelihood:
\begin{eqnarray}
\label{eqn:like}
P(\btau \mid\cI,\phi,\bgamma)&=&\prod_{k=1}^m P(\tau_k\mid\cI,\phi,\gamma_k).
\end{eqnarray}

\subsection{Model Identifiability and Estimation Consistency}\label{sec:consistency}
Let $\Omega_n$ be the space of all permutations of $\{1,\cdots,n\}$ in which $\tau_k$ takes value, and let $\btheta=(\cI,\phi,\bgamma)$ be the vector of model parameters. The PAMA model  in \eqref{eqn:like},
i.e., $P(\btau \mid \btheta)$, defines a family of probability distributions on $\Omega_{n}^m$ indexed by parameter $\btheta$ taking values in space $\bTheta=\Omega_{\cI} \times \Omega_{\phi} \times \Omega_{\bgamma} $, 
where $\Omega_{\cI}$ is the space of all permutations of $\{1,\cdots,n_1,{\bf 0}_{n_0}\}$, $\Omega_{\phi}=(0,+\infty)$ and $\Omega_{\bgamma}=[0,+\infty)^m$.
We show here that the PAMA model defined in  \eqref{eqn:like} is identifiable and the model parameters can be estimated consistently under mild conditions.

\begin{Thm} \label{thm:identi}
The PAMA model is identifiable, i.e., 
\begin{equation}\label{eq:IdentifiablityCondition}
 \forall\ \btheta_1,\btheta_2\in\bTheta,\ \mbox{if}\  P(\btau\mid\btheta_1)= P(\btau\mid\btheta_2)\ \mbox{for}\ \forall\ \btau\in\Omega_n^m,\ \mbox{then}\ \btheta_1 = \btheta_2.
\end{equation}
\end{Thm}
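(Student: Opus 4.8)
The plan is to collapse the $m$-ranker identifiability statement to a one-ranker statement, and then to read $\cI$, $\phi$ and $\gamma_k$ off the shape of the one-ranker likelihood. Every factor $P(\tau_k\mid\cI,\phi,\gamma_k)$ in \eqref{eqn:like} is strictly positive on $\Omega_n$, since the compatibility indicator in \eqref{Eq:tau_BARDM_final} is automatically $1$ for the triple obtained by decomposing any $\tau_k$. Hence, fixing an arbitrary reference profile $\btau^*=(\tau^*,\dots,\tau^*)$, substituting into \eqref{eqn:like} the profile that equals $\btau^*$ off the $j$-th coordinate and dividing through by the value of the product at $\btau^*$, all factors with index $k\neq j$ cancel, leaving $P(\cdot\mid\cI_1,\phi_1,\gamma_{1,j})=c_j\,P(\cdot\mid\cI_2,\phi_2,\gamma_{2,j})$ on $\Omega_n$ for a constant $c_j$ independent of $\tau$; summing over $\Omega_n$ forces $c_j=1$. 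So it suffices to show that $P(\cdot\mid\cI,\phi,\gamma)$ determines $(\cI,\phi,\gamma)$: the shared pair $(\cI,\phi)$ is then pinned down by any ranker with $\gamma_j>0$, and each remaining $\gamma_k$ by its own one-ranker law.

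For the one-ranker problem I would work with $f(\tau):=\log P(\tau\mid\cI,\phi,\gamma)$. By \eqref{Eq:tau_BARDM_final}, after absorbing the $\tau$-free factors $C^*_{\gamma,n_1}$ and $Z_{n_1}(\phi\gamma)$ into an additive constant, $f(\tau)=-\log A^*_{\tau,I}-\gamma\log B^*_{\tau,I}-\phi\gamma\,d_\tau(\tau^1,\cI^+)+\mathrm{const}$. The key point is that the three $\tau$-dependent terms see different features of $\tau$: $d_\tau(\tau^1,\cI^+)$ depends only on the internal order of the relevant block, whereas $A^*_{\tau,I}$ and $B^*_{\tau,I}$ depend only on the counts $n^{0\mid 1}_{\tau,t}$ recording how the background entities interleave the relevant ones. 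Granting for the moment that the partition $I$ is known: restricting $f$ to profiles that fix a background-interleaving pattern and vary only the order of the relevant block isolates $-\phi\gamma\,d_\tau(\cdot,\cI^+)$ up to a constant, whose unique maximizer over that slice is $\tau^1=\cI^+$ (this uses $\phi\gamma>0$), so $\cI^+$ is recovered; comparing $f$ at $\tau^1=\cI^+$ with its value at a single adjacent transposition of $\cI^+$, same interleaving, returns $\phi\gamma$; and comparing $f$ at two profiles with the same relevant order, one with all $n_0$ background entities at the very bottom and the other with each background entity having exactly one relevant entity below it, returns a difference of exactly $\gamma\,n_0\log 2$, hence $\gamma$, and thus $\phi=(\phi\gamma)/\gamma$.

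The substantive obstacle is recovering the partition $I$ itself, i.e.\ showing that no two distinct relevant/background labelings of the (known) size $n_1$ induce the same one-ranker law. I would argue by contradiction: if $I_1\neq I_2$ there is an entity relevant under one labeling and background under the other, and I would compare the marginal law of that entity's position under $\btheta_1$ and $\btheta_2$. A relevant entity's position law inherits the Mallows concentration — its mode near its $\cI^+$-rank, its upper tail geometrically controlled by $\phi\gamma$ — while a background entity's is a convolution of a truncated power law with a uniform placement inside the background block and is strictly more dispersed toward the bottom, so the two cannot agree; the delicate part is making this comparison quantitative \emph{uniformly in $\gamma>0$}, in particular for small $\gamma$ where the power law is nearly uniform and the two families come closest. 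A fallback is algebraic: $f$ is invariant under permuting the $I$-background entities among themselves, so this symmetry of $f$ must be compatible with both $I_1$ and $I_2$, which together with the explicit forms of $A^*$ and $B^*$ forces $I_1=I_2$. Finally I would record the boundary cases the argument needs excluded: if $n_1\leq 1$ the Mallows block is trivial and $\phi$ never enters the likelihood, and if $\gamma_k=0$ for every $k$ then $\phi$ and the internal order of $\cI$ drop out, so the conclusion is to be read under the mild provisos $2\leq n_1\leq n-1$ and $\max_k\gamma_k>0$, under which the steps above recover $\btheta$ uniquely.
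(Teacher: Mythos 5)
Your reduction from $m$ rankers to a single ranker is sound (strict positivity of every factor in \eqref{Eq:tau_BARDM_final} makes the cancellation-and-normalization argument work) and plays the same role as the paper's marginalization-over-$\tau_j$ induction. Your single-ranker recovery of $\cI^+$, of $\phi\gamma$, and of $\gamma$ \emph{given the partition} is also correct: within a slice that fixes the background interleaving, the only $\tau$-dependent term left is $-\phi\gamma\, d_\tau(\tau^1,\cI^+)$, and your two test configurations (all background entities below every relevant one, versus all background entities just above the lowest relevant one) give $A^*_{\tau,I}=n_0!$ in both cases and $\log B^*_{\tau,I}$ equal to $0$ and $n_0\log 2$ respectively, so the log-likelihood difference is exactly $\gamma n_0\log 2$ as you claim.

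The genuine gap is precisely the step you flag: identification of the partition $I$ itself, and neither of your two strategies is actually carried out. The marginal-position comparison is only qualitative, and, as you note, it degenerates as $\gamma\to 0$; observe that at $\gamma=0$ the relevant/background asymmetry survives only through the combinatorial factor $A^*_{\tau,I}$, so any completed argument must exploit that factor rather than Mallows or power-law concentration. The ``algebraic fallback'' is circular as stated: exchangeability of the common law in the $I_1$-background and $I_2$-background entities produces a contradiction only if you also prove that the PAMA law is \emph{not} invariant under transposing a relevant entity with a background one for every admissible $\gamma$ --- which is exactly the assertion to be established and is not supplied. The paper sidesteps this issue with an extreme-value argument: it identifies the least-probable rankings as $inv(\cI)$ (relevant entities at the bottom in reversed order), shows that the minimal probability $h(\phi,\gamma)$ together with the ratio $g(\phi,\gamma)$ to the second-smallest value determines $(\phi,\gamma)$, and then notes that when $(\phi,\gamma)$ coincide but $\cI_1\neq\cI_2$ the minimizing rankings differ, so $P(\tau^*_{\btheta_1}\mid\btheta_2)>h=P(\tau^*_{\btheta_1}\mid\btheta_1)$; this settles partition and internal order in one stroke. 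Your boundary caveats ($n_1\le 1$, or all $\gamma_k=0$, where $\phi$ or the order in $\cI^+$ drops out of the likelihood) are fair observations about the theorem as stated, but they do not repair the missing step; to finish along your route you would need an explicit argument that for every $\gamma\ge 0$ the law distinguishes a relevant from a background entity (via $A^*_{\tau,I}$ when $\gamma=0$), or you could simply import the paper's minimizer argument for this one step.
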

\begin{proof}
See Supplementary Material.
\end{proof}

To show that parameters in the PAMA model can be estimated consistently, we will first construct a consistent estimator for the indicator vector $\cI$ as $m\rightarrow \infty$ but with the number of ranked entities $n$ fixed, and show later that $\phi$ can also be consistently estimated once $\cI$ is given.
To this end, we define
$\bar\tau(i)=m^{-1}\sum_{k=1}^{m}\tau_k(i)$
to be the average rank of entity $E_i$ across all $m$ rankers, and assume that the ranker-specific quality parameters $\gamma_1,\cdots,\gamma_m$ are i.i.d. samples from a non-atomic probability measure $F(\gamma)$ defined on $[0,\infty)$ with a finite first moment (referred to as condition $\bC_\gamma$ hereinafter).
Then, by the strong law of large numbers we have
\begin{equation}\label{eq:MeanRank}
\bar\tau(i)=\frac{1}{m}\sum_{k=1}^{m}\tau_k(i)\rightarrow \bbE\big[\tau(i)\big]
\ a.s.\ \ \mbox{with} \ m\rightarrow\ \infty,
\end{equation}
since $\{\tau_k(i)\}_{k=1}^m$ are i.i.d. random variables with expectation
$$\bbE\big[\tau(i)\big]=\bbE\Big[\bbE\big[\tau(i)\mid\gamma\big]\Big]=\int\bbE\big[\tau(i)\mid\gamma\big]dF(\gamma),$$
where $\bbE\big[\tau(i)\mid\gamma\big]$ is the conditional mean of $\tau(i)$ given the model parameters $(\cI,\phi,\gamma)$, i.e.,
$$\bbE\big[\tau(i)\mid\gamma\big]=\sum_{t=1}^n t\cdot P\big(\tau(i)=t\mid\cI,\phi,\gamma\big).$$
Clearly, $\bbE\big[\tau(i)\big]$ is a function of $\phi$ given $\cI$ and $F(\gamma)$. We define $e_i(\phi)\triangleq \bbE\big[\tau(i)\big]$ to emphasize $\bbE\big[\tau(i)\big]$'s nature as a continuous function of $\phi$.
Without loss of generality, we suppose that $U_R=\{1,\cdots,n_1\}$ and $U_B=\{n_1+1,\cdots,n\}$, i.e., $\cI=(1,\cdots,n_1,0,\cdots,0)$, hereinafter.  
Then, the partition structure and the Mallows model embedded in the PAMA model lead to the following facts:
\begin{equation}\label{eq:MeanRankRelation}
e_1(\phi) < \cdots < e_{n_1}(\phi)\ \mbox{and}\ e_{n_1+1}(\phi)=\cdots=e_{n}(\phi)=e_0,\ \forall\ \phi\in\Omega_\phi.
\end{equation}
Note that $e_i(\phi)$ degenerates to a constant with respect to $\phi$ (i.e., $e_0$) for all $i>n_1$ because parameter $\phi$ influences only the relative rankings of relevant entities in the Mallows model. The value of $e_0$ is completely determined by $F(\gamma)$. For the BARD model, it is easy to see that
$e_1=\cdots=e_{n_1} \leq e_{n_1+1}=\cdots=e_n.$

\begin{figure}[h]
    \centering
    \includegraphics[width=0.9\textwidth]{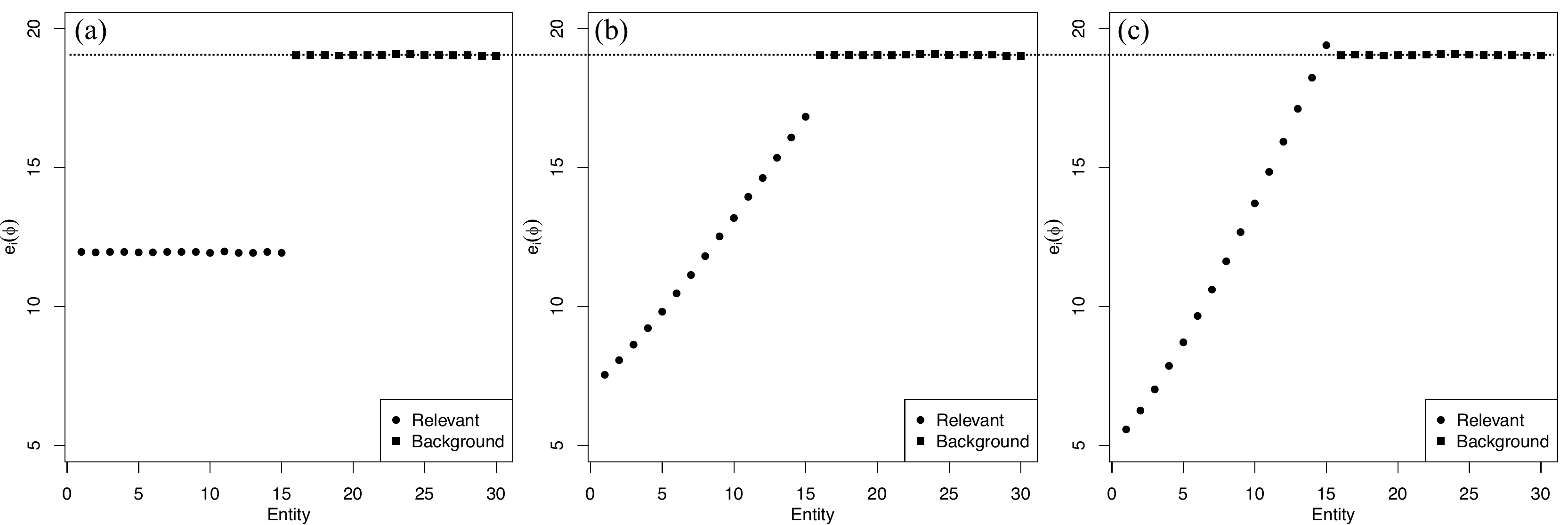}
    \caption{Average ranks of all the entities with fixed $\cI = (1,\cdots,n_1,0,\cdots,0)$, $n=30$, $n_1=15$, $m=100000$ and $F(\gamma)= U(0,2)$. Figures (a), (b) and (c) are the corresponding results for $\phi =0, 0.2\ \mbox{and } 0.4$ respectively.}
   \label{fig:AverageRankOfEntitiesInPAMA}
\end{figure}

Figure \ref{fig:AverageRankOfEntitiesInPAMA} shows some empirical estimates of the $e_i(\phi)$'s based on $m=100,000$ independent samples drawn from PAMA models with $n=30$, $n_1=15$, and $F(\gamma)= U(0,2)$, but three different $\phi$ values: (a) $\phi=0$, which corresponds to the BARD model; (b) $\phi=0.2$; and (c) $\phi=0.5$. One surprise is that in case (c), some relevant entities may have a larger $e_i(\phi)$ (worse ranking) than the average rank of background entities.
Lemma \ref{lem:AverageMean} guarantees that for almost all $\phi\in\Omega_\phi$, $e_0$ is different from $e_i(\phi)$ for $i=1,\cdots,n_1$. The proof of Lemma \ref{lem:AverageMean} can be found in Supplementary Material.

\begin{lemma}\label{lem:AverageMean}
For the PAMA model with condition $\bC_\gamma$, $\exists\ \tilde\Omega_\phi\subset\Omega_\phi$, s.t. $(\Omega_\phi-\tilde\Omega_\phi)$ contains only finite elements and
\begin{equation}\label{eq:e0neqei}
    e_i(\phi)\neq e_0\ \mbox{for}\ i=1,\cdots,n_1,\ \forall\ \phi\in\tilde\Omega_\phi.
\end{equation}
\end{lemma}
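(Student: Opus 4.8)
The plan is to reduce to a one-index claim and then exploit real-analyticity in $\phi$. First, it suffices to show that for each fixed $i\in\{1,\dots,n_1\}$ the function $g_i(\phi):=e_i(\phi)-e_0$ has only finitely many zeros on $\Omega_\phi=(0,\infty)$; then $\tilde\Omega_\phi:=\Omega_\phi-\bigcup_{i=1}^{n_1}\{\phi:g_i(\phi)=0\}$ is the complement of a finite union of finite sets and does the job. To analyze $g_i$, I would condition on $\gamma$ and decompose $\tau(i)=R_i+(\text{number of background entities placed ahead of }E_i)$, where $R_i$ is the rank of $E_i$ among the $n_1$ relevant entities under the Mallows law with parameter $\phi\gamma$ centered at $\cI^+$; using the conditional independence of $\tau_k^1$ and $\tau_k^{0\mid1}$ given $\cI$, this yields a closed form $\bbE[\tau(i)\mid\gamma]=\sum_{r=1}^{n_1}p_{ir}(\phi\gamma)\,w_{ir}(\gamma)$, with $p_{ir}(x)=P(R_i=r\mid\mathrm{Mallows}(x))$ and $w_{ir}(\gamma)=r+(n-n_1)\,P(\tau^{0\mid1}>n_1+1-r\mid\gamma)$ not depending on $\phi$. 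Each $p_{ir}(x)$ is a ratio of finite sums of exponentials $e^{-jx}$ whose denominator $Z_{n_1}(x)$ is non-vanishing on a complex neighborhood of $[0,\infty)$, so $\phi\mapsto\bbE[\tau(i)\mid\gamma]$ is real-analytic on $[0,\infty)$ and bounded by $n$ uniformly in $\gamma\ge0$; pushing analyticity through $e_i(\phi)=\int\bbE[\tau(i)\mid\gamma]\,dF(\gamma)$ (Morera plus Fubini on a complex neighborhood of $\Omega_\phi$, using this uniform bound) shows $g_i$ is real-analytic on $\Omega_\phi$ and extends continuously to $\phi=0$. Hence, once $g_i\not\equiv0$ is known, its zeros are isolated and can accumulate only at the endpoints $0$ and $+\infty$.

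For the endpoint $\phi\to0^+$: at $\phi=0$ the Mallows component is uniform, so we are in the BARD regime, where the (strict, because $F(\{0\})=0$ under $\bC_\gamma$, so every ranker pulls relevant entities strictly forward) inequality $e_j(0)<e_0$ holds for every relevant $j$; by continuity $g_i(\phi)\to e_i(0)-e_0<0$ as $\phi\to0^+$, which simultaneously gives $g_i\not\equiv0$ and shows no zeros accumulate at $0$. Together with analyticity, $g_i$ then has at most finitely many zeros on every compact subinterval of $\Omega_\phi$. For the endpoint $\phi\to+\infty$: for each fixed $\gamma>0$ the Mallows law concentrates on $\cI^+$, so $R_i\to i$ a.s.\ and, by dominated convergence, $e_i(\phi)\to e_i(\infty):=\int w_{ii}(\gamma)\,dF(\gamma)$. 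To rule out infinitely many oscillations I would expand $g_i(\phi)-(e_i(\infty)-e_0)=\int[\bbE[\tau(i)\mid\gamma]-w_{ii}(\gamma)]\,dF(\gamma)$, where for each $\gamma$ the integrand equals $\sum_{j\ge1}\beta_{ij}(\gamma)e^{-j\phi\gamma}$ with $\phi$-free coefficients; the leading coefficient is $\beta_{i1}(\gamma)=\mu_i+(n-n_1)\bigl[q(n_1+1-i\mid\gamma)-q(n_1+2-i\mid\gamma)\bigr]$, where $\mu_i\in\{-1,0,1\}$ comes from the first-order Mallows correction and the bracketed term, a second difference of the convex map $r\mapsto P(\tau^{0\mid1}>n_1+1-r\mid\gamma)$, is strictly positive for $\gamma>0$. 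Since the kernel $e^{-\phi\gamma}$ concentrates its mass on small $\gamma$, the Laplace-type integral $\int\beta_{i1}(\gamma)e^{-\phi\gamma}\,dF(\gamma)$ has a definite sign for all large $\phi$, hence so does $g_i$, so no zeros accumulate at $+\infty$. Combining the three facts, $g_i$ has finitely many zeros on $\Omega_\phi$, and the lemma follows.

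I expect the main obstacle to be the $\phi\to+\infty$ analysis: one must control the genuine asymptotics of $e_i(\phi)$ rather than just its limit, pin down the sign of $\int\beta_{i1}(\gamma)e^{-\phi\gamma}\,dF(\gamma)$ for large $\phi$, and handle the non-generic pairs $(F,i)$ for which this leading contribution vanishes by passing to the next order of the expansion — all of which rests on the structural facts about the Mallows marginals $p_{ir}$ and about the convexity in $r$ of the insertion weights $w_{ir}$. By comparison, the analyticity argument and the $\phi\to0^+$ computation are routine.
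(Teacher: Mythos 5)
Your overall route is essentially a rigorous elaboration of the paper's own argument: the paper likewise writes out the explicit marginal expression $e_i(\phi)=\int\sum_j j\,P(\tau(i)=j\mid\cI,\phi,\gamma)\,dF(\gamma)$ (decomposed, as you do, into the Mallows marginal of the relevant rank and the power-law insertion counts), observes that it is a continuous function of $\phi$ for $i\le n_1$ and a constant $e_0$ for $i>n_1$, and then simply asserts that it is ``not an infinite oscillation function,'' so that $e_i(\phi)=e_0$ has finitely many solutions. Your reduction to per-$i$ zero sets, the analyticity argument, and the $\phi\to 0^+$ computation (strictness from $F(\{0\})=0$, giving both $g_i\not\equiv 0$ and no accumulation of zeros at $0$) are correct and are exactly the content the paper leaves implicit.

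The genuine gap is the one you yourself flag: the behavior as $\phi\to+\infty$ is not closed, and it is precisely the ``no infinite oscillation'' claim that the whole lemma hinges on when $\Omega_\phi=(0,\infty)$. Your argument there is only needed when $e_i(\infty)=e_0$ (otherwise the limit alone rules out large-$\phi$ zeros), but in that case the assertion that $\int\beta_{i1}(\gamma)e^{-\phi\gamma}\,dF(\gamma)$ has a definite sign for large $\phi$ is unproven and not obviously true: for $i=n_1$ the coefficient $\beta_{i1}(\gamma)=-1+(n-n_1)\bigl[q(1\mid\gamma)-q(2\mid\gamma)\bigr]$ changes sign in $\gamma$ (negative near $\gamma=0$, positive for larger $\gamma$), and since the kernel $e^{-\phi\gamma}$ concentrates the mass near $\inf\mathrm{supp}(F)$, the sign for large $\phi$ depends on $F$ near that infimum; moreover, when $\inf\mathrm{supp}(F)=0$ the expansion in powers of $e^{-\phi\gamma}$ is not uniformly valid on the very region ($\phi\gamma=O(1)$) that dominates the integral, so the term-by-term Laplace argument as stated does not go through, and the ``pass to the next order in non-generic cases'' step is deferred rather than done. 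A secondary, fixable point: pushing analyticity through $\int(\cdot)\,dF(\gamma)$ needs a zero-free region for $Z_{n_1}$ that is scale-invariant (a sector around the positive real axis), not just a strip neighborhood of $[0,\infty)$, because the argument is $\phi\gamma$ with $\gamma$ unbounded; a fixed complex disk around $\phi_0>0$ only stays zero-free for all $\gamma$ once you know the complex zeros of $Z_{n_1}$ with nonnegative real part make an angle bounded away from $0$ with the positive real axis. So your proposal identifies the right crux and improves on the paper's level of detail, but as written it does not constitute a complete proof of the finiteness claim on all of $(0,\infty)$.
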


The facts demonstrated in \eqref{eq:MeanRankRelation} and \eqref{eq:e0neqei} suggest the following  three-step strategy to estimate $\cI$: (a) find the subset $S_0$ of $n_0=(n-n_1)$ entities from $U$ so that the within-subset variation of the $\bar\tau(i)$'s is the smallest, i.e., 
\begin{equation}
    S_0=\argmin_{S\in U, \ |S|=n_0}  \sum_{i\in S} (e_i-\bar{e}_S)^2 , \ \ \mbox{with} \ \bar{e}_S=n_0^{-1} \sum_{i\in S} e_i,
\end{equation} 
and let $\tilde{U}_B=S_0$ be an estimate of $U_B$; (b) rank the entities in $U\setminus S_0$ by $\bar\tau(i)$ increasingly and use the obtained ranking $\tilde\cI^+$ as an estimate of $\cI^+$; (c) combine the above two steps to obtain the estimate $\tilde\cI$ of $\cI$. This can be achieved  by defining 
$\tilde U_R=U \setminus \tilde U_B$ and $\tilde\cI^+=rank(\{\bar{\tau}(i) :i\in\tilde U_R\}),$ and obtain $\tilde\cI=(\tilde\cI_1,\cdots,\tilde\cI_n)$, with $\tilde\cI_i=\tilde\cI^+_i\cdot\bbI(i\in\tilde U_R).$

Note that $\tilde U_B$ is based on the mean ranks, $\{\bar\tau(i)\}_{i\in U}$, thus is clearly a moment estimator. Although this three-step estimation strategy is neither statistically efficient nor computationally feasible (step (a) is NP-hard), it nevertheless serves as a prototype for developing the consistency theory.
Theorem \ref{thm:consisI} guarantees that $\tilde\cI$ is a consistent estimator of $\cI$ under mild conditions.

\begin{Thm} \label{thm:consisI}
For the PAMA model with condition $\bC_\gamma$,
for almost all $\phi\in\Omega_\phi$, the moment estimator $\tilde\cI$ converges to $\cI$ with probability 1 with $m$ going to infinity.
\end{Thm}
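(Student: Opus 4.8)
The plan is to exploit the fact that $\tilde\cI$ is a deterministic function of the empirical mean‑rank vector $\bar\tau=(\bar\tau(1),\dots,\bar\tau(n))$: first identify the almost‑sure limit of $\bar\tau$ together with its order structure, then show that the three‑step map applied to \emph{any} vector close enough to that limit returns $\cI$ exactly, and finally glue the two facts. I would fix once and for all a $\phi\in\tilde\Omega_\phi$, the cofinite subset of $\Omega_\phi$ produced by Lemma~\ref{lem:AverageMean}; since $\Omega_\phi\setminus\tilde\Omega_\phi$ is finite it is Lebesgue‑null, which is what ``almost all $\phi$'' means here. Under condition $\bC_\gamma$ the pairs $(\gamma_k,\tau_k)$ are i.i.d.\ (same $F$, same conditional law $P(\cdot\mid\cI,\phi,\gamma_k)$), so the marginal lists $\tau_1,\tau_2,\dots$ are i.i.d.; each coordinate $\tau_k(i)\in\{1,\dots,n\}$ is bounded, hence integrable, so the strong law gives $\bar\tau(i)\to e_i(\phi)$ a.s.\ for each fixed $i$, and intersecting the $n$ such events yields an event $A$ with $\mathbb{P}(A)=1$ on which $\bar\tau\to(e_1(\phi),\dots,e_n(\phi))$.

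Taking WLOG $\cI=(1,\dots,n_1,0,\dots,0)$, equation~\eqref{eq:MeanRankRelation} gives $e_1(\phi)<\cdots<e_{n_1}(\phi)$ and $e_{n_1+1}(\phi)=\cdots=e_n(\phi)=e_0$, while Lemma~\ref{lem:AverageMean} (applicable since $\phi\in\tilde\Omega_\phi$) gives $e_i(\phi)\neq e_0$ for $i\le n_1$; thus the limit vector has exactly $n_1+1$ distinct values, and the only block of repeated coordinates is the background block $U_B$. The deterministic core I would establish is: there exists $\rho>0$ (depending on $\phi,n,F$) such that every $a=(a_1,\dots,a_n)$ with $\|a-(e_1(\phi),\dots,e_n(\phi))\|_\infty<\rho$ is mapped by steps (a)--(c) to $\cI$. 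For step (a), the within‑subset sum of squares $g(S;a)=\sum_{i\in S}(a_i-\bar a_S)^2=\min_{c}\sum_{i\in S}(a_i-c)^2$ is continuous in $a$ and, evaluated at the limit vector, equals $0$ for $S=U_B$ while it is bounded below by some $\varepsilon_0>0$ over the finitely many other size‑$n_0$ subsets (each such subset, using $n_0\ge2$ together with $e_i(\phi)\neq e_0$ and the strict ordering, contains two coordinates with distinct limiting means); choosing $\rho$ small enough makes $g(U_B;a)<\varepsilon_0/3<g(S;a)$ for every $S\neq U_B$, forcing $S_0=U_B$ and hence $\tilde U_R=U_R$. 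For step (b), shrinking $\rho$ below one third of the smallest gap between consecutive distinct limiting values forces $a_1<\cdots<a_{n_1}$, so ranking $\{a_i:i\in U_R\}$ reproduces $\cI^+=(1,\dots,n_1)$; step (c) then assembles $\tilde\cI=\cI$.

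To finish, on the probability‑$1$ event $A$ there is a random $M$ with $\|\bar\tau-(e_1(\phi),\dots,e_n(\phi))\|_\infty<\rho$ for all $m\ge M$, whence the deterministic core gives $\tilde\cI=\cI$ for all $m\ge M$; since $\tilde\cI$ ranges over the finite set $\Omega_\cI$, this is exactly the assertion that $\tilde\cI\to\cI$ with probability $1$ as $m\to\infty$. Because this argument runs for every $\phi\in\tilde\Omega_\phi$ and $\Omega_\phi\setminus\tilde\Omega_\phi$ is finite, the claim holds for almost all $\phi$.

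\textbf{Main obstacle.} The only genuinely delicate point is the step‑(a) part of the deterministic core — that the combinatorial $\argmin$ over size‑$n_0$ subsets is stable under small perturbations of the mean‑rank profile and lands on $U_B$. This is exactly where Lemma~\ref{lem:AverageMean} is indispensable: without $e_i(\phi)\neq e_0$, some non‑background subset would also have vanishing limiting within‑subset variance and could be selected. One must also be careful that the limiting minimizer is \emph{unique} (which needs $n_0\ge2$; when $n_0\le1$ step (a) cannot separate the two groups and a separate treatment is required). The remaining ingredients — the SLLN reduction, the order preservation in step (b), and the bookkeeping in step (c) — are routine.
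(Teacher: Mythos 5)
Your proposal is correct and takes essentially the same route as the paper: fix $\phi\in\tilde\Omega_\phi$ from Lemma~\ref{lem:AverageMean}, invoke the strong law for the mean ranks $\bar\tau(i)\to e_i(\phi)$, and note that the three-step moment procedure returns $\cI$ once $\bar\tau$ is uniformly close to its limit, which together with \eqref{eq:MeanRankRelation} and \eqref{eq:e0neqei} gives $\tilde\cI=\cI$ eventually, almost surely. The paper labels this last step ``straightforward''; your explicit stability argument for the subset-variance $\argmin$ and the ordering in step (b), the $n_0\ge 2$ caveat, and the genuinely almost-sure (rather than in-probability) phrasing simply supply the details the paper leaves implicit.
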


\begin{proof}
Combining fact \eqref{eq:e0neqei} in Lemma \ref{lem:AverageMean} with fact \eqref{eq:MeanRank},
we have for $\forall\ \phi\in\tilde\Omega_\phi$ that
$$e_1(\phi)<\cdots<e_{n_1}(\phi)\ \mbox{and}\ e_i(\phi)\neq e_0\ \mbox{for}\ i=1,\cdots,n_1.$$ 
Moreover, as fact \eqref{eq:MeanRank} tells us that for $\forall\ \epsilon,\delta>0$, $\exists\ M>0$ s.t. for $\forall\ m>M$,
$$P\big(|\bar\tau(i)-e_i(\phi)|<\delta\big)\geq 1-\epsilon,\ i=1,\cdots,n,$$
it is straightforward to see the conclusion of the theorem.
\end{proof}

Theorem \ref{thm:consisI} tells us that estimating $\cI$ is straightforward if the number of independent rankers $m$ goes to infinity: a simple moment method ignoring the quality difference of rankers can provide us a consistent estimate of $\cI$. In a practical problem where only a finite number of rankers are involved, however, more efficient statistical inference of the PAMA model based on Bayesian or frequentist principles becomes more attractive as effectively utilizing the quality information of different rankers is critical.

With $n_0$ and $n_1$ fixed, parameter $\gamma_k$, which governs the power-law distribution for the rank list $\tau_k$, cannot be estimated consistently.
Thus, its distribution $F(\gamma)$ cannot be determined nonparametrically even when the number of rank lists $m$ goes to infinity. We impose a parametric form $F_\psi(\gamma)$ with $\psi$ as the hyper-parameter and
refer to the resulting hierarchical-structured model
as PAMA-H, which has the following marginal likelihood of $(\phi,\psi)$ given $\cI$:
$$L(\phi,\psi\mid\cI)=\int P(\btau\mid\cI,\phi,\bgamma)dF_\psi(\bgamma)
=\prod_{k=1}^m\int P(\tau_k\mid\cI,\phi,\gamma_k)dF_\psi(\gamma_k)=\prod_{k=1}^mL_k(\phi,\psi\mid\cI).$$
We show in Theorem \ref{thm:consisPhiPsi} that the MLE based on the above marginal likelihood is consistent.

\begin{Thm}\label{thm:consisPhiPsi}
Under the PAMA-H model, assume that  $(\phi,\psi)$ belongs to the parameter space $\Omega_\phi\times\Omega_\psi$, and
the true parameter $(\phi_0,\psi_0)$ is an interior point of $\Omega_\phi\times\Omega_\psi$. Let $(\hat{\phi}_\cI,\hat\psi_\cI)$ be the maximizer of $L(\phi,\psi\mid\cI)$.
If  $F_\psi(\gamma)$ has a density function $f_\psi(\gamma)$ that is differentiable and concave with respect to $\psi$, 
then $\lim_{m\rightarrow\infty}(\hat{\phi}_\cI,\hat\psi_\cI)=(\phi_0,\psi_0)$ almost surely.
\end{Thm}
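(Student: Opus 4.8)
The plan is to recognize this as a classical consistency statement for a maximum-likelihood estimator from i.i.d.\ data and to run Wald's argument, paying special attention to (a) identifiability of the marginal (mixture) model and (b) the non-compactness of $\Omega_\phi\times\Omega_\psi$. Conditionally on the (true, known) indicator vector $\cI$ that is being conditioned on here, the quality parameters $\gamma_1,\dots,\gamma_m$ are i.i.d.\ $F_{\psi_0}$ and, given each $\gamma_k$, $\tau_k$ is drawn from the PAMA model; hence $\tau_1,\dots,\tau_m$ are i.i.d.\ from the mixture law $P_{\phi_0,\psi_0}(\cdot\mid\cI)$ on the \emph{finite} set $\Omega_n$, where $P_{\phi,\psi}(\tau\mid\cI)$ is $L_1(\phi,\psi\mid\cI)$ evaluated at $\tau$. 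Finiteness of $\Omega_n$ trivializes the usual envelope/integrability conditions: writing $\widehat p_m$ for the empirical frequencies of the $\tau_k$'s, $m^{-1}\log L(\phi,\psi\mid\cI)=\sum_{\tau\in\Omega_n}\widehat p_m(\tau)\log L_1(\phi,\psi;\tau\mid\cI)$ is a finite sum of functions each continuous and bounded on compact subsets of $\Omega_\phi\times\Omega_\psi$ (each $L_1(\phi,\psi;\tau\mid\cI)>0$ because the Mallows and power-law factors are strictly positive), so only the behaviour near the ``ends'' $\phi\to0,\infty$ and $\psi\to\partial\Omega_\psi$ needs work.

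First I would establish that $(\phi,\psi)\mapsto P_{\phi,\psi}(\cdot\mid\cI)$ is injective. Given $\cI$, the coordinate $\tau_k^{0\mid1}$ is a deterministic function of $\tau_k$ whose coordinates, conditionally on $\gamma_k$, are i.i.d.\ power-law$(\gamma_k,n_1)$; so the law of $\tau_k^{0\mid1}$ under $P_{\phi,\psi}$ is a mixture over $F_\psi$ of products of power-laws, free of $\phi$. Its one- and two-coordinate marginals expose the generalized moments $\int q(\cdot\mid\gamma,n_1)\,dF_\psi(\gamma)$ and $\int q(s\mid\gamma,n_1)q(t\mid\gamma,n_1)\,dF_\psi(\gamma)$ of the pushforward of $F_\psi$ under the injective map $\gamma\mapsto q(\cdot\mid\gamma,n_1)$, which, $F_\psi$ being parametric, determine $\psi$; with $\psi$ fixed, the marginal law of the Mallows sufficient statistic $d_\tau(\tau_k^1,\cI^+)$ is a strictly monotone family in $\phi$ and pins it down. (Alternatively one may integrate $\gamma$ out of Theorem~\ref{thm:identi}.) Hence the population criterion $\ell(\phi,\psi):=\bbE_{\phi_0,\psi_0}[\log L_1(\phi,\psi\mid\cI)]$ satisfies $\ell(\phi,\psi)-\ell(\phi_0,\psi_0)=-D_{\mathrm{KL}}(P_{\phi_0,\psi_0}\,\|\,P_{\phi,\psi})\le0$, with equality only at $(\phi_0,\psi_0)$; i.e.\ $\ell$ is continuous with a unique maximizer at the truth.

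Next, by the strong law of large numbers $\widehat p_m(\tau)\to P_{\phi_0,\psi_0}(\tau\mid\cI)$ a.s.\ for each of the finitely many $\tau$, so for any compact $K\subset\Omega_\phi\times\Omega_\psi$ one gets $\sup_K|m^{-1}\log L(\phi,\psi\mid\cI)-\ell(\phi,\psi)|\le\sum_{\tau\in\Omega_n}|\widehat p_m(\tau)-P_{\phi_0,\psi_0}(\tau\mid\cI)|\cdot\sup_K|\log L_1(\phi,\psi;\tau\mid\cI)|\to0$ a.s. To confine the maximizer to a fixed compact I would argue: as $\phi\to\infty$, dominated convergence gives $L_1(\phi,\psi;\tau\mid\cI)\to0$ for every $\tau$ with $\tau^1\neq\cI^+$ (a set of positive probability under the truth, since $\phi_0\gamma_k<\infty$ a.s.), so $\ell(\phi,\psi)\to-\infty$ uniformly over $\psi$ in compacts; as $\phi\to0^+$, $\ell(\phi,\psi)$ converges to the value of the degenerate ``$\phi=0$'' model, which by the uniqueness just established is strictly below $\ell(\phi_0,\psi_0)$; and in the $\psi$-direction the hypothesis that $f_\psi$ is concave and differentiable in $\psi$ makes $L_k(\phi,\psi\mid\cI)=\int P(\tau_k\mid\cI,\phi,\gamma)f_\psi(\gamma)\,d\gamma$ concave in $\psi$ (a nonnegative combination of functions concave in $\psi$), hence $m^{-1}\log L(\phi,\psi\mid\cI)$ concave in $\psi$ (composition of a concave positive function with $\log$), so the convergence is locally uniform in $\psi$ and the maximizer cannot escape toward $\partial\Omega_\psi$. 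Together these yield that, almost surely, $(\widehat\phi_\cI,\widehat\psi_\cI)$ lies in a fixed compact $K$ for all large $m$.

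Finally, on that event, $m^{-1}\log L(\widehat\phi_\cI,\widehat\psi_\cI\mid\cI)\ge m^{-1}\log L(\phi_0,\psi_0\mid\cI)\to\ell(\phi_0,\psi_0)$, while $m^{-1}\log L(\widehat\phi_\cI,\widehat\psi_\cI\mid\cI)\le\ell(\widehat\phi_\cI,\widehat\psi_\cI)+o(1)$ by uniform convergence on $K$; since $\ell$ is continuous with its unique maximum at $(\phi_0,\psi_0)$, every subsequential limit of $(\widehat\phi_\cI,\widehat\psi_\cI)$ equals $(\phi_0,\psi_0)$, giving $\lim_{m\to\infty}(\widehat\phi_\cI,\widehat\psi_\cI)=(\phi_0,\psi_0)$ almost surely. \emph{Main obstacle.} I expect two delicate points: first, making the marginal identifiability fully rigorous --- genuinely disentangling $\phi$ from the mixing distribution and verifying that the finitely many power-law ``moments'' visible in the law on $\Omega_n$ recover $\psi$; and second, the tail control in the non-compact directions, where the log-convexity of the Mallows normalizing constant handles $\phi\to0,\infty$ and the concavity hypothesis on $f_\psi$ must be leveraged to rule out escape toward $\partial\Omega_\psi$. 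The latter --- pinning down the precise role of concavity near the boundary of $\Omega_\psi$ --- is the part I would expect to be the hardest.
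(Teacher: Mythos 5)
Your overall strategy---view $\tau_1,\dots,\tau_m$ as i.i.d.\ draws from the marginal mixture law on the finite set $\Omega_n$ and run the full Wald program (identifiability of the marginal model, uniform LLN on compacts, ruling out escape to the boundary, argmax continuity)---is in the same spirit as, but more ambitious than, the paper's proof, which simply invokes the classical MLE consistency theorem of Wald and verifies four regularity conditions: common support, interior true parameter, differentiability, and uniqueness of the solution of the score equations. The essential difference is where the hypothesis that $f_\psi(\gamma)$ is concave in $\psi$ enters: the paper uses it, together with concavity in $\phi$ of the Mallows factor $R(\phi,\gamma)$ (which is preserved under mixing against the nonnegative weight $f_\psi(\gamma)/G(\gamma)$ and under the logarithm), to conclude that $l(\phi,\psi\mid\cI)$ is concave in $\phi$ and in $\psi$ and hence that the stationary point is unique; you instead use the concavity only for local uniform convergence and to control escape toward $\partial\Omega_\psi$, and you route uniqueness through a Kullback--Leibler/identifiability argument that the paper never makes.

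That identifiability step is where your proposal has a genuine gap, and you flag it yourself. The claim that the finitely many mixed power-law functionals visible in the law of $\tau^{0\mid1}$ ``determine $\psi$ because $F_\psi$ is parametric'' is not a valid deduction: the data law lives on the finite set $\Omega_n$, so only finitely many functionals of $F_\psi$ are observable, and a parametric family satisfying only the stated hypotheses (density differentiable and concave in $\psi$) need not be identified from them. Since your KL-uniqueness of the population maximizer---and therefore the entire argmax argument---rests on this injectivity, this is a missing idea rather than a routine verification. The boundary control is also incomplete: concavity of $\log L$ in $\psi$ for each fixed $\phi$ does not by itself prevent the \emph{joint} maximizer from drifting to $\partial\Omega_\psi$ (the criterion is not jointly concave in $(\phi,\psi)$), and your $\phi\to 0^{+}$ step (``strictly below by the uniqueness just established'') needs identifiability extended to the closure of the parameter set, which can fail if $F_{\psi_0}$ concentrates near $\gamma=0$---nothing in the hypotheses excludes this. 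So as written the proposal is a program whose two hardest steps are left open, whereas the paper's (much shorter) argument bypasses them by the concavity computation and the appeal to the classical consistency theorem.
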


\begin{proof}
See Supplementary Material.
\end{proof}

\section{Inference with the Partition-Mallows Model} \label{sec:statinfer}

\subsection{Maximum Likelihood Estimation}
\label{subsec:MLE}
Under the PAMA model, the MLE of $\btheta= (\cI, \phi,\bgamma)$ is $\hat{\btheta}= \arg\max_{\btheta} l(\btheta)$, where 
\begin{equation} \label{eqn:loglike}
    l(\btheta) = \log P(\tau_1,\tau_2,\cdots,\tau_m\mid\btheta)
\end{equation}
is the logarithm of the likelihood function  (\ref{eqn:like}).
Here, we adopt the \textit{Gauss-Seidel} iterative method in \cite{YANG2018281}, also known as \textit{backfitting} or \textit{cyclic coordinate ascent}, to implement the optimization. Starting from an initial point $\btheta^{(0)}$, the Gauss-Seidel method iteratively updates one coordinate of $\btheta$ at each step with the other coordinates held fixed at their current values.
A Newton-like method is adopted to update $\phi$ and $\gamma_k$. Since $\cI$ is a discrete vector, we find favorable values of $\cI$ by swapping two neighboring entities to check whether $g(\cI\mid\bgamma^{(s+1)}, \phi^{(s+1)})$ increases. More details of the algorithm are provided in Supplementary Material.

With the MLE $\hat\btheta=(\hat\cI,\hat\phi,\hat\bgamma)$, we define $U_R(\hat\cI)=\{i\in U:\ \hat\cI_i>0\}$ and $U_B(\hat\cI)=\{i\in U:\ \hat\cI_i=0\}$, and generate the final aggregated ranking list $\hat\tau$ based on the rules below: (a) set the top-$n_1$ list of $\hat\tau$ as $\hat\tau_{n_1}=sort(i\in U_R(\hat\cI)\ by\ \hat\cI_i\uparrow)$, (b) all entities in $U_B(\hat\cI)$ tie for positions behind. Hereinafter, we refer to this MLE-based rank aggregation procedure under PAMA model as PAMA$_F$.

For the PAMA-H model, a similar procedure can be applied to find the MLE of $\btheta=(\cI,\phi,\psi)$, with the $\bgamma=(\gamma_1,\cdots,\gamma_m)$ being treated as the missing data.
With the MLE $\hat\btheta=(\hat\cI,\hat\phi,\hat\psi)$, we can generate the final aggregated ranking list $\hat\tau$ based on $\hat\cI$ in the same way as in PAMA, and evaluate the quality of ranker $\tau_k$ via the mean or mode of the conditional distribution below:
$$f(\gamma_k\mid\tau_k;\hat\cI,\hat\phi,\hat\psi)\propto f(\gamma_k\mid\hat\psi)\cdot P(\tau_k\mid\hat\cI,\hat\phi,\gamma_k).$$
In this paper, we refer to the above MLE-based rank aggregation procedure under PAMA-H model as PAMA$_{HF}$.
The procedure is detailed in Supplementary Material.

\subsection{Bayesian Inference} \label{subsec:BC}
Since the three model parameters $\cI$, $\phi$ and $\bgamma$ encode ``orthogonal" information of the PAMA model, it is natural to expect that $\cI$, $\phi$ and $\bgamma$ are mutually independent {\it a priori}. We thus specify their joint prior distribution as
$$\pi(\cI,\phi,\bgamma)=\pi(\cI)\cdot\pi(\phi)\cdot\prod_{k=1}^m\pi(\gamma_k).$$
Without much loss, we may restrict the range of $\phi$ and $\gamma_k$'s to a closed interval $[0,b]$ with a large enough $b$. In contrast, $\cI$ is discrete and takes value in the space $\Omega_\cI$ of all permutations of $\{1,\ldots,n_1, \underbrace{0,\ldots,0}_{n_0}\}$.
It is convenient to specify $\pi(\cI)$, $\pi(\phi)$ and $\pi(\gamma_k)$ as uniform, i.e.,
$$\pi(\cI)\sim U(\Omega_\cI),\ \pi(\phi)\sim U[0,b],\ \pi(\gamma_k)\sim U[0,b].$$
Based on our experiences in a large range of simulation studies and real data applications, we find that it works reasonably well to set $b=10$. In Section~\ref{sec:consistency} we also discussed letting $\pi(\gamma_k)$ be of a parametric form, which will be further discussed later.

The posterior distribution can be expressed as
\begin{eqnarray}
&&f(\cI,\phi,\bgamma|\tau_1,\tau_2,\cdots,\tau_m)\nonumber\\
&\propto& \pi(\cI,\phi,\bgamma)\cdot P(\tau_1,\tau_2,\cdots,\tau_m|\cI,\phi,\bgamma)\nonumber\\
&=&\mI\big(\phi\in[0,10])\big)\times\prod_{k=1}^m \Big\{ \frac{\mI\big(\tau_k^0 \in \mathcal{A}_{U_R}(\tau_k^{0\mid1})\big)\times\mI\big(\gamma_k\in[0,10]\big)}{A^*_{\tau_k,I}\times(B^*_{\tau_k,I})^{\gamma_k}\times(C^*_{\gamma_k,n_1})^{n-n_1}\times (D^*_{\tau_k,\cI})^{\phi\cdot\gamma_k}\times E^*_{\phi,\gamma_k}} \Big\},\label{eqn:posterior}
\end{eqnarray}
with the following conditional distributions:
\begin{eqnarray} 
\label{fc:I}
f(\cI\mid\phi,\bgamma) &\propto& \prod_{k=1}^m\frac{\mI\big(\tau_k^0 \in \mathcal{A}_{U_R}(\tau_k^{0\mid1})\big)}{A^*_{\tau_k,I}\times(B^*_{\tau_k,I})^{\gamma_k}\times(D^*_{\tau_k,\cI})^{\phi\cdot\gamma_k}},\\
\label{fc:phi}
f(\phi\mid\cI,\bgamma) &\propto& {\mathbb I}\big(\phi \in [0,10]\big)\times \prod_{k=1}^m \frac{1}{(D^*_{\tau_k,\cI})^{\phi\cdot\gamma_k}\times E^*_{\phi,\gamma_k}},\\
\label{fc:gamma}
f(\gamma_k\mid\cI,\phi,\bgamma_{[-k]})&\propto&\frac{\mI\big(\gamma_k \in [0,10]\big)}{(B^*_{\tau_k,I})^{\gamma_k}\times(C^*_{\gamma_k,n_1})^{n-n_1}\times (D^*_{\tau_k,\cI})^{\phi\cdot\gamma_k}\times E^*_{\phi,\gamma_k}},
\end{eqnarray}
based on which posterior samples of $(\cI,\phi,\bgamma)$ can be obtained by Gibbs sampling, where $\bgamma_{[-k]}=(\gamma_1,\cdots,\gamma_{k-1},\gamma_{k+1},\cdots,\gamma_m)$.

Considering that conditional distributions in (\ref{fc:I})-(\ref{fc:gamma}) are nonstandard, we adopt the Metropolis-Hastings algorithm \citep{Hastings1970} to enable the conditional sampling. To be specific, we choose the proposal distributions for $\phi$ and $\gamma_k$ as 
\begin{eqnarray*}
q(\phi\mid\phi^{(t)};\cI,\bgamma)&\sim&\mathcal{N}(\phi^{(t)},\sigma_{\phi}^2)\\
q(\gamma_k\mid\gamma_k^{(t)};\cI,\phi,\bgamma_{[-k]})&\sim& \mathcal{N}(\gamma_k^{(t)},\sigma_{\gamma_k}^2),
\end{eqnarray*}
where $\sigma_{\phi}^2$ and $\sigma_{\gamma_k}^2$ can be tuned to optimize the mixing rate of the sampler.
Since $\cI$ is a discrete vector, we propose new values of $\cI$ by swapping two randomly selected adjacent entities.
Note that the entity whose ranking is $n_1$ could be swapped with any background entity. 
Due to the homogeneity of background entities, there is no need to swap two background entities. 
Therefore, the number of potential proposals in each step is $\mathcal{O}(n n_1)$. More details about MCMC sampling techniques can be found in \cite{liu2008monte}.

Suppose that $M$ posterior samples $\{(\cI^{(t)},\phi^{(t)},\bgamma^{(t)})\}_{t=1}^M$ are obtained. We calculate the posterior means of different parameters as below:
\begin{eqnarray*}
\bar{\cI}_i&=&\frac{1}{M} \sum_{t=1}^M\Big[\cI_i^{(t)}\cdot I^{(t)}_i+\frac{n_1+1+n}{2}\cdot(1-I^{(t)}_i)\Big],\ i=1,\cdots,n,\\
\bar{\phi} &=& \frac{1}{M} \sum_{t=1}^M \phi^{(t)},\\
\bar{\gamma}_k &=& \frac{1}{M} \sum_{t=1}^M \gamma_k^{(t)}, k=1,\cdots,m.
\end{eqnarray*}
We quantify the quality of ranker $\tau_k$ with $\bar\gamma_k$, and generate the final aggregated ranking list $\hat\tau$ based on the $\bar\cI_i$s as following:
$$\hat\tau=sort(i\in U\ by\ \bar\cI_i \uparrow).$$
Hereinafter, we refer to this MCMC-based Bayesian rank aggregation procedure under the Partition-Mallows model as PAMA$_B$.

The Bayesian inference procedure PAMA$_{HB}$ for the PAMA-H model differs from PAMA$_B$ only by replacing the prior distribution $\prod_{k=1}^m \pi(\gamma_k)$, which is uniform in $[0,b]^m$,
with a hierarchically structured prior $\pi(\psi) \prod_{k=1}^m f_\psi (\gamma_k)$. The conditional distributions needed for Gibbs sampling are almost the same as \eqref{fc:I}-\eqref{fc:gamma}, except an additional one
\begin{eqnarray} 
\label{PAMA-HB:psi}
f(\psi\mid\cI,\phi,\bgamma)&\propto&\pi(\psi)\cdot\prod_{k=1}^mf_\psi(\gamma_k).
\end{eqnarray}
We may specify $f_\psi(\gamma)$ to be an exponential distribution and let $\pi(\psi)$ be a proper conjugate prior to make \eqref{PAMA-HB:psi} easy to sample from.
More details for PAMA$_{HB}$ with $f_\psi(\gamma)$ specified as an exponential distribution is provided in Supplementary Material.

Our simulation studies suggest that the practical performance of PAMA$_B$ and PAMA$_{HB}$ are very similar when $n_0$ and $n_1$ are reasonably large (see Supplementary Material for details).
In contrast, as we will show  in Section~\ref{sec:simulation}, the MLE-based estimates (e.g., PAMA$_F$) typically produce  less accurate results with a shorter computational time compared to  PAMA$_B$.

\subsection{Extension to Partial Ranking Lists}
The proposed Partition-Mallows model can be extended to more general scenarios where partial ranking lists, instead of full ranking lists, are involved in the aggregation. Given the entity set $U$ and a ranking list $\tau_S$ of entities in $S\subseteq U$, we say $\tau_S$ is a \emph{full ranking list} if $S=U$, and a \emph{partial ranking list} if $S\subset U$. 
Suppose $\tau_S$ is a partial ranking list and $\tau_U$ is a full ranking list of $U$. If the projection of $\tau_U$ on $S$ equals to $\tau_S$, we say $\tau_U$ is compatible with $\tau_S$, denotes as $\tau_U\sim\tau_S$. 
Let $\cA(\tau_S)=\{\tau_U:\tau_U\sim\tau_S\}$ be the set of all full lists that are compatible with $\tau_S$. Suppose a partial list $\tau_k$ is involved in the ranking aggregation problem. The probability of $\tau_k$ can be evaluated by:
\begin{eqnarray} \label{eqn:pllike}
P(\tau_k\mid\cI,\phi,\gamma_k)=\sum_{\tau_k^*\sim\tau_k} P(\tau_k^*\mid\cI,\phi,\gamma_k),
\end{eqnarray}
where $P(\tau_k^*\mid\cI,\phi,\gamma_k)$ is the probability of a compatible full list under the PAMA model. 
Clearly, the probability in (\ref{eqn:pllike}) does not have a closed-form representation due to complicated constraints between $\tau_k$ and $\tau_k^*$, and it is very challenging to do statistical inference directly based on this quantity. 
Fortunately, as rank aggregation with partial lists can be treated as a missing data problem, we can resolve the problem via standard methods for missing data inference.

%Bayesian inference for partial lists
The Bayesian inference can be accomplished by the classic data augmentation strategy~\citep{tanner1987} in a similar way as described in \cite{deng2014bayesian}, which iterates between imputing the missing data conditional on the observed data given the current parameter values, and updating parameter values by sampling from the posterior distribution based on the imputed full data. 
To be specific, we iteratively draw from the following two conditional distributions:
\begin{eqnarray*}
P(\tau_1^\ast,\cdots,\tau_m^\ast\mid\tau_1,\cdots,\tau_m;\cI,\phi,\bgamma)=\prod_{k=1}^m P(\tau_k^\ast\mid\tau_k;\cI,\phi,\gamma_k), \\
f(\cI,\phi,\bgamma\mid\tau_{1}^\ast,\cdots,\tau_m^\ast)\propto \pi(\cI)\times\pi(\bgamma)\times\pi(\phi)\times\prod_{k=1}^m P(\tau_k^\ast\mid\cI,\gamma_{k},\phi).
\end{eqnarray*}

%MLE for partial lists via MCEM
To find the MLE of $\btheta$ for this more challenging scenario, we can use the Monte Carlo EM algorithm (MCEM, \cite{tanner1990}). Let $\tau_k^{(1)},\cdots, \tau_k^{(M)}$ be $M$ independent samples drawn from distribution $P(\tau_k^\ast\mid\tau_k,\cI,\phi,\gamma_k)$. The E-step involves the calculation of the $Q$-function below:
\begin{eqnarray*}
Q(\cI,\bgamma,\phi \mid \cI^{(s)},\bgamma^{(s)},\phi^{(s)}) &=& E \left\{\sum_{k=1}^m \log P(\tau_k^{*} \mid \cI,\bgamma,\phi) \mid \tau_k, \cI^{(s)},\bgamma_k^{(s)},\phi^{(s)} \right\}\nonumber\\
&\approx& \dfrac{1}{M}\sum_{k=1}^m \sum_{t=1}^M \log P(\tau_k^{(t)}\mid \cI,\bgamma_k,\phi).
\end{eqnarray*}
In the M-step, we use the \emph{Gauss-Seidel} method to maximize the above $Q$-function in a similar way as detailed in Supplementary Material.

No matter which method is used, a key step is to draw samples from
\[P(\tau_k^\ast\mid\tau_k;\cI,\phi,\gamma_k)\propto P(\tau_k^\ast\mid\cI,\gamma_k,\phi)\cdot \mI\big(\tau_k^\ast\in\cA(\tau_k)\big).\]
To achieve this goal, we start with $\tau_k^*$ obtained from the previous step of the data augmentation or MCEM algorithms, and conduct several iterations of the following Metropolis step with $P(\tau_k^\ast\mid\tau_k;\cI,\phi,\gamma_k)$ as its target distribution: (a) construct the proposal $\tau_k'$ by randomly selecting two elements in the current full list $\tau_k^*$ and swapping them; (b) accept or reject the proposal according to the Metropolis rule, that is to accept $\tau_k'$ with probability of $\min(1,\frac{P(\tau_k'\mid\cI,\gamma_k,\phi)}{P(\tau_k^{*}\mid\cI,\gamma_k,\phi)})$. Note that the proposed list $\tau_k'$ is automatically rejected if it is incompatible with the observed partial list $\tau_k$.

\subsection{Incorporating Covariates in the Analysis}
In some applications, covariate information for each ranked entity is available to assist rank aggregation.
One of the earliest attempts for incorporating such information in analysing rank data is perhaps the \emph{hidden score model} due to \cite{Thurstone1927}, which has become a standard approach and has many extensions. Briefly, these models assume that there is an unobserved score for each entity that is related to the entity-specific covariates $X_i=(X_{i1},\cdots,X_{ip})^T$ under a regression framework and the observed rankings are determined by these scores plus noises, i.e., 
$$\tau_k=sort(S_{ik}\downarrow,\ E_i\in U),\ \mbox{where}\ S_{ik}=X_i^T\bbeta+\varepsilon_{ik}.$$
Here, $\bbeta$ is the common regression coefficient and $\varepsilon_{ik}\sim N(0,\sigma^2_k)$ is the noise term. Recent progresses along this line are reviewed by \cite{Yu2000Bayesian,Bhowmik2017,li2020bayesian}.

Here, we propose to incorporate covariates into the analysis in a different way. Assuming that covariate $X_i$ provides information on the group assignment instead of the detailed ranking of entity $E_i$, we connect $X_i$ and $\cI_i$, the enhanced indicator of $X_i$, by a logistic regression model:
\begin{equation}\label{eq:BARDM_Logistics}
P(\cI_i\mid X_i)=P(I_i\mid X_i,\boldsymbol{\psi})=\dfrac{\exp\{X_i^T\boldsymbol{\psi}\cdot I_i\}}{1+\exp\{X_i^T\boldsymbol{\psi}\}},~~i=1,\cdots,n,
\end{equation}
where $\boldsymbol{\psi}=(\psi_1,\ldots,\psi_p)^T$ as the regression parameters. Let $\bX=(X_1,\cdots,X_n)$ be the covariate matrix, we can extend the Partition-Mallows model as
\begin{equation}\label{eq:BARDM_Covariate}
P(\tau_1,\cdots,\tau_m, \cI \mid \bX)=P(\cI \mid \bX,\boldsymbol{\psi})\times P(\tau_1,\cdots,\tau_m\mid\cI,\phi,\bgamma),
\end{equation}
where the first term
$$P(\cI \mid \bX,\boldsymbol{\psi})=\prod_{i=1}^n P(I_i \mid X_i,\boldsymbol{\psi})$$
comes from the logistic regression model (\ref{eq:BARDM_Logistics}), and the second term comes from the original Partition-Mallows model. In the extended model, our goal is to infer $(\cI,\phi,\bgamma,\boldsymbol{\psi})$ based on $(\tau_1,\cdots,\tau_m;\bX)$. We can achieve both Bayesian inference and MLE for the extended model in a similar way as described for the Partition-Mallows model. More details are provided in the Supplementary Material.

An alternative way to incorporate covariates is to replace the logistic regression model by a naive Bayes model, which models the conditional distribution of $\bX\mid\cI$ instead of $\cI\mid\bX$, as follows:
\begin{equation}
f(\tau_1,\cdots,\tau_m,\bX\mid\cI)=P(\tau_1,\cdots,\tau_m,\mid\cI,\phi,\bgamma)\times f(\bX\mid\cI),
\end{equation}
where
\begin{eqnarray*}
f(\bX\mid\cI)&=&\prod_{i=1}^nf(X_i\mid \cI_i)=\prod_{i=1}^nf(X_i\mid I_i)=\prod_{i=1}^n\prod_{j=1}^pf(X_{ij}\mid I_i)\\
&=&\prod_{i=1}^n\prod_{j=1}^p\Big\{\big[f_{j}(X_{ij}\mid\psi_{j0})\big]^{1-I_i}\cdot\big[f_{j}(X_{ij}\mid\psi_{j1})\big]^{I_i}\Big\},
\end{eqnarray*}
$f_j$ is pre-specified parametric distribution for covariates $X_j$ with parameter $\psi_{j0}$ for entities in the background group and $\psi_{j1}$ for entities in the relevant group. Since the performances of the two approaches are very similar, in the rest of this paper we use the logistic regression strategy to handle covariates due to its convenient form.

\section{Simulation Study} \label{sec:simulation}
\subsection{Simulation Settings}\label{sec:SimuSetting}
We simulated data from two models: (a) the proposed Partition-Mallows model, referred to as $\mathcal{S}_{PM}$, and (b) the Thurstone hidden score model, referred to as $\mathcal{S}_{HS}$. 
In the $\mathcal{S}_{PM}$ scenario, we specified the true indicator vector as $\cI=(1,\cdots,n_1,0,\cdots,0)$, indicating that the first $n_1$ entities $E_1,\cdots, E_{n_1}$ belong to $U_R$ and the rest belong to the background group $U_B$, and set 
$$\gamma_k=\left\{
   \begin{array}{ll}
    0.1, & \mbox{if } k\leq\frac{m}{2}; \\ 
    a+(k-\frac{m}{2})\times\delta_R, & \mbox{if } k>\frac{m}{2}. \\ 
  \end{array} 
  \right. 
  $$
Clearly, $a>0$ and $\delta_R>0$ control the quality difference and signal strength of the $m$ base rankers in the $\cS_{PM}$ scenario. 
We set $\phi=0.6$ (defined in  \eqref{Eq:phik2phi}), $\delta_R=\frac{2}{m}$, and $a$ with two options: 2.5 and 1.5. 
For easy reference, we denoted the strong signal case with $a=2.5$ and the weak signal case with $a=1.5$ by $\cS_{PM_1}$ and $\cS_{PM_2}$, respectively.

In the $\cS_{HS}$ scenario, we used the Thurstone model to generate the rank lists as $\tau_k = sort(i\in U\ by\ S_{ik} \downarrow),\ \mbox{where}\ S_{ik}\sim N(\mu_{ik},1)$ and
$$\mu_{ik}=\left\{
   \begin{array}{ll}
   0, & \mbox{if } k\leq\frac{m}{2}\ \mbox{or}\ i>n_1; \\ 
    a^*+\frac{b^*-a^*}{m}\times k + (n_1-i)\times\delta_E^*, & \mbox{otherwise}. \\ 
  \end{array}
  \right. $$
In this model, $a^*, b^*$ and $\delta_E^*$ (all positive numbers) control the quality difference and signal strength of the $m$ base rankers.
We also specified two sub-cases:  $\cS_{HS_1}$, the stronger-signal case with $(a^*,b^*,\delta_E^*)=(0.5,2.5,0.2)$; and $\cS_{HS_2}$, the weaker-signal case with $(a^*,b^*,\delta_E^*)=(-0.5,1.5,0.2)$.
Table \ref{Tab:HS1mu} shows the configuration matrix of $\mu_{ik}$ under $\cS_{HS_1}$ when $m=10, n=100$ and $n_1=10$. 
In both scenarios, the first half of rankers are completely non-informative, with the other half providing increasingly strong signals.
\begin{table}[h] \small
    \centering
    \begin{tabular}{c|c|c|c|c|c|c|c|c|c|c}
    \hline 
    & $\mu_1$&$\mu_2$&$\mu_3$&$\mu_4$&$\mu_5$&$\mu_6$&$\mu_7$&$\mu_8$&$\mu_9$&$\mu_{10}$\\  \hline
    $E_1$&0 &0&0&0&0& 3.7&3.9&4.1&4.3&4.5   \\ 
    $E_2$&0 &0&0&0&0  &3.5 &3.7&3.9&4.1&4.3  \\ 
    $E_3$&0 &0&0&0&0  & 3.3&3.5&3.7&3.9&4.1  \\ 
    $E_4$&0 &0&0&0&0  &3.1 &3.3&3.5&3.7&3.9  \\ 
    $E_5$&0 &0&0&0&0  & 2.9&3.1&3.3&3.5&3.7  \\
    $E_6$&0 &0&0&0&0  &2.7 &2.9&3.1&3.3&3.5  \\ 
    $E_7$&0 &0&0&0&0  &2.5 &2.7&2.9&3.1&3.3  \\ 
    $E_8$&0 &0&0&0&0  &2.3 &2.5&2.7&2.9&3.1  \\ 
    $E_9$&0 &0&0&0&0  & 2.1&2.3&2.5&2.7&2.9  \\ 
    $E_{10}$&0 &0&0&0&0  &1.9 &2.1&2.3&2.5&2.7  \\ 
    $E_{11}$&0 &0&0&0&0  &0 &0&0&0&0  \\ 
    $\vdots$&$\vdots$ &$\vdots$&$\vdots$&$\vdots$&$\vdots$  &$\vdots$ &$\vdots$&$\vdots$&$\vdots$&$\vdots$  \\ 
    $E_{100}$&0 &0&0&0&0  &0 &0&0&0&0  \\ \hline 
    \end{tabular}
    \caption{The configuration matrix of the $\mu_{ik}$'s under $\mathcal{S}_{HS_1}$ with $m$=10, $n$=100 and $n_1$=10.}
    \label{Tab:HS1mu}
\end{table}

For each of the four simulation scenarios (i.e., $\cS_{PM_1}$, $\cS_{PM_2}$, $\cS_{HS_1}$ and $\cS_{HS_2}$),
we fixed the true number of relevant entities $n_1=10$, but allowed the number of rankers $m$ and the total number of entities $n$ to vary, resulting in a total of 16 simulation settings (  $\{scenarios: \cS_{PM_1},\cS_{PM_2},\cS_{HS_1},  \cS_{HS_2}\}\times\{m:  10, 20\}\times\{n: 100, 300\}\times\{n_1:  10\}$). Under each setting, we simulated 500 independent data sets to evaluate and compare performances of different  rank aggregation methods.

\subsection{Methods in Comparison and Performance Measures}
Except for the proposed PAMA$_B$ and PAMA$_F$, we considered state-of-the-art methods in several classes, including the Markov chain-based methods MC$_1$, MC$_2$, MC$_3$ in \cite{Lin2010Space} and CEMC in \cite{2010Integration}, the partition-based method BARD in \cite{deng2014bayesian}, and the Mallows model-based methods MM and EMM in \cite{fan2019}.
Classic naive methods based on summary statistics were ignored because they have been shown in previous studies to perform suboptimally, especially in cases where base rankers are heterogeneous in quality. 
The Markov-chain-based methods, MM, and EMM were implemented in \textit{TopKLists}, \textit{PerMallows} and \textit{ExtMallows} packages in R (https://www.r-project.org/), respectively. The code of BARD was provided by its authors.

%\subsection{Evaluation of Performance}
Let $\tau$ be the underlying true ranking list of all entities, $\tau_R=\{\tau(i):\ E_i\in U_R\}$ be the true relative ranking of relevant entities, $\hat\tau$ be the aggregated ranking obtained from a rank aggregation approach, $\hat\tau_R=\{\hat\tau(i):\ E_i\in U_R\}$ be the relative ranking of relevant entities after aggregation, and $\hat\tau_{n_1}$ be the top-$n_1$ list of $\hat\tau$.
After obtaining the aggregated ranking $\hat\tau$ from a rank aggregation approach, we evaluated its performance by two measurements, namely the \emph{recovery distance} $\kappa_{R}$ and the \textit{coverage} $\rho_R$, defined as below:
\begin{eqnarray*}
\kappa_{R}&\triangleq& d_{\tau}(\hat{\tau}_R,\tau_R) + n_{\hat{\tau}} \times \frac{n+n_1+1}{2},\\
\rho_R&\triangleq&\frac{n_1 -n_{\hat{\tau}} }{n_1},
\end{eqnarray*}
where $d_{\tau}(\hat{\tau}_R,\tau_R)$ denotes the Kendall tau distance between $\hat\tau_R$ and $\tau_R$, and $n_{\hat{\tau}}$ denotes the number of relevant entities who are classified as background entities in $\hat{\tau}$.
The recovery distance $\kappa_R$ considers  detailed rankings of all relevant entities plus mis-classification distances, while the coverage $\rho_R$ cares only about the identification of relevant entities without considering the detailed rankings. In the setting of PAMA, $\frac{n+n_1+1}{2}$ is the average rank of a background entity. The recovery distance increases if some relevant entities are mis-classified as background entities. Clearly, we expect a smaller $\kappa_R$ and a larger $\rho_R$ for a stronger aggregation approach.

\subsection{Simulation Results}
Table~\ref{Tab:recovery} summarizes the performances of the nine competing methods in the 16 different simulation settings, demonstrating the proposed PAMA$_B$ and PAMA$_F$ outperform all the other methods by a significant margin in most settings and PAMA$_B$ uniformly dominates PAMA$_F$. Figure~\ref{Fig:gamma} shows the quality parameter $\bgamma$ learned from the Partition-Mallows model in various simulation scenarios with $m=10$ and $n=100$, confirming that the proposed methods can effectively capture the quality difference among the rankers. The results of $\bgamma$ for other combinations of $(m,n)$ can be found in Supplementary Material which demonstrates consistent performance with Figure \ref{Fig:gamma}.

\begin{table}[htp] \scriptsize
    \centering
    \begin{tabular}{c|cc|ccc|cc|cccc} 
    \hline %\hline
    \multicolumn{3}{c|}{Configuration}&\multicolumn{3}{c|}{Partition-type Models}&\multicolumn{2}{c|}{{Mallows Models}} &\multicolumn{4}{c}{MC-based Models} \\ \cline{1-3} \cline{4-6} \cline{7-8} \cline{9-12}
    $\mathcal{S}$& $n$ &$m$ &PAMA$_F$&PAMA$_B$ &BARD & EMM & MM& MC$_1$ & MC$_2$ & MC$_3$ & CEMC\\  \cline{1-3} \cline{4-6} \cline{7-8} \cline{9-12}
\multirow{8}{*}{$\mathcal{S}_{PM_1}$}&\multirow{2}{*}{100}&\multirow{2}{*}{10}& 24.5 & {\bf 15.2} & 57.1 & 51.7 & 103.2 & 338.4 & 163.1 & 198.6 & 197.8 \\ 
 &&&[0.95] & {\bf[0.97]} & [0.91] & [0.89] & [0.81] & [0.36] & [0.69] & [0.63] & [0.62] \\ \cline{4-12}
 
  &\multirow{2}{*}{100}&\multirow{2}{*}{20}&2.6 & {\bf 0.3} & 42.1 & 22.8 & 44.2  & 466.6 & 88.9 & 121.2 & 114.7 \\ 
 &&&[0.99] & {\bf[1.00]} & [0.95] & [0.95] & [0.93] & [0.11] & [0.82] & [0.78] & [0.77] \\\cline{4-12}
 
 &\multirow{2}{*}{300}&\multirow{2}{*}{10}& 17.4 & {\bf 4.0} & 180.0 & 683.3 & 519.2 & 1268.3 & 997.7 & 1075.8 & 1085.7  \\
&&&[0.99] & {\bf[1.00]} & [0.89] & [0.66] & [0.55]  & [0.17] & [0.34] & [0.29] &[0.28] \\\cline{4-12}

&\multirow{2}{*}{300}&\multirow{2}{*}{20} & 7.1& {\bf 3.2} & 122.3 & 124.4 & 157.1  & 1445.9 & 613.5 & 723.0 & 727.2\\
&&&{\bf [1.00]} & {\bf[1.00]} & [0.93]& [0.92] & [0.90]  & [0.05] & [0.60] & [0.53] & [0.52] \\ 
 \cline{1-12}
 
 \multirow{8}{*}{$\mathcal{S}_{PM_2}$}&\multirow{2}{*}{100}&\multirow{2}{*}{10}&90.0 & {\bf 66.6} & 115.2 & 108.3 & 152.9 & 404.3 & 285.5 & 307.2 & 313.8\\ 
  &&&[0.82] & {\bf [0.86]} & [0.77] & [0.77]& [0.70]  & [0.24] & [0.47] & [0.43] & [0.41]\\\cline{4-12}
  
   &\multirow{2}{*}{100}&\multirow{2}{*}{20}& 26.9 & {\bf 2.4} & 81.5 & 59.8 & 91.5 &  468.1 & 217.3 & 245.2 & 249.5\\
 &&& [0.94] & {\bf[1.00]} & [0.85] & [0.87]  &[0.82] & [0.11] & [0.60] & [0.55] &[0.53] \\\cline{4-12}
 
  & \multirow{2}{*}{300}&\multirow{2}{*}{10}& 81.1 & {\bf 26.8} & 468.4 & 609.8 & 472.1 & 1388.4 & 1294.7 & 1321.5 & 1328.4\\
 &&&[0.95] & {\bf[0.98]} & [0.69] & [0.68] & [0.60] & [0.09] & [0.15] & [0.13] & [0.13] \\\cline{4-12}

  &\multirow{2}{*}{300}&\multirow{2}{*}{20}&77.2 &{\bf 3.4} & 313.6 & 267.5 & 337.0  & 1469.0 & 1205.9 & 1251.8 & 1258.9\\
  &&&[0.95] & {\bf [1.00]} & [0.79] & [0.82] & [0.78]  & [0.04] & [0.21] & [0.18] & [0.18]\\\hline
  
 \multirow{8}{*}{$\mathcal{S}_{HS_1}$}&\multirow{2}{*}{100}&\multirow{2}{*}{10}&24.9 & {\bf 20.6} & 22.9 & 54.9 & 115.9 & 334.7 & 150.9 & 180.3 & 186.0 \\ 
  &&&[0.97] & [0.98] & {\bf[0.99]} & [0.91] & [0.80]  & [0.37] & [0.71] & [0.66] & [0.64] \\ \cline{4-12}
  
    &\multirow{2}{*}{100}&\multirow{2}{*}{20}& 18.7 & 15.6 & 22.8 & {\bf 8.7} & 33.4  & 498.8 & 46.7 & 64.1 & 60.8 \\
 &&&[0.98] & [0.98] & {\bf[1.00]} & {\bf[1.00]} &[0.97] &  [0.05] & [0.92] & [0.89] & [0.89] \\\cline{4-12}
 
  &\multirow{2}{*}{300}&\multirow{2}{*}{10}&172.0 & 159.8 & {\bf 37.9} & 205.5 & 490.6  & 1098.6 & 627.0 & 752.9 & 769.4 \\
 &&&[0.89] & [0.90] & {\bf[0.99]} & [0.87] & [0.68] & [0.28] & [0.59] & [0.50] & [0.49] \\\cline{4-12}

 &\multirow{2}{*}{300}&\multirow{2}{*}{20}&7.4 & {\bf 7.0} & 22.7 & 11.4 & 114.1 & 1402.6 & 237.8 & 319.7 & 322.3\\ 
  &&&{\bf [1.00]} & {\bf[1.00]} & {\bf[1.00]} & {\bf[1.00]} & [0.94]  &[0.08] & [0.84] & [0.79] & [0.79] \\
 \hline
 
\multirow{8}{*}{$\mathcal{S}_{HS_2}$}&\multirow{2}{*}{100}&\multirow{2}{*}{10}&92.6 & 74.0 & {\bf 68.7} & 123.7 & 162.3  & 382.4 & 228.2 & 250.2 & 256.6\\
  &&&[0.83] & [0.86] & {\bf[0.88]} & [0.77] & [0.70]  & [0.27] & [0.56] & [0.52] & [0.50] \\ \cline{4-12}
  
  &\multirow{2}{*}{100}&\multirow{2}{*}{20}&24.4 & { 20.0} & 22.2 & {\bf 12.4} & 38.3 &  500.3 & 87.5 & 103.5 & 102.9 \\
 &&&[0.96] & [0.97] & {\bf[1.00]} & [0.99] &[0.95] &  [0.04] & [0.83] & [0.80] & [0.80] \\\cline{4-12}
 
&\multirow{2}{*}{300}&\multirow{2}{*}{10}& 319.1 & 463.8 & {\bf 245.6} & 516.9 & 683.5  & 1267.9 & 998.0 & 1076.0 & 1085.5 \\
  &&&[0.79] & [0.69] & {\bf [0.84]} & [0.66] & [0.55] & [0.17] & [0.34] & [0.29] & [0.28] \\\cline{4-12}
  
&\multirow{2}{*}{300}&\multirow{2}{*}{20}& 8.7 & {\bf 8.0} & 23.2 & 30.3& 155.5  & 1430.7 & 437.6 & 516.2 & 523.3\\
 &&&{\bf[1.00]} & {\bf[1.00]} & {\bf[1.00]} & [0.99] & [0.91]  & [0.06] & [0.71] & [0.66]& [0.65] \\
 \hline %\hline
    \end{tabular}
    \caption{Average recovery distances [coverages] of different methods based on 500 independent replicates under different simulation scenarios.}
    \label{Tab:recovery}
\end{table}

\begin{figure}[htp]
\centering
\includegraphics[width=0.98\linewidth]{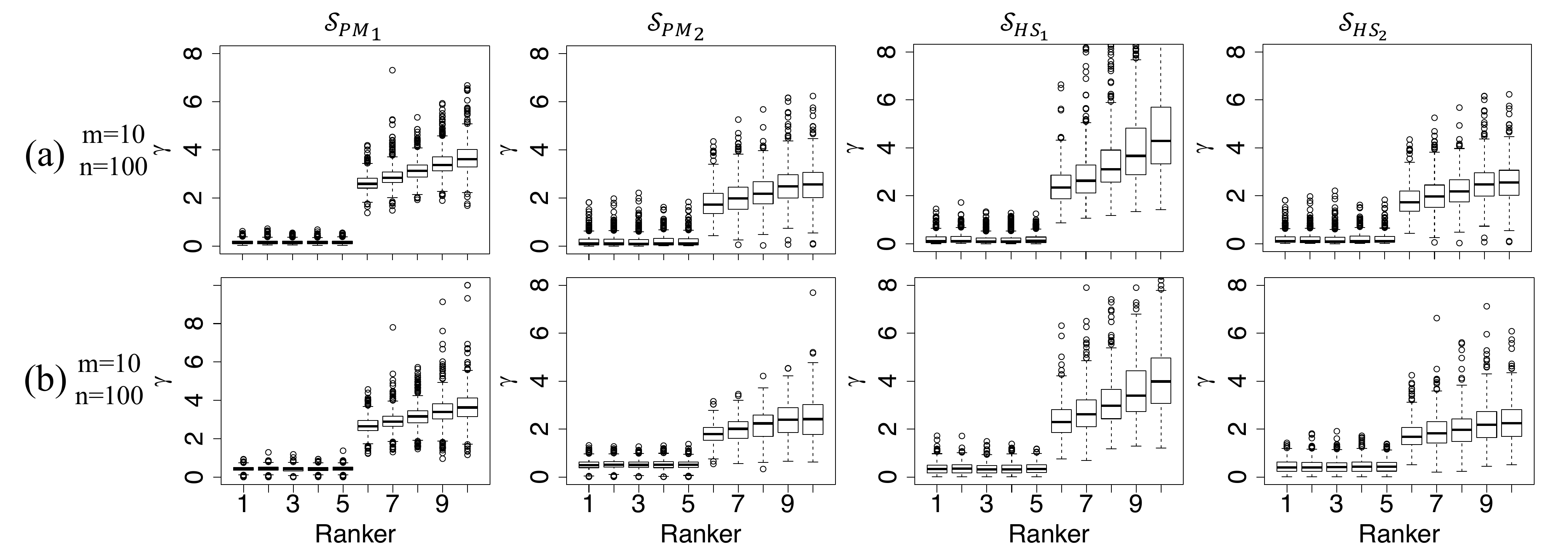}
\caption{(a) The boxplots of $\{\bar\gamma_k\}$ estimated by PAMA$_B$ with $m=10$ and $n=100$. (b) The boxplots of $\{\hat\gamma_k\}$ estimated by PAMA$_F$ with $m=10$ and $n=100$. Each column denotes a scenario setting. The results were obtained from 500 independent replicates.}
\label{Fig:gamma}
\end{figure}

Figure~\ref{Fig:n100m10} (a) shows the boxplots of recovery distances and the coverages of the nine competing methods in the four simulation scenarios with $m=10$, $n=100$, and $n_1=10$. The five methods from the left outperform the other four methods by a significant gap, and the PAMA-based methods generally perform the best.
Figure~\ref{Fig:n100m10} (b) confirms that the methods based on the Partition-Mallows model  enjoys the same capability as BARD in detecting quality differences between informative and non-informative rankers. However, while both BARD and PAMA can further discern quality differences among informative rankers, EMM fails this more subtle task.
Similar figures for other combinations of $(m,n)$ are provided in the Supplementary Material, highlighting consistent results as in Figure~\ref{Fig:n100m10}.

\begin{figure}[htp]
\centering
\includegraphics[width=\linewidth]{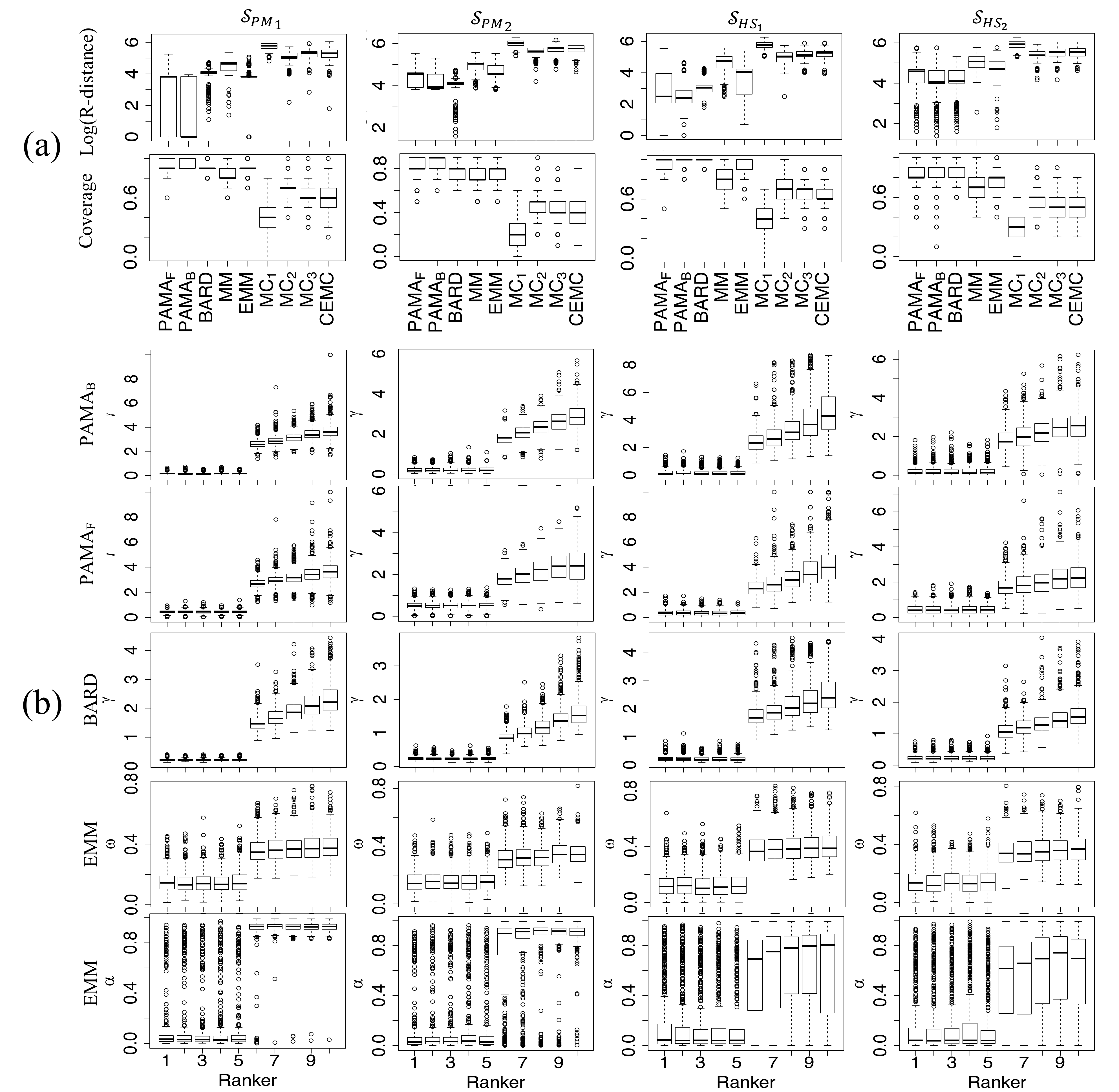} 
\caption{Boxplots of the rank aggregation results of 500 replications obtained from different methods under various scenarios with $m$=10, $n$=100, and $n_1$=10.
(a) Recovery distances in log scale and coverage obtained from nine algorithms. (b) Quality parameters obtained by Partition-type models and EMM.}
\label{Fig:n100m10}
\end{figure}

\subsection{Robustness to the Specification of $n_1$}
We need to specify $n_1$, the number of relevant entities, when applying PAMA$_B$ or PAMA$_F$. In many practical problems, however, there may not be a strong prior information on $n_1$ and there may not even be clear distinctions between relevant and background entities.
To examine robustness of the algorithm with respect to the specification of $n_1$, we design a simulation setting $\cS_{HS_3}$ to mimic the no-clear-cut scenario and investigate how the performance of PAMA is affected by the specification of $n_1$ under this setting. Formally, $\cS_{HS_3}$ assumes that $\tau_k=sort(i\in U\ by\ S_{ik}\downarrow)$, where $S_{ik}\sim N(\mu_{ik},1)$, following the same data generating framework as $\cS_{HS}$ defined in the Section \ref{sec:SimuSetting}, with $\mu_{ik}$ being replaced by 
$$\mu_{ik}=\left\{
   \begin{array}{ll}
   0, & \mbox{if } k\leq\frac{m}{2}, \\ 
    \frac{2\times a^* \times k/m}{1+e^{-b^*\times (70-i)}}, & \mbox{otherwise}, \\ 
  \end{array}
  \right. $$
where $a^*=50$ and $b^*=0.1$.
Different from $\cS_{HS_1}$ and $\cS_{HS_2}$, where $\mu_{ik}$ jumps from 0 to a positive number as $i$ ranges from background to relevant entities, in the $\cS_{HS_3}$ scenario $\mu_{ik}$  increases smoothly as a  function of $i$ for each informative ranker $k$.
In such cases, the concept of ``relevant'' entities is not well-defined.

We simulate 500 independent data sets from $\cS_{HS_3}$ with $n=100$ and $m=10$. For each data set, we try different specifications of $n_1$ ranging from 10 to 50 and compare PAMA to several competing methods based on their performance on recovering the top-$A$ list $[E_1\preceq E_2\preceq\cdots \preceq E_{A}]$, which is still well-defined based on the simulation design. The results summarized in Table~\ref{Tab:misspecifiedd} show that no matter which $n_1$ is specified, the partition-type models consistently outperform all the competing methods in terms of a lower recovery distance from the true top-$n_1$ list of items, i.e., $[E_1\preceq E_2\preceq\cdots \preceq E_{n_1}]$. Figure \ref{fig:consistentd} illustrates in details the average aggregated rankings of the top-10 entities by PAMA as $n_1$ increases, suggesting that PAMA is able to figure out the correct rankings of the top entities effectively.
These results give us confidence that PAMA is robust to misspecification of $n_1$.

\begin{table} \scriptsize
    \centering
    \begin{tabular}{ccc|ccc|cc|cccc} 
    \hline %\hline
    \multicolumn{3}{c|}{Configuration}&\multicolumn{3}{c|}{Partition-type Models}&\multicolumn{2}{c|}{{Mallows Models}} &\multicolumn{4}{c}{MC-based Models} \\ \cline{1-3} \cline{4-6} \cline{7-8} \cline{9-12}
     $n$ &$m$ &$n_1$& PAMA$_F$&PAMA$_B$ &BARD & EMM & MM& MC$_1$ & MC$_2$ & MC$_3$ & CEMC\\  \cline{1-3} \cline{4-6} \cline{7-8} \cline{9-12}
\multirow{2}{*}{100}&\multirow{2}{*}{10} &\multirow{2}{*}{10}&44.8 & {\bf 34.6} & 42.6 & 61.5 & 227.7  & 423.8 & 45.6 & 199.1 & 241.3 \\
&&&[0.90] & [0.93] & {\bf [0.96]} & [0.88] & [0.58]  & [0.20] & [0.92] & [0.63] & [0.54] \\ \cline{4-12}
 \multirow{2}{*}{100}&\multirow{2}{*}{10}&\multirow{2}{*}{20}&39.2 & {\bf 33.9} & 94.2& 107.0 & 308.6  & 764.6 & 52.3 & 268.9 & 372.9  \\ 
 &&& [0.95] & [0.96] & {\bf [0.99]} & [0.90] & [0.75]  & [0.33] & [0.96] & [0.78] & [0.67] \\\cline{4-12}
\multirow{2}{*}{100}&\multirow{2}{*}{10}&\multirow{2}{*}{30}& {\bf 27.5} & 29.2 & 207.4 & 126.2 & 360.6  & 1040.2 & 67.8 & 325.4 & 445.0 \\
&&&{\bf [0.98]} & {\bf [0.98]} & {\bf [0.98]} & [0.93] & [0.83]  & [0.44] & [0.96] & [0.84] & [0.77]  \\\cline{4-12}
\multirow{2}{*}{100}&\multirow{2}{*}{10}& \multirow{2}{*}{40}& {\bf 16.0} & 17.4 & 363.9 & 131.6 & 408.1  & 1274.1 & 83.1 & 382.5 & 486.9 \\ 
 &&&{\bf [0.99]} & {\bf [0.99]} & [0.98] & [0.95] & [0.87]  & [0.54] & [0.97] & [0.87] & [0.83] \\
 \cline{4-12}
 \multirow{2}{*}{100}&\multirow{2}{*}{10}& \multirow{2}{*}{50}&  {\bf 8.8} & 9.3 & 565.3 & 134.6 & 452.2  & 1484.2 & 109.2 & 446.4 & 524.9 \\ 
 &&&{\bf [1.00]} & {\bf [1.00]} & [0.99] & [0.97] & [0.90]  & [0.62] & [0.96] & [0.89] & [0.88] \\
 \cline{1-12}
    \end{tabular}
    \caption{Average recovery distances [coverages] of different methods based on 500 independent replicates under scenario $\mathcal{S}_{HS_{3}}$.
    }
    \label{Tab:misspecifiedd}
\end{table}

\begin{figure}[ht]
    \centering
    \includegraphics[width=0.6\textwidth, height=0.4\textheight]{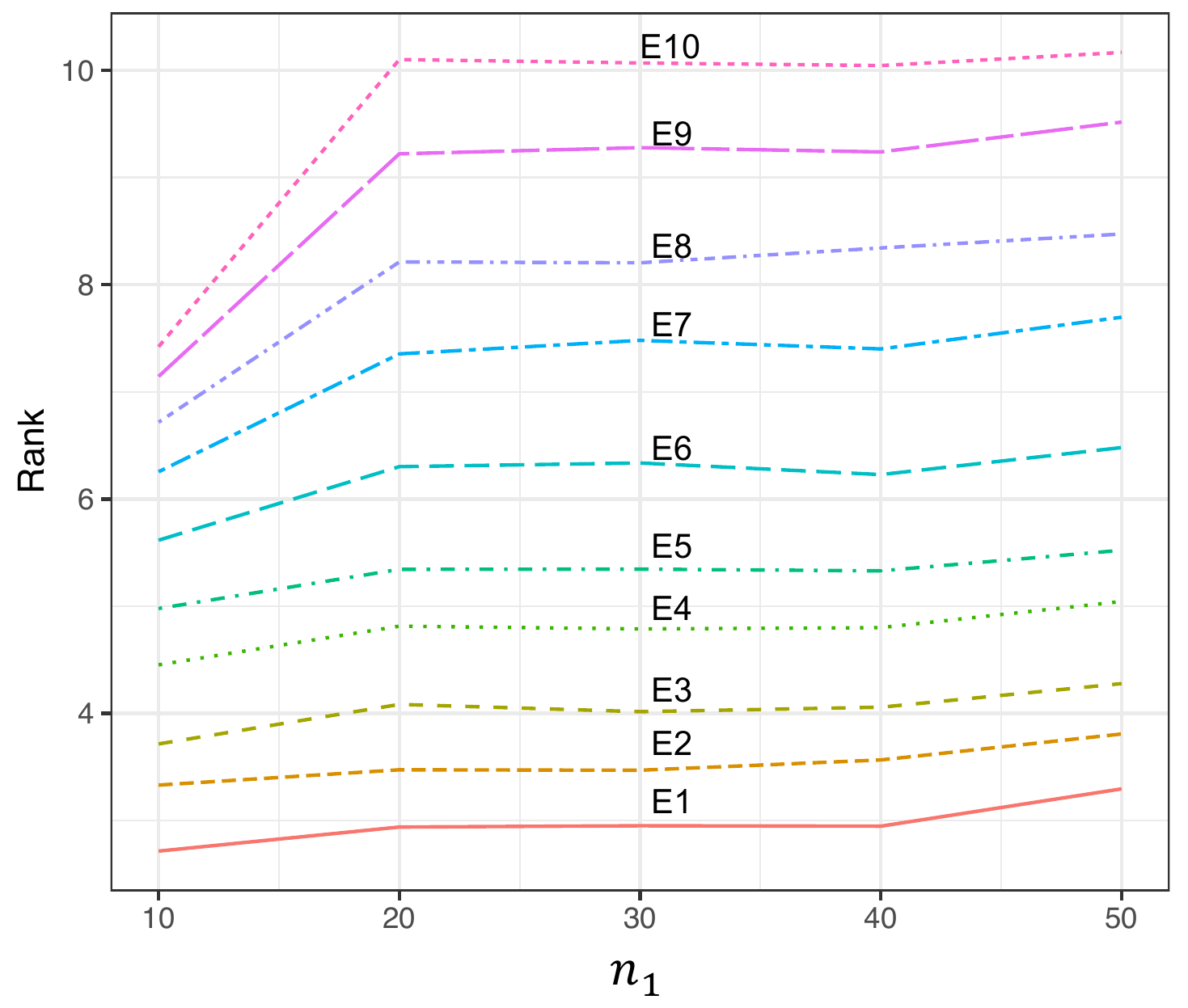}
    \caption{Average aggregated rankings of the top-10 entities by PAMA as $n_1$ increases from 10 to 50 for simulated data sets generated from $\cS_{HS_3}$.}
    \label{fig:consistentd}
\end{figure}

Noticeably, although PAMA and BARD achieve comparable coverage as shown in Table \ref{Tab:misspecifiedd}, PAMA dominates BARD uniformly in terms of a much smaller recovery distance in all cases, suggesting that PAMA is capable of figuring out the detailed ranking of relevant entities that is missed by BARD. In fact, 
since BARD relies only on $\rho_i \triangleq P(I_i = 1 \mid \tau_1, \cdots , \tau_m)$ to rank entities, in cases where the signal to distinguish the relevant and background entities is strong, many $\rho_i$'s are very close to 1, resulting in nearly a ``random" ranking among the top relevant entities.
Theoretically, if all relevant entities are recognized correctly but ranked randomly, the corresponding recovery distance would increase with $n_1$ in an order of $\mathcal{O}({n_1}^2)$, which matches well with the increasing trend of the recovery distance of BARD shown in Table \ref{Tab:misspecifiedd}.

We also tested the model's performance when there is a true $n_1$ but it is mis-specified in our algorithm. We varied $n_1$ as $8, 10$ and $18$, respectively, for setting $\cS_{HS_1}$ with $n$=100 and $m$=10, where the true $n_1$=10 (the first ten entities). Figure \ref{fig:misspecifiedd} shows boxplots of $\cI$ for each mis-specified case. For the visualization purpose, we just show the boxplot of $E_1$ to $E_{20}$. The other entities are of the similar pattern with $E_{20}$. The figure shows a robust behavior of PAMA$_B$ as we vary the specifications of $n_1$. It also shows that the results are slightly better if we specify a $n_1$ that is moderately larger than its true value. The consistent results of other mis-specified cases, such as $5, 12, 15$, can be found in the Supplementary Material.

\begin{figure}
    \centering
    \includegraphics[width=0.7 \linewidth, height=0.6 \linewidth]{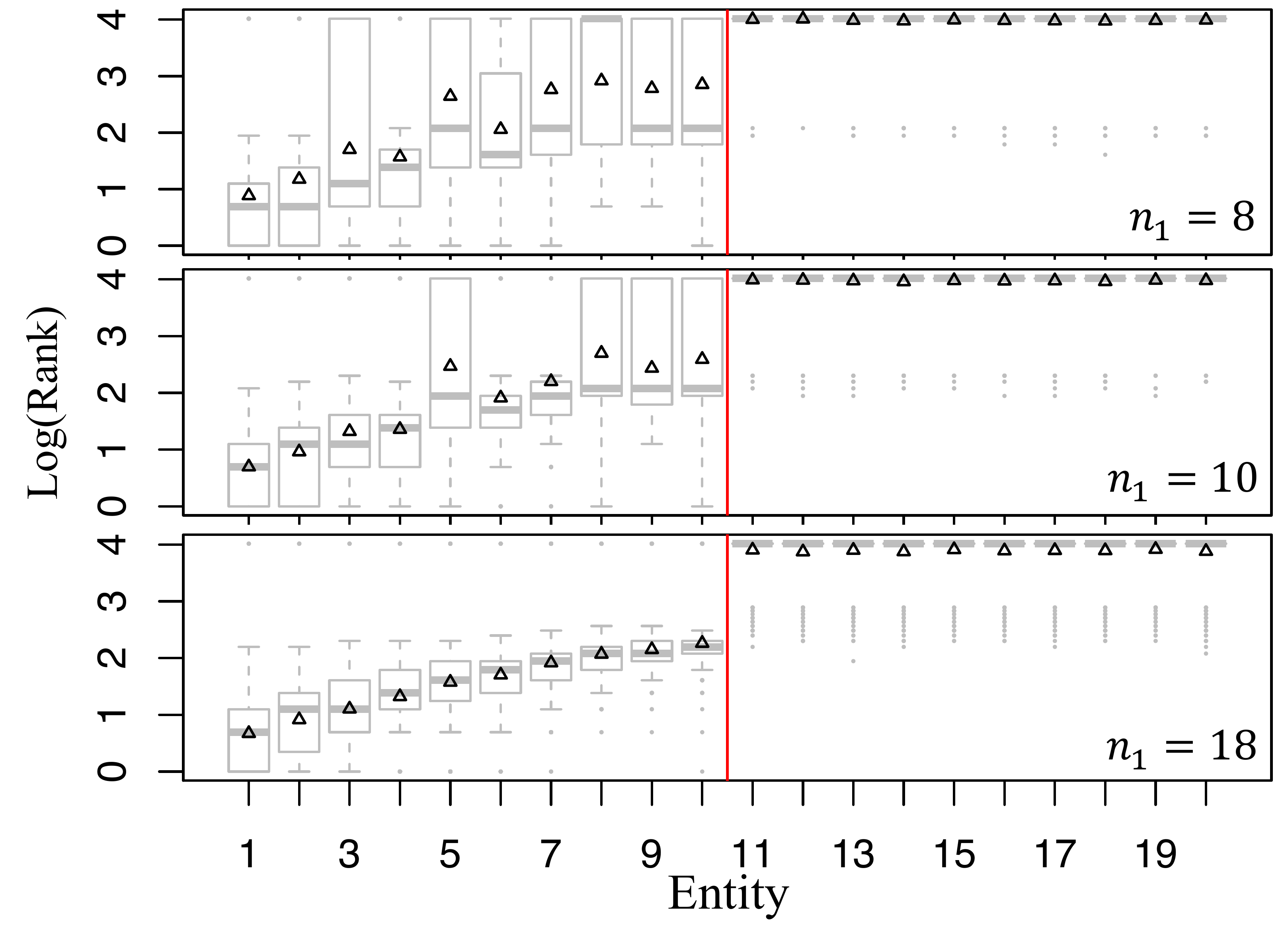}
    \caption{Boxplots of the estimated $\cI$ from  500 replications under the setting of $\cS_{HS_1}$ with $n_1$ being set as 8, 10 and $18$, respectively. The true $n_1$ is $10$.
    The vertical lines separate relevant entities (left) from background ones.
    The Y-axis shows the logarithm of the entities' ranks. The rank of a background entity is replaced by their average $\frac{100+10+1}{2}$. 
    The triangle denotes the mean rank of each entity.}
    \label{fig:misspecifiedd}
\end{figure}
\section{Real Data Applications} \label{sec:realdata}

\subsection{Aggregating Rankings of NBA Teams}
We applied PAMA$_B$ to the NBA-team data analyzed in \cite{deng2014bayesian}, and compared it to competing methods in the literature. The NBA-team data set contains 34 ``predictive" power rankings of the 30 NBA teams in the 2011-2012 season. The 34 ``predictive" rankings were obtained from 6 professional rankers (sports websites) and 28 amateur rankers (college students), and the data quality varies significantly across rankers. More details of the dataset can be found in Table 1 of \cite{deng2014bayesian}.

\begin{figure}[htp]
\centering
  \includegraphics[width=\linewidth]{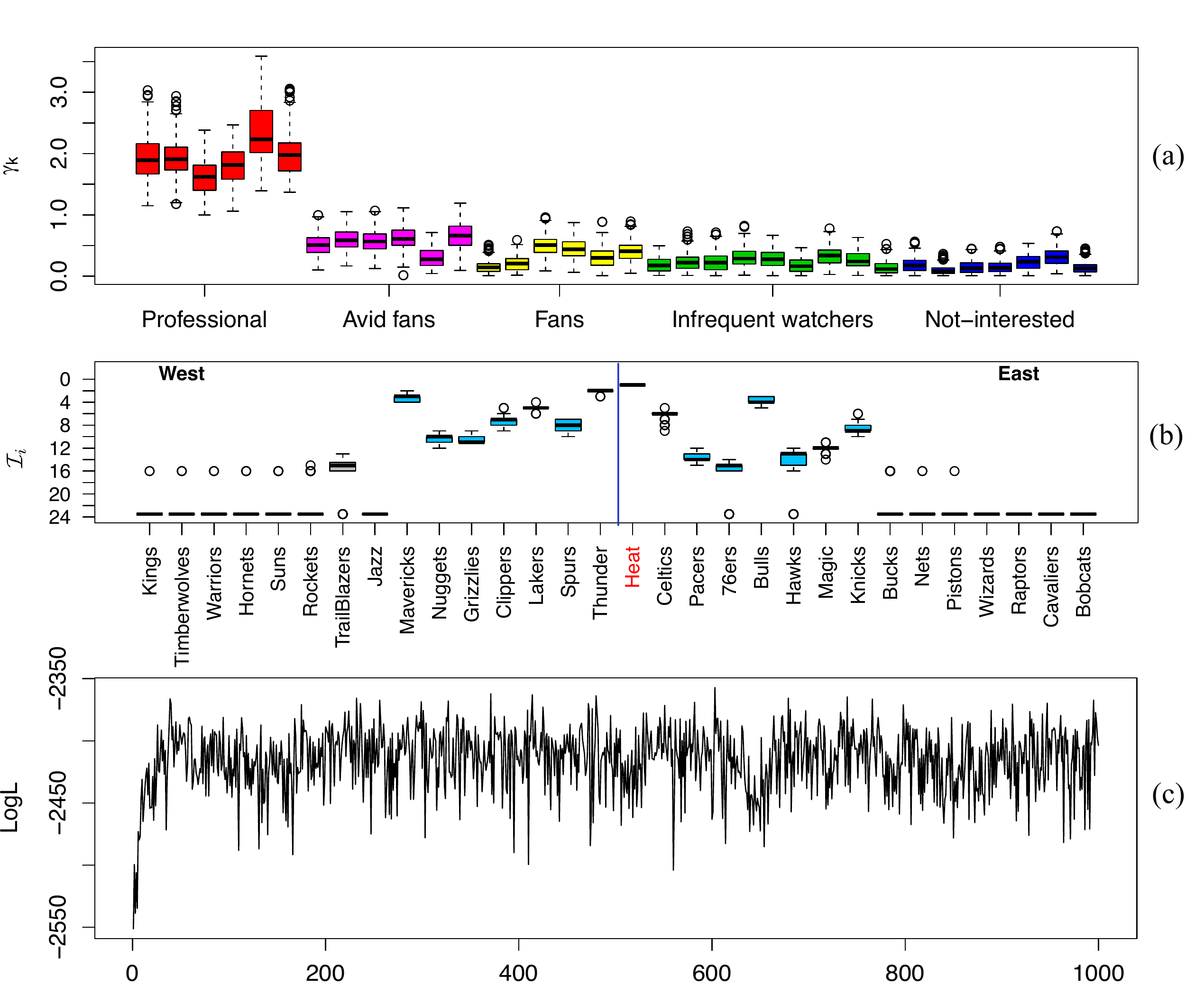}
 \caption{Results from PAMA$_B$ for the NBA-team dataset. (a) boxplots of posterior samples of $\gamma_k$. (b) barplots of $\bar{\cI}_i$ where the vertical line divides the NBA teams in Western and Eastern Conferences. (c) the trace plot of the log-likelihood function.}
\label{Fig:NBA} 
\end{figure}

Figure \ref{Fig:NBA} displays the results obtained by PAMA$_B$ (the partial-list version with $n_1$ specified as 16).
Figure \ref{Fig:NBA} (a) shows the posterior distributions, as boxplots, of the quality parameter of each involved ranker. Figure \ref{Fig:NBA} (b) shows the posterior distribution of the aggregated power ranking of each NBA team. 
All the posterior samples that reports ``0" for the rank of an entity means that the entity is a background one, for visualization purpose we replace ``0'' by the rank of background average rank,  $\frac{n+n_1+1}{2}=\frac{30+16+1}{2}=23.5$. The final set of 16 playoff teams are shown in blue while the champion of that season is shown in red (i.e., Heat). Figure \ref{Fig:NBA} (c) shows the trace plot of the log-likelihood of the PAMA model along the MCMC iteration. 
Comparing the results to Figure 8 of \cite{deng2014bayesian}, we observe the following facts: (1) PAMA$_B$ can successfully discover the quality difference of rankers as BARD; (2) PAMA$_B$ can not only pick up the relevant entities effectively like BARD, but also rank the discovered relevant entities reasonably well, which cannot be achieved by BARD; (3) PAMA$_B$ converges quickly in this application.

We also applied other methods, including MM, EMM and Markov-chain-based methods, to this data set. We found that none of these methods could discern the quality difference of rankers as successfully as PAMA and BARD. Moreover, using the team ranking at the end of the regular season as the surrogate true power ranking of these NBA teams in the reason, we found that PAMA also outperformed BARD and EMM by reporting an aggregated ranking list that is the closest to the truth. 
Table \ref{Tab:NBArankresults} provides the detailed aggregated ranking lists inferred by BARD, EMM and PAMA respectively, as well as their coverage of and Kendall $\tau$ distance from the surrogate truth. 
Note that the Kendall $\tau$ distance is calculated for the eastern teams and western teams separately because the NBA Playoffs proceed at the eastern conference and the western conference in parallel until the NBA final, in which the two conference champions compete for the NBA champion title, making it difficult to validate the rankings between Eastern and Western teams.

\begin{table}[ht]  \footnotesize
 \centering
   \begin{tabular}{|c|c:c|c:c|c:c|c:c|}
    %\hline
    \hline
    Ranking& \multicolumn{2}{c|}{Surrogate truth}&\multicolumn{2}{c|}{BARD}&\multicolumn{2}{c|}{EMM}&\multicolumn{2}{c|}{PAMA}\\ \hline
      & Eastern & Western & Eastern & Western  & Eastern & Western & Eastern & Western \\ \hline
     1&Bulls& Spurs&\emph{Heat}&\emph{Thunder}&Heat&Thunder&Heat&Thunder\\
     2&Heat&Thunder&\emph{Bulls}&\emph{Mavericks}&Bulls&Maverick&Bulls&Maverickss\\
     3&Pacers&Lakers&\emph{Celtics}&\emph{Clippers}&Knicks&Clippers&Celtics&Lakers\\
     4&Celtics&Grizzlies&\emph{Knicks}&\emph{Lakers}&Celtics&Lakers&Knicks&Clippers\\
     5&Hawks&Clippers&\emph{Magic}&\emph{Spurs}&Magic&Spurs&Magic&Spurs\\
     6&Magic&Nuggets&\emph{Pacers}&\emph{Grizzlies}&Pacers&Grizzlies&Hawks&Nuggets\\
     7&Knicks&Mavericks&76ers&\emph{Nuggets}&76ers&Nuggets&Pacers&Grizzlies\\ 
     8&76ers&Jazz&Hawks&Jazz$^*$&Hawks$^*$&Jazz$^*$&76ers&Jazz$^*$\\ \hline
     Kendall $\tau$ &-&-&14.5&10.5&9&10&8&10\\ \hline
     Coverage & \multicolumn{2}{c|}{-}& \multicolumn{2}{c|}{$\frac{15}{16}$}&\multicolumn{2}{c|}{$\frac{14}{16}$}& \multicolumn{2}{c|}{$\frac{15}{16}$} \\ \hline
    \end{tabular}
    \caption{Aggregated power ranking of the NBA teams inferred by BARD, EMM, and PAMA, respectively, and the corresponding coverage of and the Kendall $\tau$ distance from the surrogate true rank based on the performances of these teams in the regular season.
    The teams in italic indicate that they have equal posterior probabilities of being in the relevant group, and the teams with asterisk are those that were misclassified to the background group.}
    \label{Tab:NBArankresults}
\end{table}

\subsection{Aggregating Rankings of NFL Quarterback Players with the Presence of Covariates}
Our next application is targeted at the NFL-player data reported by \cite{li2020bayesian}. The NFL-player data contains 13 predictive power rankings of 24 NFL quarterback players. The rankings were produced by 13 experts based on the performance of the 24 NFL players during the first 12 weeks in the 2014 season. The dataset also contains covariates for each player summarizing the performances of these players during the period, including the \emph{number of games played} (G), \emph{pass completion percentage} (Pct), the \emph{number of passing attempts per game} (Att), \emph{average yards per carry} (Avg), \emph{total receiving yards} (Yds), \emph{average passing yards per attempt} (RAvg), the \emph{touchdown percentage} (TD), the \emph{intercept percentage} (Int), \emph{running attempts per game} (RAtt), \emph{running yards per attempt} (RYds) and the \emph{running first down percentage} (R1st). 
Details of the dataset can be found in Table 2 of \cite{li2020bayesian}. 

\begin{figure}[htp]
\centering
  \includegraphics[width=\linewidth]{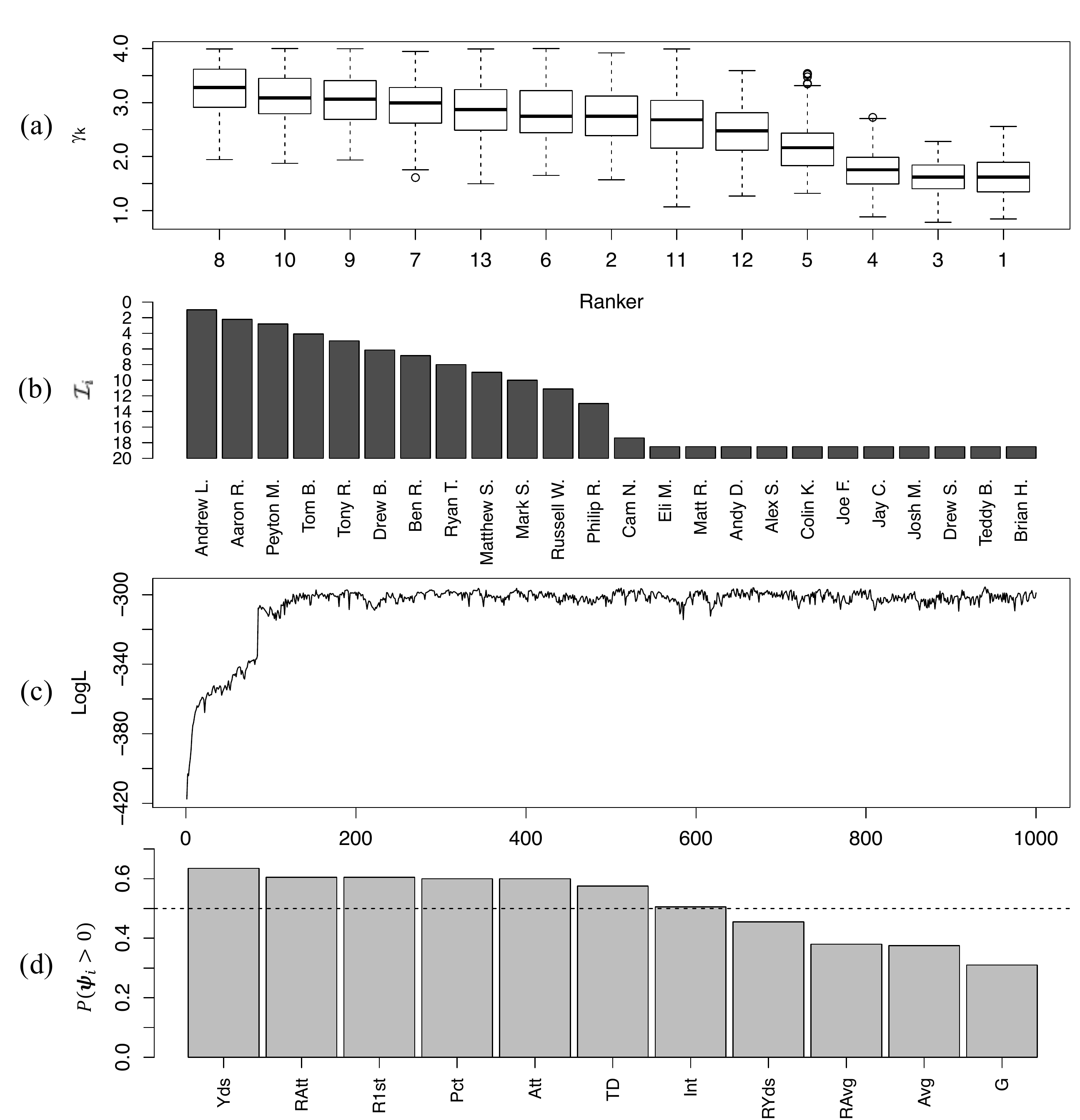}
  \caption{Key results from PAMA$_B$ for the NFL-player dataset. (a) Boxplots of posterior samples of $\gamma_k$. (b) Barplots of $\bar{\cI}_i$. (c) Trace plot of the log-likelihood. (d) Barplots of posterior probabilities for each coefficient to be positive.}
  \label{Fig:NFLcov}
\end{figure}

Here, we set $n_1=12$ in order to find which players are above average.
We analyzed the NFL-player data with PAMA$_B$ (the covaritate-assisted version) and the results are summarized in Figure \ref{Fig:NFLcov}: (a) the posterior boxplots of the quality parameter for all the rankers; (b) the barplot of $\bar{\cI_i}$ for all the NFL players with the descending order; (c) the traceplot of the log-likelihood of the model; and (d), the barplot of probabilities $P({\psi}_j >0)$ and the covariates are rearranged from left to right by decreasing the probability. From Figure \ref{Fig:NFLcov} (a), we observe that rankers 1, 3, 4 and 5 are generally less reliable than the other rankers. In the study of the same dataset in \cite{li2020bayesian}, the authors assumed that the 13 rankers fall into three quality levels, and reported that seven rankers (i.e., 2, 6, 7, 8, 9, 10 and 13) are of a  higher quality than the other six (see Figure 7 of \cite{li2020bayesian}). Interestingly, according to Figure \ref{Fig:NFLcov} (a), the PAMA algorithm suggested exactly the same set of high-quality rankers. In the meantime, ranker 2 is of the lowest quality among the seven high quality rankers in both studies. From Figure \ref{Fig:NFLcov} (b), a consensus ranking list can be obtained. Our result is consistent with that of Figure 6 in \cite{li2020bayesian}.  Figure \ref{Fig:NFLcov} (d) shows that six covariates are more probable to have positive effects.

\begin{table}[htb]
\def\arraystretch{0.8}
    \centering
    \begin{tabular}{|c|c|c|c|c|}
    \hline
         Ranking&Gold standard   & BARD & EMM& PAMA\\ \hline
         1 & Aaron R. & \emph{Andrew L.} & Andrew L.& Andrew L.\\
         2 & Andrew L. & \emph{Aaron R.}& Aaron R.& Aaron R.\\
         3 & Ben R. & \emph{Tom B.} & Tom B.& Tom B.\\
         4 &Drew B. & \emph{Drew B.} &Ben R. & Drew B.\\
         5 &Russell W.  & \emph{Ben R.}  & Drew B.& Ben R.\\
         6 &Matt R.  & \emph{Ryan T.}  & Ryan T. & Ryan T.\\
         7 &Ryan T.  &Russell W. & Russell W. &Russell W.\\
         8 &Tom B.& Philip R.*  & Philip R.*& Philip R.\\
         9 &Eli M.  &Eli M.* & Eli M.* &Eli M.*\\
         10 &Philip R.  &Matt R.* &Matt R.* &Matt R.*\\
         \hline
         R-distance& - &35.5&32&25\\ \hline
         Coverage& -  &0.7 &0.7& 0.8\\ \hline
    \end{tabular}
    \caption{Top players in the aggregated rankings inferred by BARD, EMM and PAMA. The entities in italic indicate that they have equal posterior probabilities of being in the  relevant group, and the players with asterisk are those that were mis-classified to the background group.}
    \label{tab:rankingsofNFL}
\end{table}
Using the Fantasy points of the players (\url{https://fantasy.nfl.com/research/players}) derived at the end of the 2014 NFL season as the surrogate truth, the recovery distance  and coverage of the aggregated rankings by different approaches can be calculated so as to evaluate the performances of different approaches. Note that the Fantasy points of two top NFL players Peyton Manning and Tony Romo are missing for unknown reasons, we excluded them from analysis and only report results for the top 10 positions instead of top 12. Table~\ref{tab:rankingsofNFL} summarizes the results, demonstrating that PAMA outperformed the other two methods.

\section{Conclusion and Discussion} \label{sec:discussion}
The proposed Partition-Mallows model embeds the classic Mallows model into a partition modeling framework developed earlier by \cite{deng2014bayesian}, which is analogous to the well-known ``spike-and-slab" mixture distribution often employed in Bayesian variable selection. Such a nontrivial ``mixture" combines the strengths of both the Mallows model and BARD's partition framework, leading to a stronger rank aggregation method that  can not only learn quality variation of rankers and distinguish relevant entities from background ones effectively, but also provide an accurate ranking estimate of the discovered relevant entities. Compared to other frameworks in the literature for rank aggregation with heterogeneous rankers, the  Partition-Mallows model enjoys more accurate results with better interpretability at the price of a moderate increase of computational burden. We also show that the Partition-Mallows framework can easily handle partial lists and and incorporate covariates in the analysis.

Throughout this work, we assume that the number of relevant entities $n_1$ is known. This is reasonable in many practical problems where a specific $n_1$ can be readily determined according to research demands. Empirically, we found that the ranking results are insensitive to the choice of a wide range of values of $n_1$. If needed, we may also determine $n_1$ according to a model selection criterion, such as AIC or BIC.

In the PAMA model, $\pi(\tau_k^0 \mid \tau_k^{0\mid 1})$ is assumed to be a uniform distribution. If the detailed ranking of the background entities is of interest, we can modify the conditional distribution $\pi(\tau_k^0 \mid \tau_k^{0\mid 1})$ to be the Mallows model or other models. A quality parameter can still be incorporated to control the interaction between relevant entities and background entities. The resulting likelihood function becomes more complicated, but is still tractable.

In practice, the assumption of independent rankers may be violated. In the literature, a few approaches have been proposed to detect and handle dependent rankers. For example, \cite{deng2014bayesian} proposed a  hypothesis-testing-based framework to detect pairs of over-correlated rankers and a hierarchical model to accommodate clusters of dependent rankers; \cite{JohnsonS2019} adopted an extended Dirichlet process and a similar hierarchical model to achieve simultaneous ranker clustering and rank aggregation inference. Similar ideas can be incorporated in the PAMA model as well to deal with non-independent rankers.

\section*{Acknowledgement}
We thank Miss Yuchen Wu for helpful discussions at the early stage of this work and the two reviewers for their insightful comments and suggestions that helped us improve the paper greatly. This research is supported in part by the National Natural Science Foundation of China (Grants 11771242 \& 11931001), Beijig Academy of Artificial Intelligence (Grant BAAI2019ZD0103), and the National Science Foundation of USA (Grants DMS-1903139 and DMS-1712714). The author Wanchuang Zhu is partially supported by the Australian Research Council (Data Analytics for Resources and Environments, Grant IC190100031)

{\section*{Supplementary Materials}}
\appendix
\section{Numerical Supports to Power-Law Model of $\tau_k^{0\mid 1}(i)$ }
\label{AP:Numerical_Validation_of_Power-Law}

Figure~\ref{Fig:test10} illustrates numerical evidences to support the power-law model for $\tau_k^{0\mid 1}(i)$ in different scenarios in a similar spirit of the Figure 2 in \cite{deng2014bayesian}: generating each ranker $\tau_k$ as the order of $\{X_{k,1},\cdots,X_{k,n}\}$, i.e.,
$\tau_k=sort(i\in U\mbox{ by }X_{k,i} \downarrow),$
where $X_{k,i}$ is generated from two different distributions $F_{k,0}$ and $F_{k,1}$ via the following mechanism:
$$X_{k,i}\sim F_{k,0}\cdot\mI(i\in U_B)+F_{k,1}\cdot\mI(i\in U_R),\ \forall\ i\in U.$$
Figure~\ref{Fig:test10} confirms that the log-log plot of $t$ versus $h(t)=P(\tau_k^{0\mid 1}(t)\mid \tau_k^1;I)$ is also nearly linear (when $t$ is not too large), suggesting the power-law model is acceptable as an approximation of the reality.

\begin{figure}[htp]
\centering
\begin{subfigure}{.45\textwidth}
  \centering
  \includegraphics[width=\linewidth]{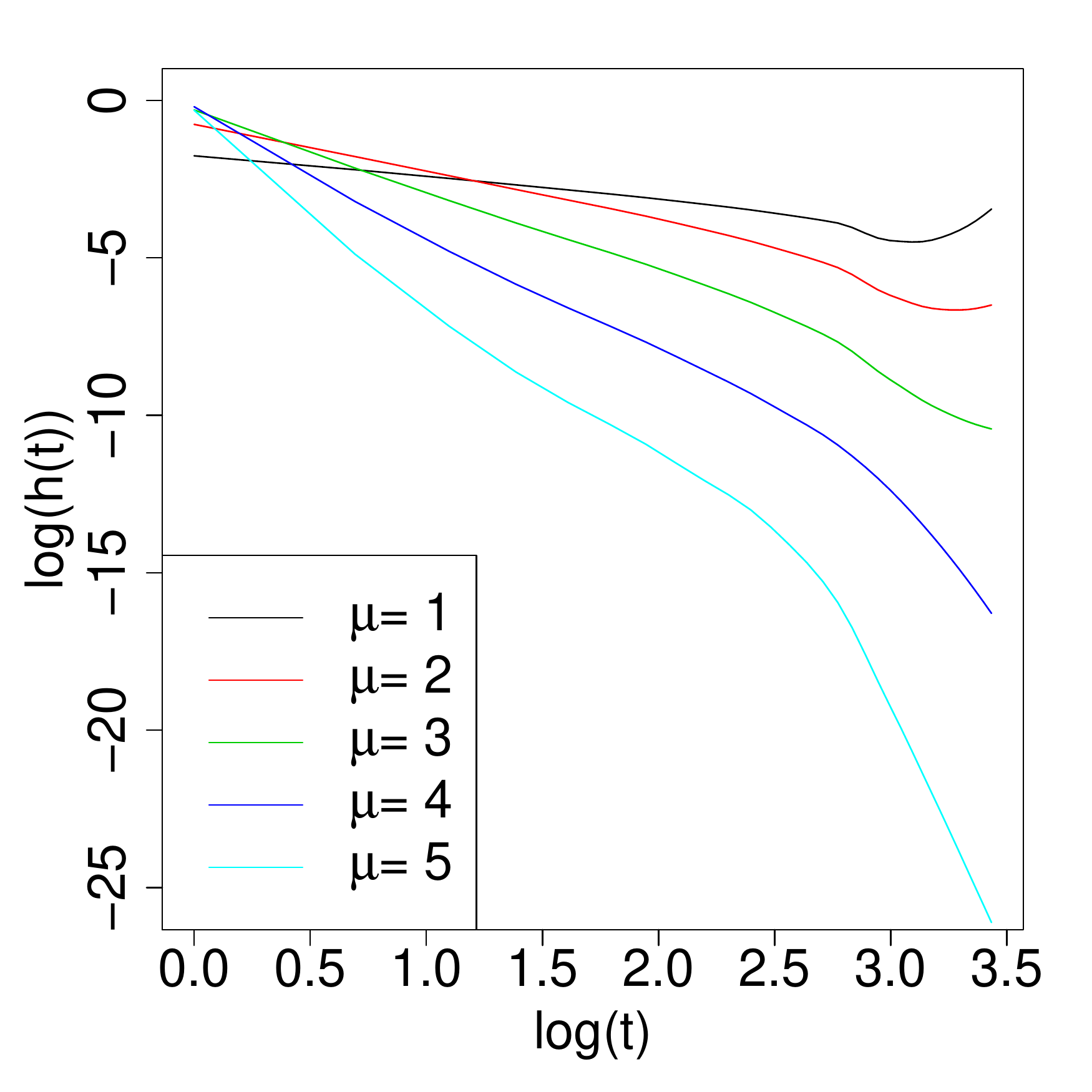}
  \caption{$n_1=30$.}
  \label{Fig:test1030}
\end{subfigure}%
\begin{subfigure}{.45\textwidth}
  \centering
  \includegraphics[width=\linewidth]{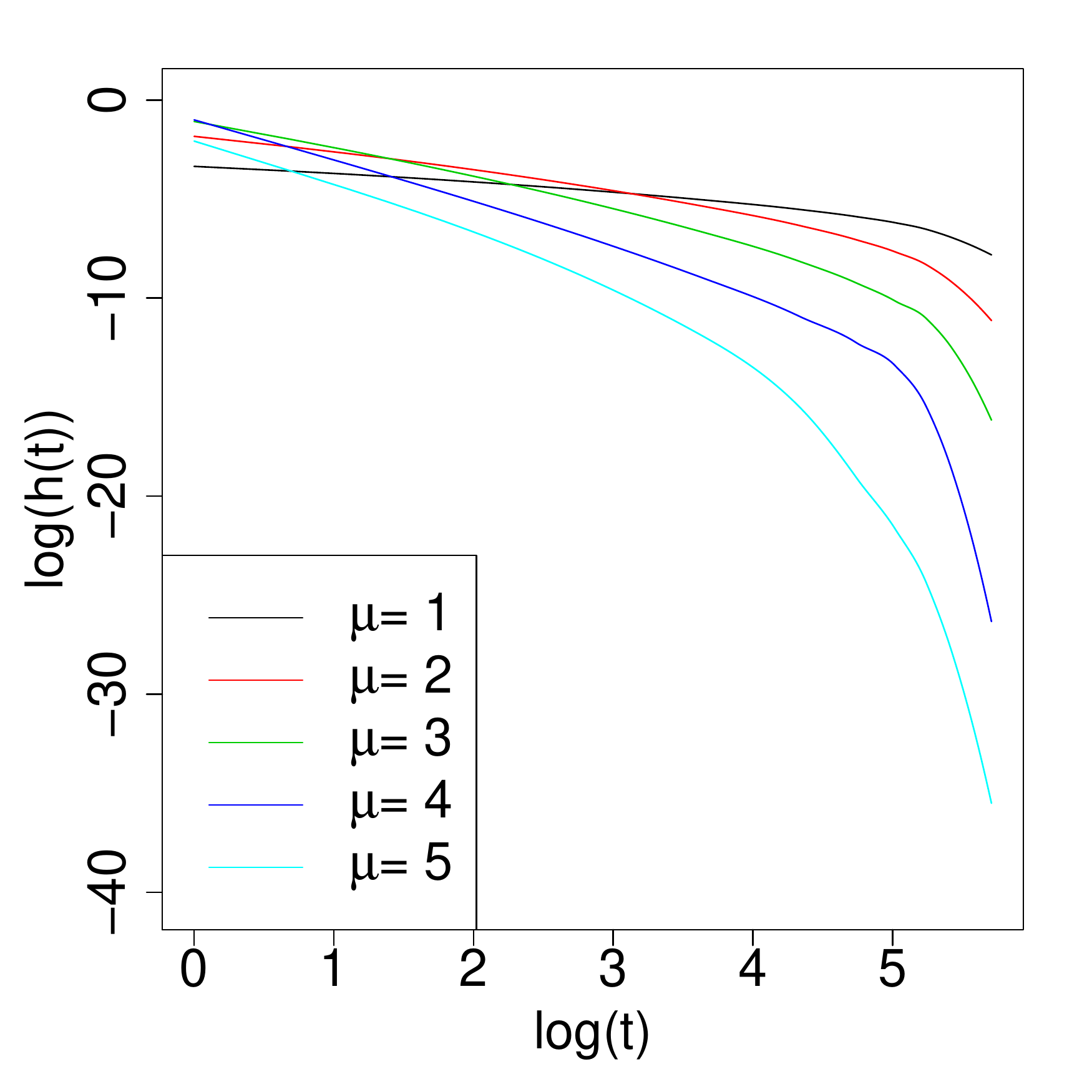}
  \caption{$n_1=300$.}
  \label{Fig:test10300}
\end{subfigure}
\caption{Log-log plots of relative reverse ranking $\tau_k^{0\mid 1}(i)=t$ versus the corresponding probability $h(t)=P(\tau_k^{0\mid 1}(i)=t\mid I)$ under different scenarios. The values of $h(t)$ are calculated by numerical integration.}
\label{Fig:test10}
\end{figure}

\section{Numerical Evidence for the Assumption $\phi_k = \phi \times \gamma_k$} \label{AP:gammaandphi}
In this section, we provide numerical evidence to support the assumption $\phi_k=\phi \times \gamma_k$ by showing that it approximates the reality reasonably well in more general settings. In the literature, the Thurstone hidden score model is widely used to generate ranked data.
Specifying $m=50$, $n=100$, $n_1=10$ and $\cI=(1,\cdots,n_1,0,\cdots,0)$, we explored two simulation settings based on the hidden score model, namely $\cS_{HS}^{'}$ and $\cS_{HS}^{''}$, in which we assume $$\tau_k = sort(i\in U\ by\ S_{ik} \downarrow)\ \mbox{where}\ S_{ik}\sim N(\mu_{ik},1),\ \forall\ 1\leq k\leq m,$$
and specify in the $\cS_{HS}^{'}$ setting
$$\mu_{ik}=\left\{
   \begin{array}{ll}
   -k \times g, &  i>n_1, \\ 
    (n_1-i) \times k \times s, & i \leq n_1; \\
  \end{array}
  \right. $$
and in the $\cS_{HS}^{''}$ setting
$$\mu_{ik}=\left\{
   \begin{array}{ll}
   -k \times g, &  i>n_1, \\ 
   0, & i=n_1,\\
    \mu_{(i+1) k} + U[0, k \times s], & i < n_1, \\ 
  \end{array}
  \right. $$
where $U[a, b]$ stands for a random number drawn from the Uniform distribution on interval $[a,b]$. 
Clearly, the quality of ranker $\tau_k$ increases monotonously with the ranker index $k$ in both settings, and there is a clear gap of mean scores between background and relevant entities.
The two hyper-parameters $(s,g)$ control the signal strength and ranker heterogeneity in the simulated data.

For each data set simulated from the above settings, given the true ranking list $\cI$, we can always fit a Mallows model for the rankings of relevant entities in each $\tau_k$ to get an estimated $\hat\phi_k$ and a power-law model for the relative rankings of the background entities among the relevant ones in each $\tau_k$ to get an estimated $\hat\gamma_k$, leading to a set of pairwise estimates $\{(\hat\phi_k,\hat\gamma_k)\}_{1\leq k\leq m}$.
Figure \ref{fig:relationshipphi} provides a graphical illustration of the estimated parameters $\{(\hat\phi_k,\hat\gamma_k)\}_{1\leq k\leq m}$ for typical simulated datasets from the $\cS_{HS}^{'}$ setting with different specifications of $(s,g)$, and Figure \ref{fig:relationshipphiS3} demonstrates the counterpart for the $\cS_{HS}^{''}$ setting.
From the figures, we can see clearly a strong linear trend between $\hat\phi_k$ and $\hat\gamma_k$ in all cases, suggesting that the presumed assumption approximates the reality very well.

\begin{figure}[!h]
    \centering
    \includegraphics[width=0.8\textwidth]{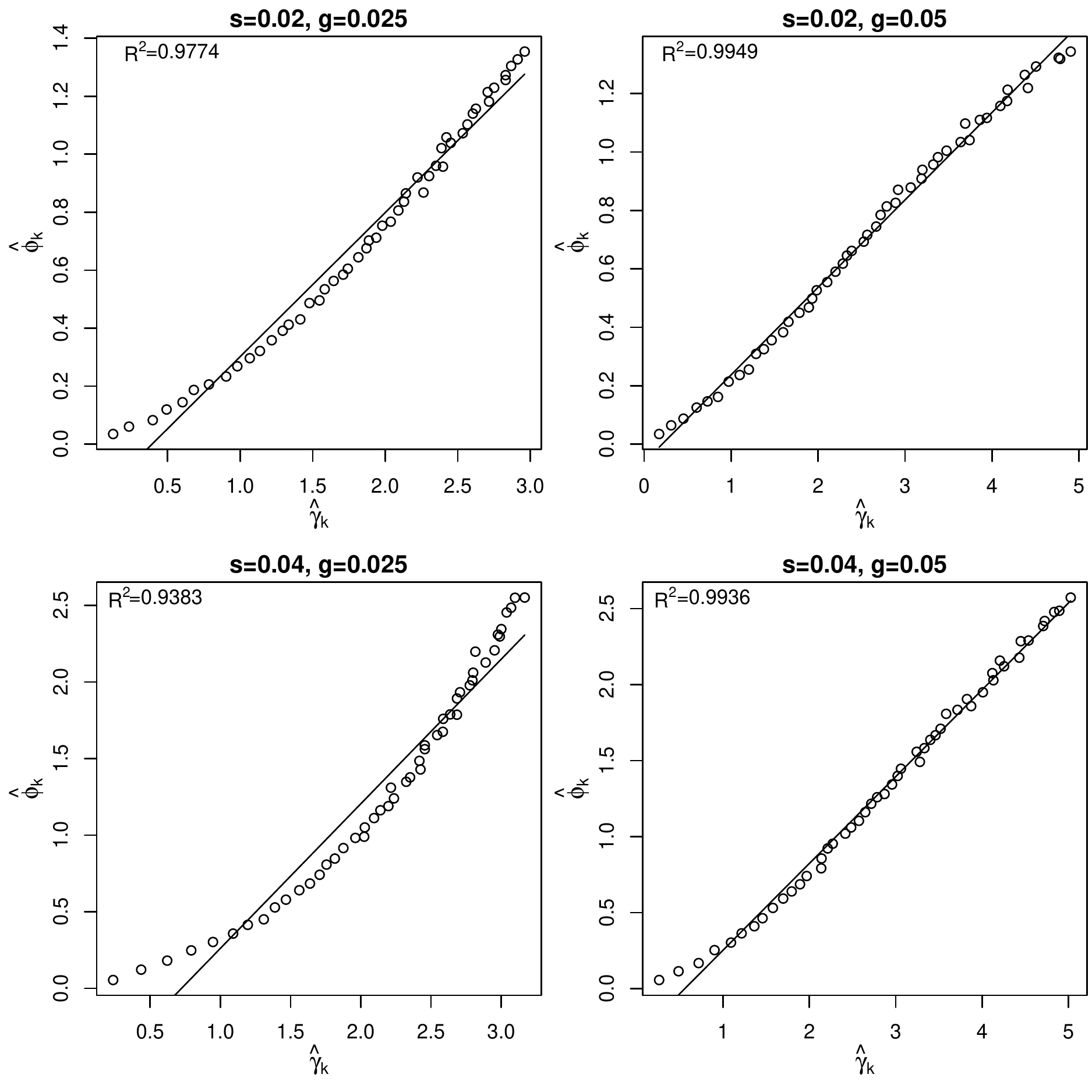}
    \caption{The relationship between $\hat\gamma_k$ and $\hat\phi_k$ under the scenario $\cS_{HS}^{'}$ with different specifications of $(s,g)$.}
    \label{fig:relationshipphi}
\end{figure}

\begin{figure}[!h]
    \centering
    \includegraphics[width=0.8\textwidth]{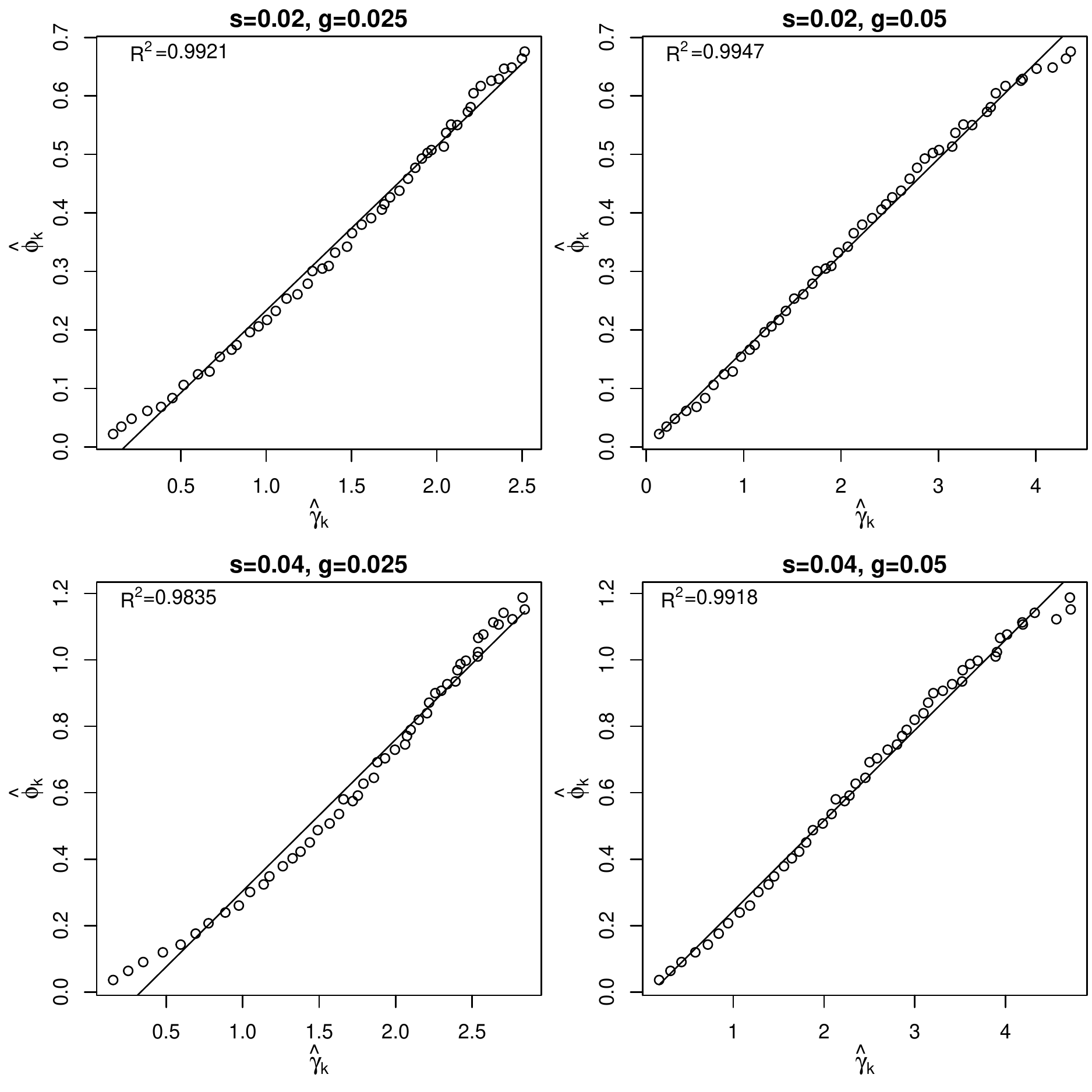}
    \caption{The relationship between $\hat\gamma_k$ and $\hat\phi_k$ under the scenario $\cS_{HS}^{''}$ with different parameters.}
    \label{fig:relationshipphiS3}
\end{figure}

\section{Proof of Theorem 1} \label{proof:ident}
\begin{proof}
We start with the degenerated special case where $m=1$. In this special case, the parameter vector $\btheta$ degenerates to a 3-dimensional vector $\btheta=(\cI,\phi,\gamma)$, and the PAMA model $P(\btau\mid\btheta)$ defined in Equation (19) degenerates to a simpler form $P(\tau\mid\btheta)$ as defined in Equation (18) with $\tau\in\Omega_n$.
To prove the identifiability of the degenerated PAMA model, we need to show that for any two proper parameter vectors $\btheta_1=(\cI_1,\phi_1,\gamma_1)$ and $\btheta_2=(\cI_1,\phi_1,\gamma_1)$ from the parameter space $\bTheta$, we always have:
$$\mbox{if}\ P(\tau\mid\cI_1,\phi_1,\gamma_1)=P(\tau\mid\cI_2,\phi_2,\gamma_2)\ \mbox{for}\ \forall\ \tau\in\Omega_n,\ \mbox{then}\ (\cI_1,\phi_1,\gamma_1)=(\cI_2,\phi_2,\gamma_2).$$
Here, we choose to prove the above conclusion by proving its equivalent counterpart: 
$$for\ \forall\  (\cI_1,\phi_1,\gamma_1)\neq(\cI_2,\phi_2,\gamma_2),\ \exists\ \tau\in\Omega_n,\ s.t.\  P(\tau\mid\cI_1,\phi_1,\gamma_1)\neq P(\tau\mid\cI_2,\phi_2,\gamma_2).$$
Apparently, the condition $(\cI_1,\phi_1,\gamma_1)\neq(\cI_2,\phi_2,\gamma_2)$ implies two possible scenarios: $$(i)\ (\phi_1,\gamma_1)\neq (\phi_2,\gamma_2),\ \mbox{or}\ (ii)\ (\phi_1,\gamma_1)= (\phi_2,\gamma_2)\ \mbox{with}\ \cI_1\neq\cI_2.$$
Below we discuss the two scenarios separately.

$(i)$ The scenario where $(\phi_1,\gamma_1)\neq (\phi_2,\gamma_2)$. 
Let 
\begin{eqnarray}\label{eq:smallesttau}
\tau^*_{\btheta}=\arg\min_{\tau\in\Omega_n} P (\tau\mid\btheta)\ \mbox{and}\ \tau^{**}_{\btheta}=\arg\min_{\tau\in\Omega_n\setminus\tau^*_{\btheta}} P (\tau\mid\btheta)
\end{eqnarray}
be the rankings with the smallest and second smallest sampling probability among the ranking space $\Omega_n$ given model parameter $\btheta$.
According to the likelihood function (18), it is easy to check that the solutions of the optimization problems in \eqref{eq:smallesttau} depend on $\cI$ only and have the general form below:
$$\tau^*_{\btheta}=inv(\cI)\ \mbox{and}\  \tau^{**}_{\btheta}=swap_\cI(\tau^*_{\btheta}),$$
where the operator $inv(\cI)$ locates the relevant entities defined by $\cI$ to the last $n_1$ positions in the ranking list with their internal order reversed and the background entities randomly to the other $(n-d)$ open positions, and the operator $swap_\cI(\tau^*_{\btheta})$ swaps the positions of either two adjacent relevant entities  defined by $\cI$ or the last relevant entity and an arbitrary background entity in $\tau^*_{\btheta}$.
For instance, assume that $n=20$, $n_1=10$ and $\cI= (1,2,\cdots,n_1,0,\cdots,0)$, we have:
\begin{eqnarray*}
\tau^*_{\btheta}&=&\ inv(\cI)\ =\  \big(20,19,\cdots,13,12,11,perm(10,\cdots,2,1)\big),\\
\tau^{**}_{\btheta}&=&swap_\cI(\tau^*_{\btheta})=\big(20,19,\cdots,13,11,12,perm(10,\cdots,2,1)\big),\\
or\ \tau^{**}_{\btheta}&=&swap_\cI(\tau^*_{\btheta})=\big(20,19,\cdots,13,12,10,perm(11,9,\cdots,2,1)\big),
\end{eqnarray*}
where $perm(S)$ stands for a random permutation of the input sequence $S$.
Note that although both $\tau^*_{\btheta}$ and $\tau^{**}_{\btheta}$ have a lot of variations, the different variations correspond to the same sampling probability below:
\begin{eqnarray*}
p^*_{\btheta} &\triangleq& P(\tau^*_{\btheta}\mid\btheta)=\dfrac{1}{(n-n_1)!}\times\dfrac{(n_1+1)^{-(n-n_1)\gamma}}{(C^*_{\gamma,n_1})^{n-n_1}}\times\dfrac{\exp\left\{-\frac{n_1(n_1-1)}{2} \phi \gamma\right\} }{Z(\phi \gamma)}=h(\phi,\gamma),\\
p^{**}_{\btheta}&\triangleq& P(\tau^{**}_{\btheta}\mid\btheta)=h(\phi,\gamma)\times \min\{(n-n_1)\times (1+1/n_1)^{\gamma},\exp\{\phi\gamma/2\} \},
\end{eqnarray*}
where $C^*_{\gamma,n_1}=\sum_{t=1}^{n_1+1} t^{-\gamma}$.
Further, define 
$$r^*_\btheta=\frac{p^{**}_{\btheta}}{p^{*}_{\btheta}}=\min\big\{(n-n_1)\times (1+1/n_1)^{\gamma},\exp\{\phi\gamma/2\} \big\}=g(\phi,\gamma).$$
It can be showed via proof by contradiction that the following two equations can not hold simultaneously for $(\phi_1,\gamma_1)\neq (\phi_2,\gamma_2)$: $$h(\phi_1,\gamma_1) = h(\phi_2,\gamma_2)\ \mbox{and}\ g(\phi_1,\gamma_1) = g(\phi_2,\gamma_2).$$
Because if $g(\phi_1,\gamma_1) = g(\phi_2,\gamma_2)$, we would have either $\phi_1 \gamma_1 = \phi_2 \gamma_2$ or $\gamma_1 = \gamma_2$ based on the definition of function $g(\phi,\gamma)$, which leads to $(\phi_1,\gamma_1)= (\phi_2,\gamma_2)$ as function $h(\phi,\gamma)$ is monotonically decreasing with respect to $\phi$ and $\gamma$. 
Apparently, the above fact indicates that the following two equations can not hold simultaneously:
\begin{eqnarray*}
P(\tau^*_{\btheta_1}\mid\btheta_1)&=&P(\tau^*_{\btheta_2}\mid\btheta_2)\ \mbox{and}\  P(\tau^{**}_{\btheta_1}\mid\btheta_1)=P(\tau^{**}_{\btheta_2}\mid\btheta_2),
\end{eqnarray*}
which means that the two distributions $P(\tau\mid\btheta_1)$ and $P(\tau\mid\btheta_2)$ are not identical.
Therefore, there must exist $\tau\in\Omega_n$ such that $P(\tau\mid\btheta_1)\neq P(\tau\mid\btheta_2)$.

$(ii)$ The scenario where $(\phi_1,\gamma_1)= (\phi_2,\gamma_2)$ but $\cI_1\neq\cI_2$. As $\tau^*_{\btheta_1}$ and $\tau^*_{\btheta_2}$ are the minima of $P(\tau\mid\btheta_1)$ and $P(\tau\mid\btheta_2)$, and $\tau^*_{\btheta_1}\neq\tau^*_{\btheta_2}$ in this case, we have  $$P(\tau^*_{\btheta_1}\mid\btheta_1) = h(\phi_1,\gamma_1) = h(\phi_2,\gamma_2)= P(\tau^*_{\btheta_2}\mid\btheta_2) <P(\tau^*_{\btheta_1}\mid\btheta_2).$$ Similarly, we have $$ P(\tau^*_{\btheta_2} \mid\btheta_1)>P(\tau^*_{\btheta_2}\mid\btheta_2).$$
Therefore, there exists $\tau\in\Omega_n$ (e.g., $\tau^{*}_{\btheta_1}$ or $\tau^{*}_{\btheta_2}$) such that $P(\tau\mid\btheta_1)\neq P(\tau\mid\btheta_2)$.
Combining the above two scenarios, we conclude that fact (20) holds for $m=1$.

Next, we prove the more general cases where $m>1$ by mathematical induction: assuming that (20) holds for all $m\leq M$, we will prove that it also holds for $m=M+1$.
Considering that for any $m>1$, $P(\btau\mid\btheta_1)=P(\btau\mid\btheta_2)$ implies:
$$\prod_{k=1}^mP(\tau_k\mid\cI_1,\phi_1,\gamma_{1k})= \prod_{k=1}^mP(\tau_k\mid\cI_2,\phi_2,\gamma_{2k}),$$
where $\btau = (\tau_1,\cdots,\tau_m)$ and $\btheta_t=(\cI_t,\phi_t,\bgamma_t)$ with $\bgamma_t=\{\gamma_{tk}\}_{1\leq k\leq m}$, the condition in Theorem 1 for $m=M+1$ suggests that
\begin{equation}\label{eq:IdentifiablityConditionForM}
    \prod_{k=1}^{M+1}P(\tau_k\mid\cI_1,\phi_1,\gamma_{1k})= \prod_{k=1}^{M+1}P(\tau_k\mid\cI_2,\phi_2,\gamma_{2k}),\ \forall\ \btau\in\Omega_n^{M+1}.
\end{equation}
Define $\btau_{[-j]}=(\tau_1,\cdots,\tau_{j-1},\tau_{j+1},\cdots,\tau_{M+1})$, $\bgamma_{t[-j]}=\{\gamma_{tk}\}_{k\neq j}$, $\btheta_{t[-j]}=(\cI_t,\phi_t,\bgamma_{t[-j]})$ and
$$P(\btau_{[-j]}\mid\btheta_{t[-j]})=\prod_{k\neq j}P(\tau_k\mid\cI_t,\phi_t,\gamma_{tk}).$$
For $\forall\ 1\leq j\leq M+1$, equation \eqref{eq:IdentifiablityConditionForM} can be expressed alternatively as below:
\begin{equation*}%\label{eq:IdentifiablityConditionForM}
    P(\btau_{[-j]}\mid\btheta_{1[-j]})\cdot P(\tau_j\mid\cI_1,\phi_1,\gamma_{1j})=P(\btau_{[-j]}\mid\btheta_{2[-j]})\cdot P(\tau_j\mid\cI_2,\phi_2,\gamma_{2j}),\ \forall\ \btau\in\Omega_n^{M+1}.
\end{equation*}
For any fixed $\btau_{[-j]}$, summing over all equations of the above form for all $\btau$ that is compatible with $\btau_{[-j]}$, we have
$$P(\btau_{[-j]}\mid\btheta_{1[-j]})\cdot\sum_{\tau_j\in\Omega_n}P(\tau_j\mid\cI_1,\phi_1,\gamma_{1k})=P(\btau_{[-j]}\mid\btheta_{2[-j]})\cdot\sum_{\tau_j\in\Omega_n}P(\tau_j\mid\cI_2,\phi_2,\gamma_{2k})$$
Considering that 
$$\sum_{\tau_j\in\Omega_n}P(\tau_j\mid\cI_1,\phi_1,\gamma_{1k})=\sum_{\tau_j\in\Omega_n}P(\tau_j\mid\cI_2,\phi_2,\gamma_{2k})=1,$$
we have 
$$P(\btau_{[-j]}\mid\btheta_{1[-j]})=P(\btau_{[-j]}\mid\btheta_{2[-j]}),\ \forall\ \btau_{[-j]}\in\Omega_n^M,$$
which indicates that
$$\btheta_{1[-j]}=\btheta_{2[-j]},\ \forall\ j=1,\cdots,M+1.$$
Thus, we have $\btheta_1=\btheta_2$, and the proof is complete.
\end{proof}

\section{Proof of Lemma 1} \label{proof:lemma1}

\begin{proof}
The function $e_i(\phi)$ is given as
\begin{eqnarray}
e_i(\phi) &=& \int\bbE\big[\tau(i)\mid\gamma\big]dF(\gamma) =\int\sum_{j=1}^n jP(\tau(i)=j \mid \cI,\phi,\gamma) dF(\gamma),
\end{eqnarray}
where 
\begin{eqnarray*}   
&&P(\tau(i)=j \mid \cI,\phi,\gamma)=\sum_{k= \max(1,j-n+n_1)}^{\min(n_1, j)} P(\tau^1(i)=k)P\left(\sum_{e\in U_B} \mathbb{I} (\tau^{0|1}(e)\geq n_1+2-k) = j-k\right),\\
&&P(\tau^1(i)=k)=\dfrac{e^{-\phi\gamma (i-k)}}{\sum_{l=0}^{n_1-1} e^{-\phi\gamma l}}  ,~~~k\leq n_1,\\
&&P\left(\sum_{e\in U_B}\mathbb{I}(\tau^{0|1}(e) \geq n_1+2-k)=j-k\right) = \begin{pmatrix} n-n_1\\j-k \end{pmatrix} \left(\dfrac{\sum_{l\geq n_1+2-k} l^{-\gamma}}{\sum_{l=1}^{n_1+1}l^{-\gamma}}\right)^{j-k}\left(\dfrac{\sum_{l\leq n_1+1-k} l^{-\gamma}}{\sum_{l=1}^{n_1+1}l^{-\gamma}}\right)^{n^{'}},\\
&&n^{'}= n-n_1-j+k.
\end{eqnarray*} 

As $P\left(\sum_{e\in U_B}\mathbb{I}(\tau^{0|1}(e) \geq n_1+2-k)=j-k\right)$ is a constant with respect to $\phi$, we denote $g(\gamma) \triangleq P\left(\sum_{e\in U_B}\mathbb{I}(\tau^{0|1}(e) \geq n_1+2-k)=j-k\right)$. Thus,
\begin{eqnarray}\label{eqn:relevantmean}
e_i(\phi) &=&\int\sum_{j=1}^n jP(\tau(i)=j \mid \cI,\phi,\gamma) dF(\gamma)\nonumber\\
&=& \int\sum_{j=1}^n j \sum_{k= \max(1,j-n+n_1)}^{\min(n_1, j)} \dfrac{e^{-\phi\gamma (i-k)}}{\sum_{l=0}^{n_1-1} e^{-\phi\gamma l}} g(\gamma) dF(\gamma).
\end{eqnarray}

From Equation \eqref{eqn:relevantmean}, we can see that $e_i(\phi)$ is a continuous function of $\phi$ for $i\in U_R$, not an infinite oscillation function,
and degenerates to a constant $e_0$ for $i\in U_B$. Thus, equation $e_i(\phi)=e_0$ has only finite solutions in $[0,\phi_{max})$.
Let $\cS_i$ be the solutions of equation $e_i(\phi)=e_0$ for $i\leq n_1$ and $\cS=\cup_{i=1}^d\cS_i$. We have $\tilde\Omega_\phi=\Omega_\phi-\cS$.
\end{proof}

\section{Proof of Theorem 3}
\label{AP:consisPhiPsi}
\begin{proof}
Based on the classic theory for MLE consistency \citep{awald1949}, to prove that $(\hat\phi_\cI,\hat\psi_\cI)$ are consistent, we only to show that the following regularity conditions hold for the PAMA-H model: 
(1) probability measure $L_k(\phi,\psi)=\int P(\tau_k\mid\cI,\phi,\gamma_k)dF_\psi(\gamma_k)$ keeps the same support for all $(\phi,\psi)\in\Omega_\phi\times\Omega_\psi$, 
(2) the true parameter $(\phi_0,\psi_0)$ is an interior point of the parameter space $\Omega_\phi\times\Omega_\psi$, 
(3) the log-likelihood $l(\phi,\psi\mid\cI)$ is differentiable with respect to $\phi$ and $\psi$, 
and (4) the MLE $(\hat{\phi},\hat\psi)$ is the unique solution of the score equations $\frac{\delta l(\phi,\psi)}{\delta \phi}=0$ and $\frac{\delta l(\phi,\psi)}{\delta \psi}=0$.

It is easy to check that $l(\phi,\psi\mid\cI)$ satisfies regular conditions (1)-(3). The regular condition (4) is satisfied by showing that $l(\phi,\psi\mid\cI)$ is a concave function with respect to $\phi$ and $\psi$ respectively.
The log-likelihood of PAMA-H is
\begin{eqnarray*}
l(\phi,\psi\mid\cI)&=&\sum_{k=1}^m l_k(\phi,\psi\mid\cI)=\sum_{k=1}^m\log\left(\int P(\tau_k\mid\cI,\phi,\gamma_k)dF_\psi(\gamma_k)\right)\\
&=&\sum_{k=1}^m \log \left ( \int_0^\infty \frac{\exp \{ -\phi \gamma_k d_{\tau} (\tau_k, \cI^+)\} (1-e^{-\phi \gamma_k})^{n_1  -1}}{\prod_{t=2}^{n_1} (1-e^{-t \phi \gamma_k}) } \frac{f_\psi(\gamma_k)}{G(\gamma_k)}d\gamma_k \right)\\
&=&\sum_{k=1}^m \log \left ( \int_0^\infty R(\phi,\gamma_k) \frac{f_\psi(\gamma_k)}{G(\gamma_k)}d\gamma_k \right)= \sum_{k=1}^m \log (H_k(\phi,\psi)),
\end{eqnarray*}
where $H_k(\phi,\psi) = \int_0^\infty R(\phi,\gamma_k) \frac{f_\psi(\gamma_k)}{G(\gamma_k)}d\gamma_k$, $R(\phi,\gamma_k)=\frac{\exp \{ -\phi \gamma_k d_{\tau} (\tau_k, \cI^+)\} (1-e^{-\phi \gamma_k})^{n_1  -1}}{\prod_{t=2}^{n_1} (1-e^{-t \phi \gamma_k}) }$ and $G(\gamma_k) = A^*_{\tau_k,I} (B^*_{\tau_k,I})^{\gamma_k} (C^*_{\gamma_k,n_1})^{n_0}$. According to properties of the Mallows model, it is easy to check $\frac{\partial^2 R(\phi,\gamma_k)}{\partial ^2 \phi} <0$ and thus $R(\phi,\gamma_k)$ is a concave function with respect to $\phi$. As $\frac{f_\psi(\gamma_k)}{G(\gamma_k)}>0$, then $H_k(\phi,\psi)$ preserves convexity of $R(\phi,\gamma_k)$ according to the properties of a concave function. 
Obviously, $l(\phi,\psi\mid\cI)$ is a 
composite function of $H_k(\phi,\psi)$ with composite functions being logarithm and summation. Such composite function preserves convexity of $H_k(\phi,\psi)$. Therefore, $l(\phi,\psi\mid\cI)$ is a concave function with respect to $\phi$. Similarly, given that $f_\psi(\gamma_k)$ is a concave function of $\psi$, it is easy to show $l(\phi,\psi\mid\cI)$ is a concave function with respect to $\psi$. Therefore, regular condition (4) is satisfied.
Thus, the proof is complete.

\end{proof}

\section{Details of the Gauss-Seidel Iterative Optimization} \label{AP:mle}
Let $\btheta^{(s)}=(\cI^{(s)},\phi^{(s)},\bgamma^{(s)})$ be the maximizer obtained at cycle $s$. The \textit{Gauss-Seidel} method works as follows:
\begin{enumerate}
\item Update $\gamma_k,k=1,\cdots,m$. Define $g(\gamma_k)$ as partial function of log-likelihood with respect to $\gamma_k$ given the other parameters. Newton-like method is adopted to update $\gamma_k$ from $\gamma_k^{(s)}$ to $\gamma_k^{(s+1)}$. 
\item Update $\phi$. Define $g(\phi)$ as partial function of log-likelihood with respect to $\phi$ given other parameters. Similarly, Newton-like method is adopted to update $\phi$ from $\phi^{(s)}$ to $\phi^{(s+1)}$. 
\item Update $\cI$. Let $g(\cI)$ denote the log-likelihood as a function of $\cI$ with other parameters fixed. We randomly select two neighboring entities and swap their rankings to check whether $g(\cI\mid\bgamma^{(s+1)}, \phi^{(s+1)})$ increases.
\end{enumerate}

The procedure starts from a random guess of all the parameters and then repeat Steps 1-3 until the log-likelihood converges. The convergence of likelihood is achieved when the difference of log-likelihood between any two consecutive iterations is less than 0.1. The difficulty of the procedure lies in the update of $\cI$. Practically, the starting point of $\cI$ can be some quick estimation of $\cI$, such as an estimate from the Mallows model. 

In each cycle of the Gauss-Seidel update for finding the MLE, a {partial function of likelihood} has to be defined. Suppose the current cyclic index is $s$,  the detailed computation of $\btheta^{(s+1)}$ is given below. 

\subsection{Update $\gamma_k$} \label{APsub:gamma}
Define $g(\gamma_k)=\log(f(\gamma_k| \bgamma_{[1:k-1]}^{(s+1)},\bgamma_{[k+1:m]}^{(s)},\cI^{(s)},\phi^{(s)}))$. Thus, for $\gamma_k \in [0,10]$,
\begin{small}
\begin{eqnarray}
\nonumber
g(\gamma_k) &=& \log\left(\frac{1}{(B^*_{\tau_k,I})^{\gamma_k}\times(C^*_{\gamma_k,n_1})^{n-n_1}\times (D^*_{\tau_k,\cI})^{\phi\cdot\gamma_k}\times E^*_{\phi,\gamma_k}}\right) \\
&=& -\gamma_k \times \log(B^*_{\tau_k,I}) -(n-n_1)\times \log(C^*_{\gamma_k,n_1})-\phi^{(s)} \times\gamma_k \times \log(D^*_{\tau_k,\cI}) -\log(E^*_{\phi,\gamma_k}).
\end{eqnarray}
\end{small}
The updating equation is given as
\begin{eqnarray}
\gamma_k^{(s+1)}= \gamma_k^{(s)} - \alpha_{\gamma_k} \frac{g'(\gamma_k)}{g''(\gamma_k)},
\end{eqnarray}
where $\alpha_{\gamma_k}$ is step length parameter which can be tuned to control convergence of $g(\gamma_k)$. And
\begin{eqnarray}
g'(\gamma_k)&=& - \log(B^*_{\tau_k,I}) -(n-n_1)\times \frac{\partial \log(C^*_{\gamma_k,n_1})}{\partial \gamma_k} -  \phi^{(s)} \times \log(D^*_{\tau_k,\cI}) - \frac{\partial \log(E^*_{\phi,\gamma_k})}{ \partial \gamma_k}, \\
g''(\gamma_k)&=& -(n-n_1)\times \frac{\partial^2 \log(C^*_{\gamma_k,n_1}) }{\partial^2 \gamma_k}- \frac{\partial^2 \log(E^*_{\phi,\gamma_k}) }{\partial^2 \gamma_k},
\end{eqnarray}
where
\begin{eqnarray}
\nonumber
\log(B^*_{\tau_k,I})&=& \sum_{t=1}^{n_1+1} n_{\tau_k,t}^{0|1} \log (t),\\ \nonumber
\frac{\partial \log(C^*_{\gamma_k,n_1})}{\partial \gamma_k}&=&-\frac{\sum_{t=1}^{n_1+1} t^{-\gamma_k} \log (t)}{C^*_{\gamma_k,n_1}},\\\nonumber
\log(D^*_{\tau_k,\cI}) &=& d_{\tau}(\tau_k^1,\cI^{+(s)}),\\\nonumber
\frac{\partial \log(E^*_{\phi,\gamma_k})}{ \partial \gamma_k} &=& \sum_{t=2}^{n_1} \frac{t \cdot \phi^{(s)} \exp\{-t\phi^{(s)}\gamma_k \}}{1-\exp\{-t\phi^{(s)}\gamma_k \}} -\frac{\phi^{(s)}\cdot(n_1-1)}{1-\exp\{-\phi^{(s)}\gamma_k\}}\times \exp\{-\phi^{(s)}\gamma_k\},\\ \nonumber
\frac{\partial^2 \log(C^*_{\gamma_k,n_1}) }{\partial^2 \gamma_k}&=& \frac{1}{(C^*_{\gamma_k,n_1})^2}[C^*_{\gamma_k,n_1}\sum_{t=1}^{n_1+1}  t^{-\gamma_k}\log (t)^2 + (\sum_{t=1}^{n_1+1} t^{-\gamma_k} \log (t))^2], \\ \nonumber
\frac{\partial^2 \log(E^*_{\phi,\gamma_k}) }{\partial^2 \gamma_k}&=& \sum_{t=2}^{n_1} \frac{-t^2\phi^{(s)^2}\exp\{-t\phi^{(s)}\gamma_k \}}{[1-\exp\{-t\phi^{(s)}\gamma_k \}]^2} + \frac{\phi^{(s)^2}(n_1-1)\exp\{-\phi^{(s)}\gamma_k\}}{[1-\exp\{-\phi^{(s)}\gamma_k\}]^2}.
\end{eqnarray}
\subsection{Update $\phi$} \label{APsubphi}
We define $g(\phi)=\log(f(\phi | \bgamma^{(s+1)}, \cI^{(s)}))$. Thus, for $\phi \in [0,10]$
\begin{eqnarray}
g(\phi)=-\sum_{k=1}^m \left(\phi \cdot \gamma_k^{(s+1)}  \log(D^*_{\tau_k,\cI}) +\log(E^*_{\phi,\gamma_k^{(s+1)}})\right).
\end{eqnarray}
The update equation is given as
\begin{eqnarray}
\phi^{(s+1)}= \phi^{(s)} - \alpha_{\phi} \frac{g'(\phi)}{g''(\phi)},
\end{eqnarray}
where $\alpha_{\phi}$ is step length parameter which controls convergence of $g(\phi)$.
And
\begin{eqnarray}
g'(\phi)&=&-\sum_{k=1}^m \left(\gamma_k^{(s+1)}  \log(D^*_{\tau_k,\cI})+ \frac{\partial [\log(E^*_{\phi,\gamma_k^{(s+1)}})]}{\partial \phi}\right), \\
g''(\phi)&=& -\sum_{k=1}^m  \frac{\partial^2 [\log(E^*_{\phi,\gamma_k^{(s+1)}})]}{\partial^2 \phi},
\end{eqnarray}
where $$\frac{\partial [\log(E^*_{\phi,\gamma_k^{(s+1)}})]}{\partial \phi}= \left(\sum_{t=2}^{n_1} \frac{t\cdot \gamma_k^{(s+1)} \exp\{-t\phi \gamma_k^{(s+1)} \}}{1-\exp\{-t \phi \gamma_k^{(s+1)}\}   } \right) - (n_1-1)\times \frac{\gamma_k^{(s+1)}\exp\{-\phi\gamma_k^{(s+1)}\}}{1-\exp\{-\phi\gamma_k^{(s+1)}\}},$$
$$\frac{\partial^2 [\log(E^*_{\phi,\gamma_k^{(s+1)}})]}{\partial^2 \phi}= \left(\sum_{t=2}^{n_1}  \frac{-t^2 \gamma_k^{(s+1)^2} \exp\{-t \phi \gamma_k^{(s+1)}\}}{[1-\exp\{-t \phi \gamma_k^{(s+1)}\}]^2}\right)  +  \frac{\gamma_k^{(s+1)^2}(n_1-1)\exp\{- \phi \gamma_k^{(s+1)}\}}{[1-\exp\{- \phi \gamma_k^{(s+1)}\}]^2}.$$

\subsection{Update $\cI$} \label{APsubI}
We define $g(\cI)=\log(f(\cI|\bgamma^{(s+1)},\phi^{(s+1)}))$. Thus,
\begin{eqnarray}
g(\cI)&=&\log\left(\prod_{k=1}^m\frac{1}{A^*_{\tau_k,I}\times(B^*_{\tau_k,I})^{\gamma_k}\times(D^*_{\tau_k,\cI})^{\phi\cdot\gamma_k}}  \right) .
\end{eqnarray}
Given current estimation $\cI^{(s)}$, the proposal of a new estimate can be obtained by iteratively swapping the neighboring entities in $\cI^{(s)}$. To be noticeable that the entity whose ranking is $n_1$ could be randomly swapped with any background entity. The proposed estimation is denoted by $\cI^{(s+\frac{1}{2})}$. If $g(\cI^{(s+\frac{1}{2})}) > g(\cI^{(s)})$, assign $\cI^{(s+\frac{1}{2})}$ to $\cI^{(s+1)}$. Otherwise, keep generating proposed estimation $\cI^{(s+\frac{1}{2})}$ until $g(\cI^{(s+\frac{1}{2})}) > g(\cI^{(s)})$ or no new proposal can be generated for $\cI$.

\section{Details of Inferring PAMA-H}\label{AP:PAMA-H}

\subsection{The Gauss-Seidel Iterative Optimization}
Let $\btheta^{(s)}=(\cI^{(s)},\phi^{(s)},\psi^{(s)})$ be the maximizer obtained at cycle $s$.
As $\bgamma$ is treated as missing data, we adapt MCEM (\cite{tanner1990}) to implement optimization. Then E step is: define $\hat{Q}^{(s+1)}(\btheta\mid \btheta^{(s)}) = \frac{1}{m^{(s)}} \sum_{i=1}^{m^{(s)}}  \log f(\cI^{(s)},\phi^{(s)},\psi^{(s)},\bgamma_i\mid\tau_1,\cdots,\tau_m)$, where $\bgamma_i$ is a sample drawn from $f(\bgamma \mid \cI^{(s)},\phi^{(s)},\psi^{(s)})$ which is defined in \eqref{app:PAMA-HB:gamma}. And M step is: maximize $\hat{Q}^{(s+1)}(\btheta\mid \btheta^{(s)})$ with respect to $\btheta$ to obtain $\btheta^{(s+1)}=(\cI^{(s+1)},\phi^{(s+1)},\psi^{(s+1)})$. The  \textit{Gauss-Seidel} method is utilized to conduct this optimization. The detailed procedure of Gauss-Seidel method is stated below. 

{\bf Update $\cI$}. With objective function being $\hat{Q}^{(s+1)}(\btheta\mid \btheta^{(s)})$, the procedure of updating $\cI$ is same as that in Section \ref{AP:mle}.

{\bf Update $\phi$.} With objective function being $\hat{Q}^{(s+1)}(\btheta\mid \btheta^{(s)})$, the procedure of updating $\phi$ is same as that in Section \ref{AP:mle}.

{\bf Update $\psi$.} Define $g(\psi)$ as partial function of $\hat{Q}^{(s+1)}(\btheta\mid \btheta^{(s)})$, Newton-like method is adopted to update $\psi$. Then $g(\psi) = \frac{1}{m^{(s)}} \sum_{i=1}^{m^{(s)}}  \log f_{\psi}(\bgamma_i)$. Suppose $F_\psi(\gamma)$ is an exponential distribution $f(\gamma\mid \alpha)$. Then $g(\alpha) = \log(\alpha)  - \frac{\alpha}{m^{(s)}} \sum_{i=1}^{m^{(s)}}\sum_{k=1}^m \gamma_{ik} ]$. It is straightforward to conduct Newton-like method to obtain $\psi^{(s+1)}$.

\subsection{The Bayesian Inference}
Suppose the $F_\psi(\gamma)$ is an exponential distribution $f(\gamma\mid \alpha)$. More specifically, $$f(x\mid \alpha) = \alpha  \exp\{-\alpha x\}, \alpha>0.$$ Note that $\psi = \alpha$ here. As the conditional distribution of $\cI$ and $\phi$ are already known in (27) and (28), we just show the conditional distributions of $\psi$ and $\gamma_k$ as below,
\begin{eqnarray}
f(\alpha\mid \cI,\phi,\bgamma)&\propto&\pi(\alpha)\cdot\prod_{k=1}^m{\alpha} \exp\{-\alpha \gamma_k\},\\
\label{app:PAMA-HB:gamma}
f(\gamma_k\mid\cI,\alpha,\psi,\bgamma_{[-k]})&\propto&\frac{ \exp\{-\alpha \gamma_k\}}{(B^*_{\tau_k,I})^{\gamma_k}\times(C^*_{\gamma_k,n_1})^{n-n_1}\times (D^*_{\tau_k,\cI})^{\phi\cdot\gamma_k}\times E^*_{\phi,\gamma_k}}.
\end{eqnarray}

\subsection{PAMA-H versus PAMA in Numerical Experiments}

We simulate 500 independent data sets with hierarchy distribution being an exponential distribution. The true parameters are $\cI=(1,2,\cdots,n_1,{\bf0}_{n_0}), \phi = 0.6, n_1= 10$ and $\alpha = 1$ for the parameter in the exponential distribution. Four different configurations of $n,m$ are considered. The average recovery distances and coverages are displayed in Table \ref{Tab:PAMAH}. Figure \ref{fig:PAMA-Hgamma} compares the estimation of $\gamma_k$ obtained by different models and frameworks. Both Table \ref{Tab:PAMAH} and Figure \ref{fig:PAMA-Hgamma} indicate that PAMA model performs as well as PAMA-H model. 

\begin{table}[htp]
    \centering
    \begin{tabular}{cc|c|cc|cc} 
    \hline %\hline
    \multicolumn{2}{c|}{Configuration}& &\multicolumn{2}{c|}{Bayesian Inference} &\multicolumn{2}{c}{{MLE}}  \\ \cline{1-7} 
     $n$ &$m$ & Metric &PAMA$_{HB}$&PAMA$_B$  & PAMA$_{HF}$ & PAMA$_{F}$\\  \cline{1-7} 
\multirow{2}{*}{100}&\multirow{2}{*}{10}&Recovery& {\bf 0.70} (3.83)& 0.98 (6.11) & 19.09 (26.56)& 17.22  (24.80)    \\ 
 && Coverage&{\bf 1.00}  (0.01)& {\bf1.00} (0.01)& 0.96 (0.05)&0.97   (0.05) \\ \cline{1-7}
  \multirow{2}{*}{100}&\multirow{2}{*}{20}&Recovery& 1.22 (6.71)& {\bf 0.84} (5.09)& 33.76 (29.96) &32.16  (29.28)    \\ 
 && Coverage& {\bf1.00} (0.01) & {\bf1.00} (0.01) & 0.93 (0.06)&0.93 (0.06)   \\\cline{1-7}
 \multirow{2}{*}{200}&\multirow{2}{*}{10}&Recovery& 4.15 (24.22)& {\bf 3.36} (19.94)& 57.23 (57.02)& 49.08   (56.00)   \\ 
&& Coverage&{\bf 1.00} (0.02) & {\bf1.00} (0.02)& 0.94  (0.06)&0.95 (0.06)   \\\cline{1-7}
\multirow{2}{*}{200}&\multirow{2}{*}{20}&Recovery& {\bf 1.51} (10.86)& {\bf 1.51} (10.97)& 92.59 (65.94) & 85.67  (64.35)    \\
&& Coverage&{\bf 1.00} (0.01)& {\bf 1.00} (0.01) & 0.91 (0.07)&{ 0.91 (0.07)}   \\ 
\hline
 \hline %\hline
    \end{tabular}
    \caption{Average recovery distances and coverages of different methods based on 500 independent replicates under PAMA-H model with different configurations of $n$ and $m$. The numbers in brackets are corresponding standard deviation.}
    \label{Tab:PAMAH}
\end{table}

\begin{figure}
    \centering
    \includegraphics{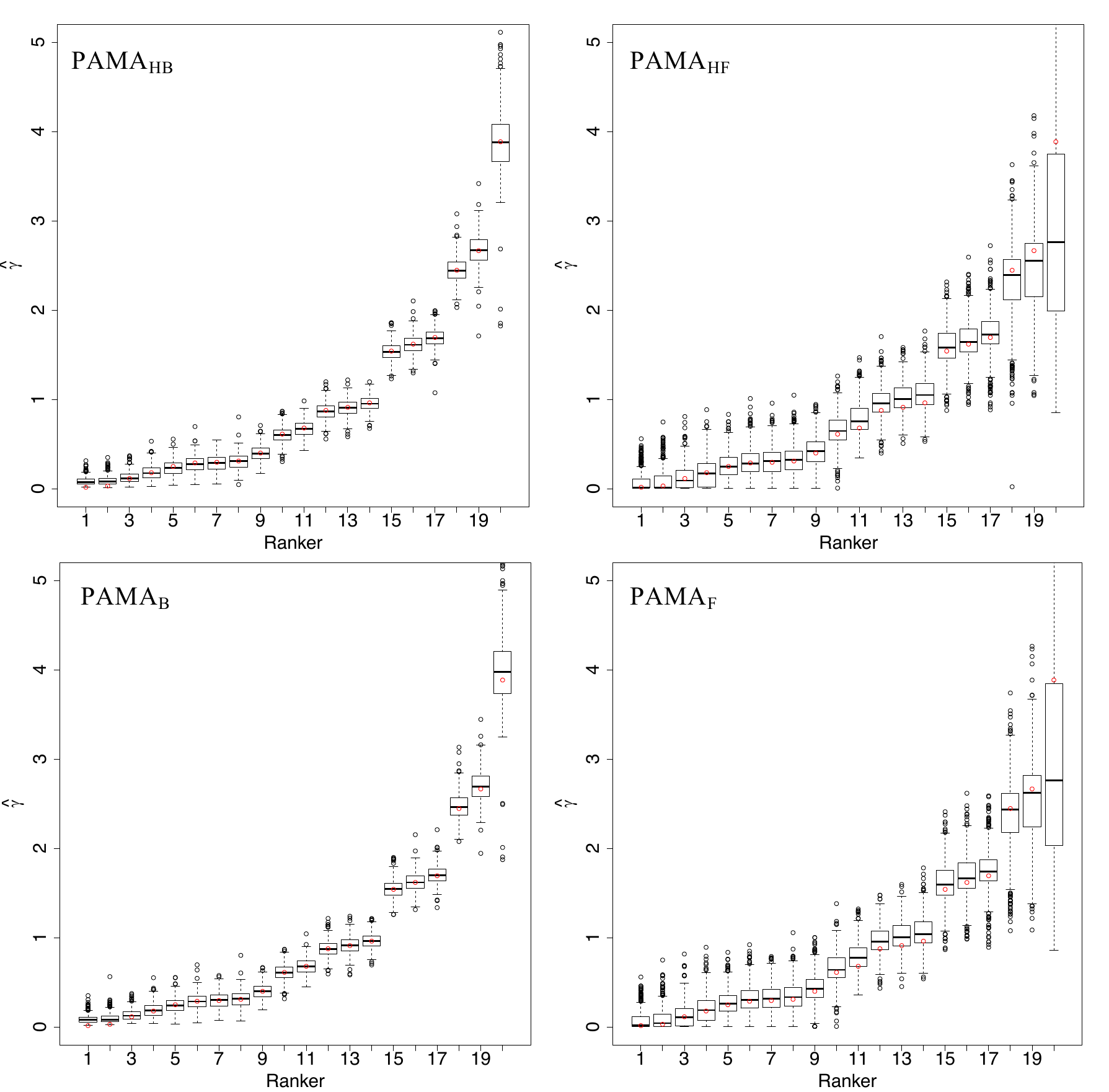}
    \caption{The boxplot of ${\hat{\gamma_k}}$ using different models and frameworks with $n=200$ and $m=20$.}
    \label{fig:PAMA-Hgamma}
\end{figure}

\section{Statistical Inference of the Covariate-Assisted Partition-Mallows Model}
\label{AP:Covariate-Assisted_Model}
\subsection{Bayesian Inference}

Due to the incorporation of $\bX$, the full conditional distributions may occur changes from the conditional distributions in Section 4.2. While the full conditional distributions of $\bgamma$ and $\phi$ keep the same as Equations (28) and (29), the full conditional distribution of $\cI$ changes to
\begin{eqnarray}
f(\cI\mid \cdot) &\propto& \exp\{\boldsymbol{\psi}^T\sum_{i:I_{i}=1}X_i\}\prod_{k=1}^m\frac{\mI\big(\tau_k^0 \in \mathcal{A}_{U_R}(\tau_k^{0\mid1})\big)}{A^*_{\tau_k,I}\times(B^*_{\tau_k,I})^{\gamma_k}\times(D^*_{\tau_k,\cI})^{\phi\cdot\gamma_k}}.
\end{eqnarray}

{In addition, the full conditional distribution of the $l^{th}$ element of  $\boldsymbol{\psi}$ is given as follows,}
\begin{eqnarray}
f(\boldsymbol{\psi}_l \mid \cdot) &\propto& \dfrac{\exp\{\boldsymbol{\psi}^T(\sum_{i:I_{i}>0} X_i)\}}{\prod_{i=1}^n(1+\exp\{\boldsymbol{\psi}^T X_i\})}, l=1,\cdots,p.
\end{eqnarray}
MH algorithm is also adopted to draw corresponding samples. Posterior point estimation can be calculated accordingly. 

\subsection{MLE}
Gauss-Seidel iterative optimization is adopted as well in optimizing covariate-assisted PAMA. Let $\btheta^{(s)}=(\cI^{(s)},\phi^{(s)},\bgamma^{(s)},\boldsymbol{\psi}^{(s)})$ be the maximizer obtained at cycle $s$. The detailed computation of $\btheta^{(s+1)}$ is given below. 
\subsubsection{Update of $\gamma_k$}
This is same as Section \ref{APsub:gamma}.
\subsubsection{Update of $\phi$}
This is same as Section \ref{APsubphi}.
\subsubsection{Update of $\cI$}
We define $g(\cI)=\log(f(\cI|\bgamma^{(s+1)},\phi^{(s+1)},\boldsymbol{\psi}^{(s)}))$, thus,
\begin{eqnarray}
g(\cI)&=&\log\left( P(\cI \mid \bX,\boldsymbol{\psi}^{(s)}) \times P(\tau_1,\cdots,\tau_m \mid \cI,\phi^{(s+1)},\bgamma^{(s+1)}) \right) 
\end{eqnarray}
Given current estimation $\cI^{(s)}$, the proposal of new estimation can be obtained by iteratively swapping the neighboring entities in $\cI^{(s)}$. To be noticeable that the entity whose ranking is $n_1$ could be randomly swapped with any background entity. The proposed estimation is denoted by $\cI^{(s+\frac{1}{2})}$. If $g(\cI^{(s+\frac{1}{2})}) > g(\cI^{(s)})$, assign $\cI^{(s+\frac{1}{2})}$ to $\cI^{(s+1)}$. Otherwise, keep generating proposed estimation $\cI^{(s+\frac{1}{2})}$ until $g(\cI^{(s+\frac{1}{2})}) > g(\cI^{(s)})$ or no new proposal can be generated for $\cI$. 
\subsubsection{Update of $\boldsymbol{\psi}$}
Define $g(\boldsymbol{\psi}) = \log(P(\cI^{(s+1)}\mid \bX,\boldsymbol{\psi}))$. Maximizing $g(\boldsymbol{\psi})$ with respect to $\boldsymbol{\psi}$ is actually a standard logistic regression problem. Therefore, $\boldsymbol{\psi}^{(s+1)} = {\arg\  max}_{\boldsymbol{\psi}} \log(P(\cI ^{(s+1)}\mid \bX,\boldsymbol{\psi}))$ by using standard statistical software. 

\section{Boxplots of $\bgamma$}\label{AP:boxplotwholegamma}
Figure \ref{Fig:wholegamma} demonstrates the boxplots of $\bgamma$ of all the setting with all combinations of $(n,m)$.

\begin{figure}[htp]
\centering
\includegraphics[width=0.95\linewidth]{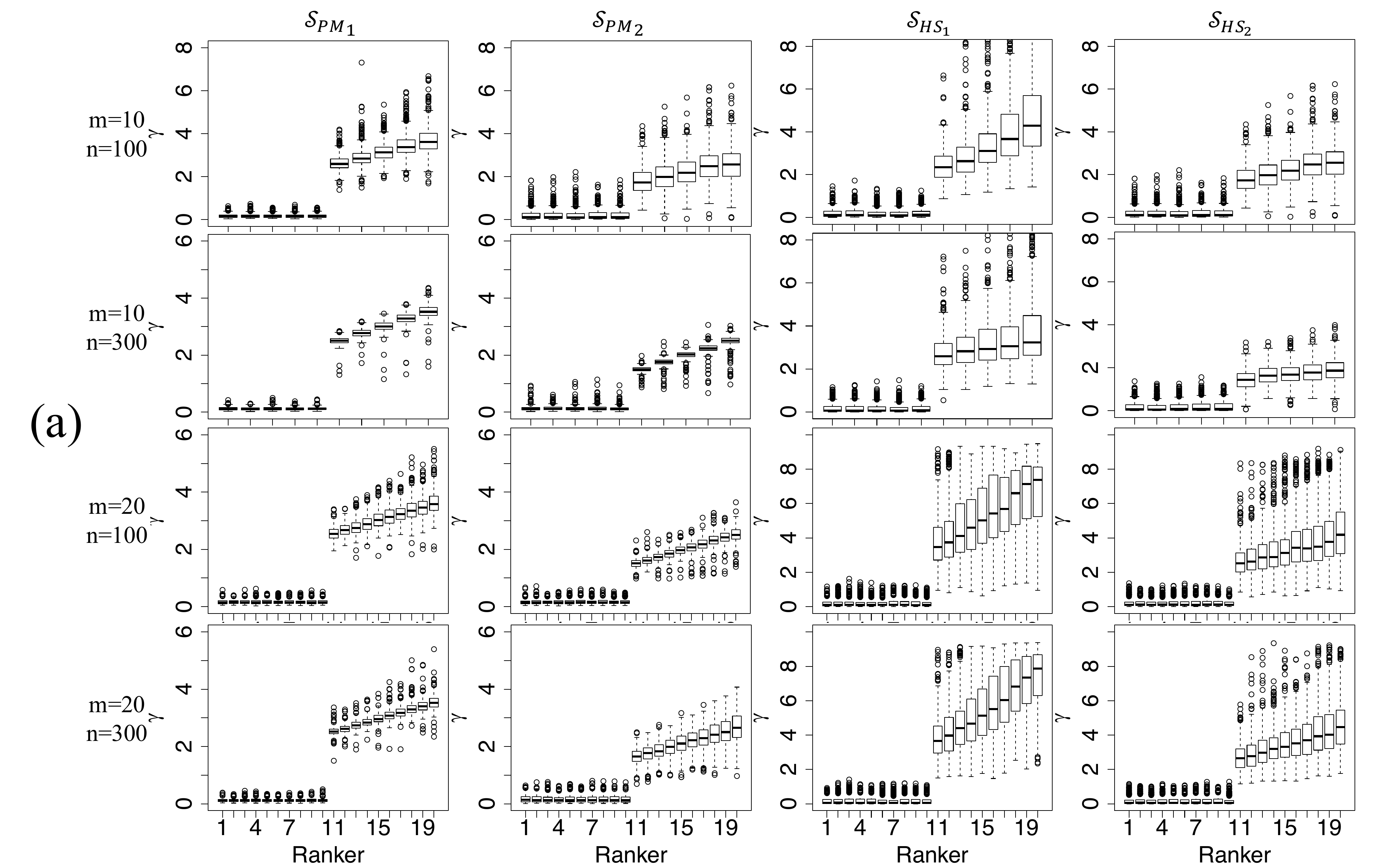}
\includegraphics[width=0.95\linewidth]{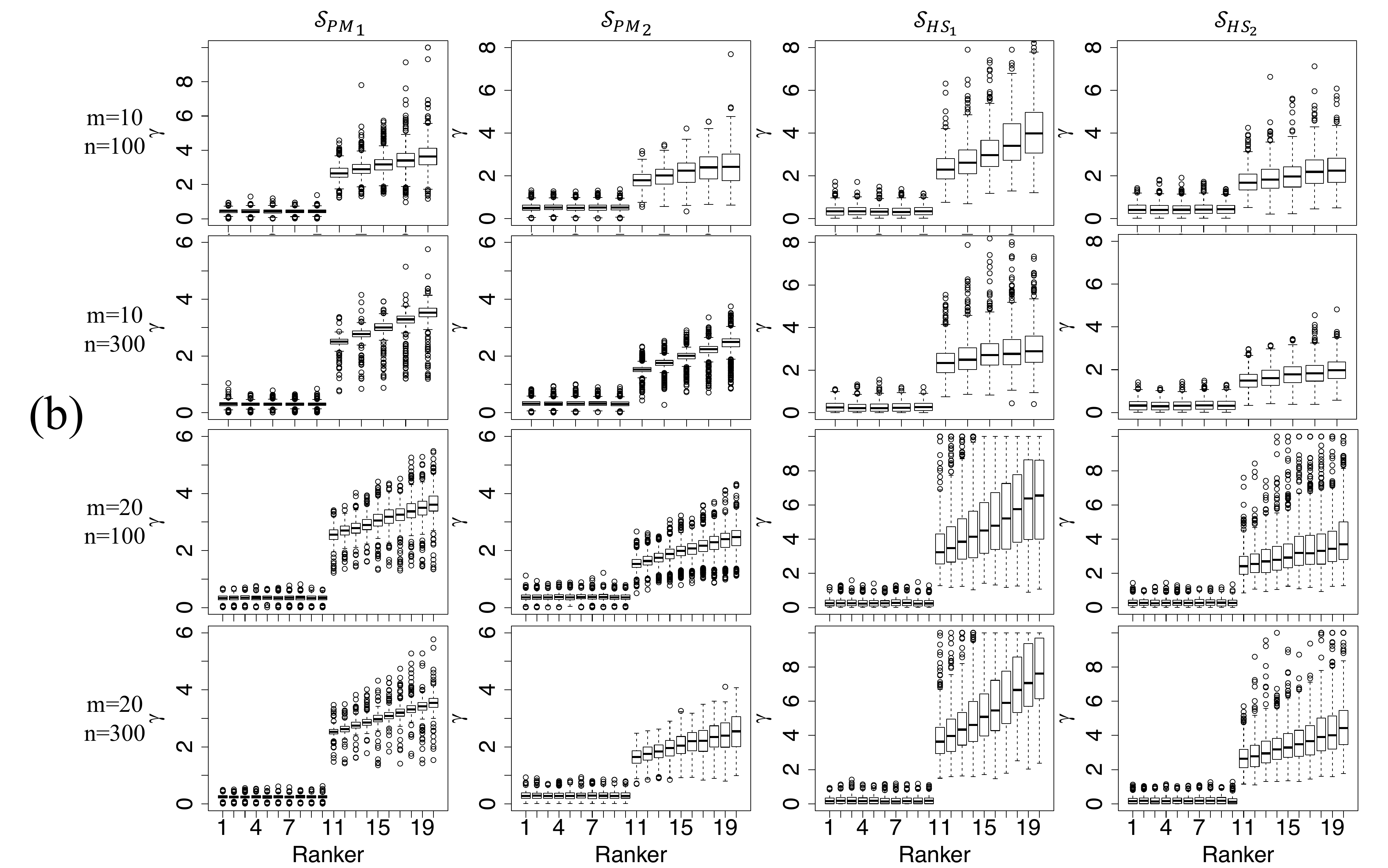}
\caption{(a) The boxplots of $\{\bar\gamma_k\}$ estimated by PAMA$_B$. (b) The boxplots of $\{\hat\gamma_k\}$ estimated by PAMA$_F$. Each row denotes a fixed combination of $m$ and $n$. Each column denotes a scenario setting. The boxplots are based on results from 500 independent replicates.}
\label{Fig:wholegamma}
\end{figure}

\section{Simulation Results}
\label{sec:SR}
Figure \ref{Fig:n100m20}, \ref{Fig:n300m10} and \ref{Fig:n300m20} present results from different methods for simulated data under various combinations of $n$ and $m$.

\begin{figure}[htp]
\centering
\includegraphics[width=\linewidth]{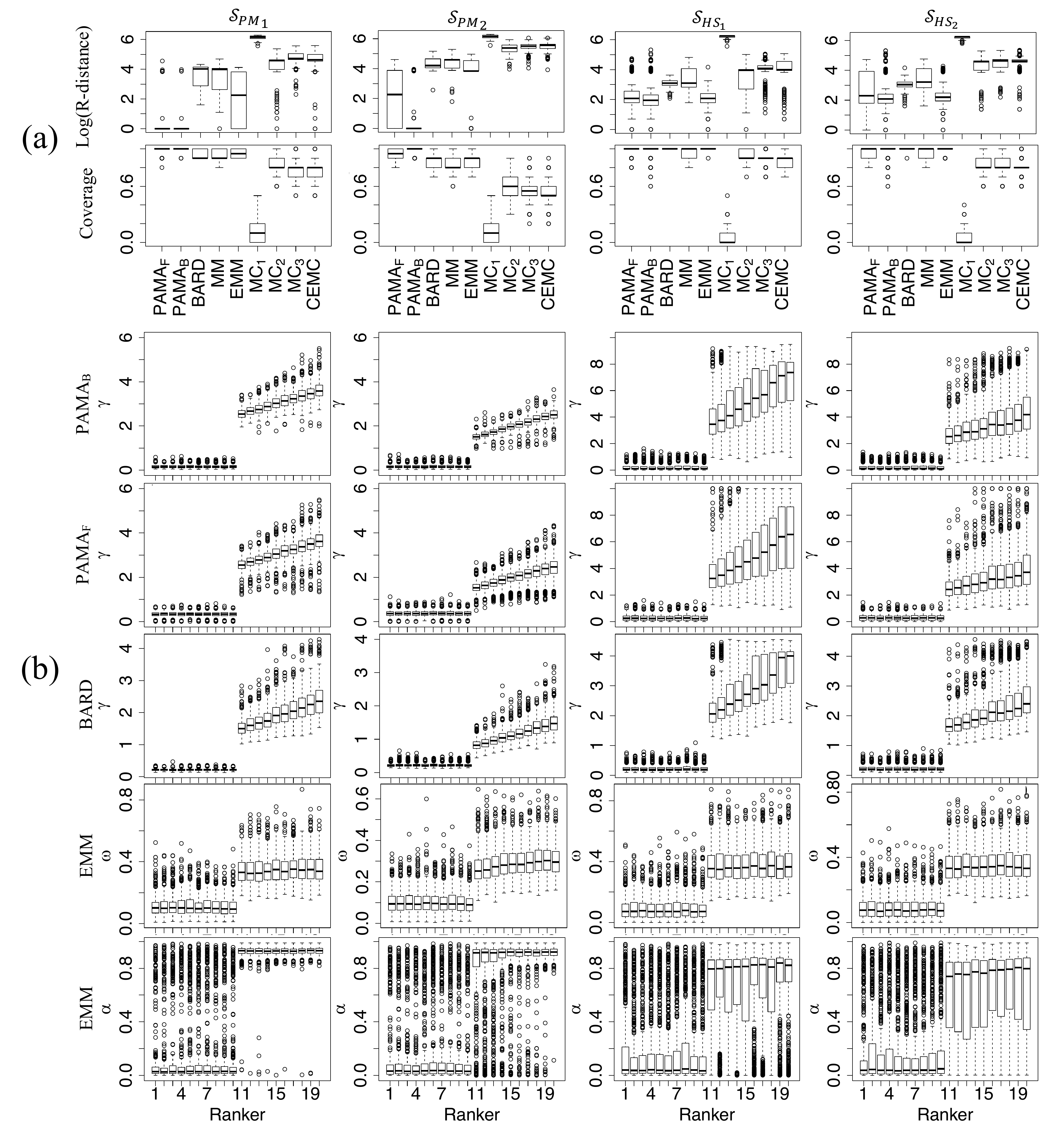} 
\caption{Results from different methods for simulated data under various scenarios with $m=20,n=100$, and $n_1=10$. The boxplots are based on 500 replications. (a) Recovery distances in log scale and coverage obtained from nine algorithms. (b) Quality parameters obtained by Partition-type models and EMM.}
\label{Fig:n100m20}
\end{figure}

\begin{figure}[htp]
\centering
\includegraphics[width=\linewidth]{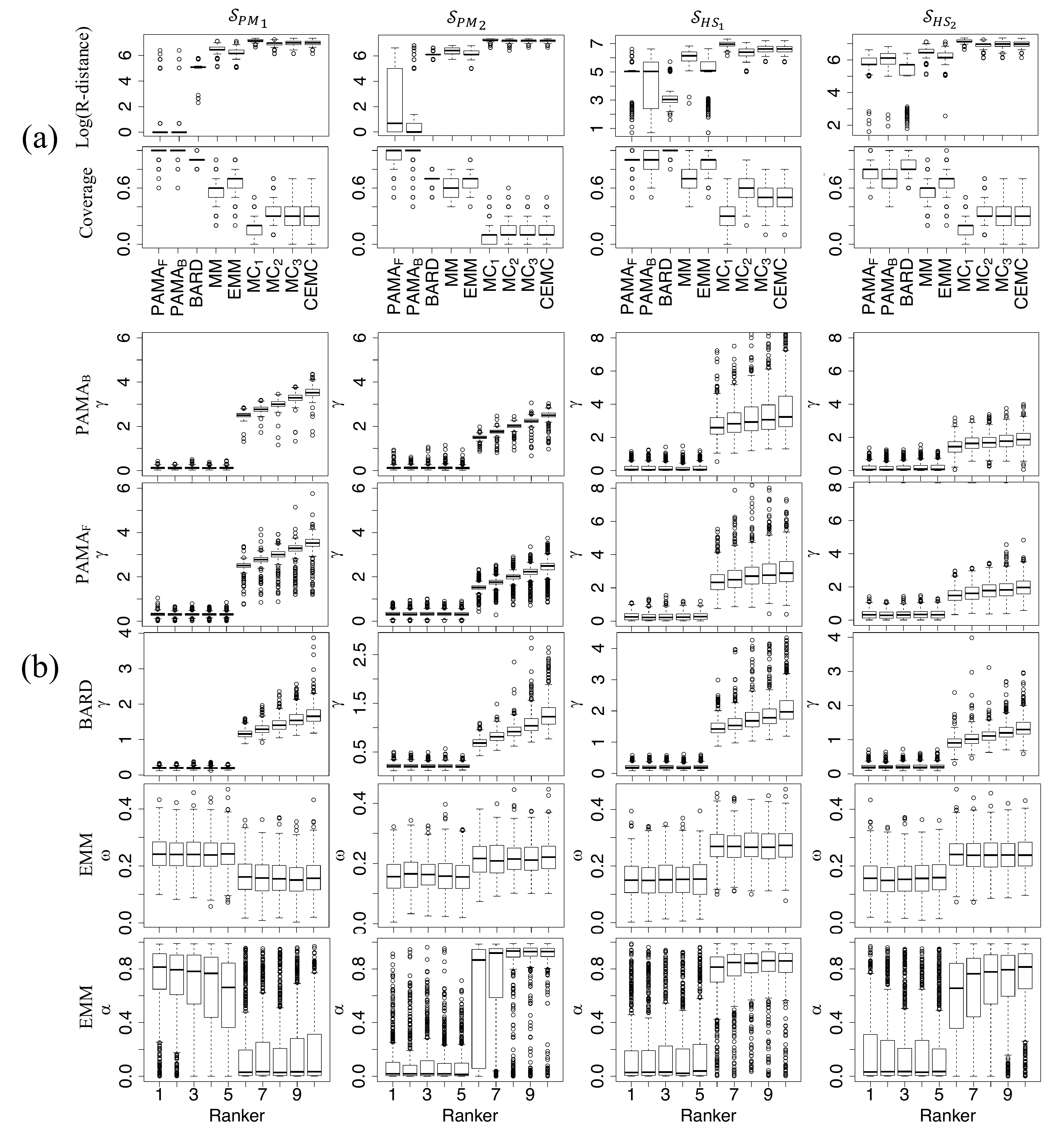} 
\caption{Results from different methods for simulated data under various scenarios with $m=10,n=300$, and $n_1=10$. The boxplots are based on 500 replications. (a) Recovery distances in log scale and coverage obtained from nine algorithms. (b) Quality parameters obtained by Partition-type models and EMM.}
\label{Fig:n300m10}
\end{figure}

\begin{figure}[htp]
\centering
\includegraphics[width=\linewidth]{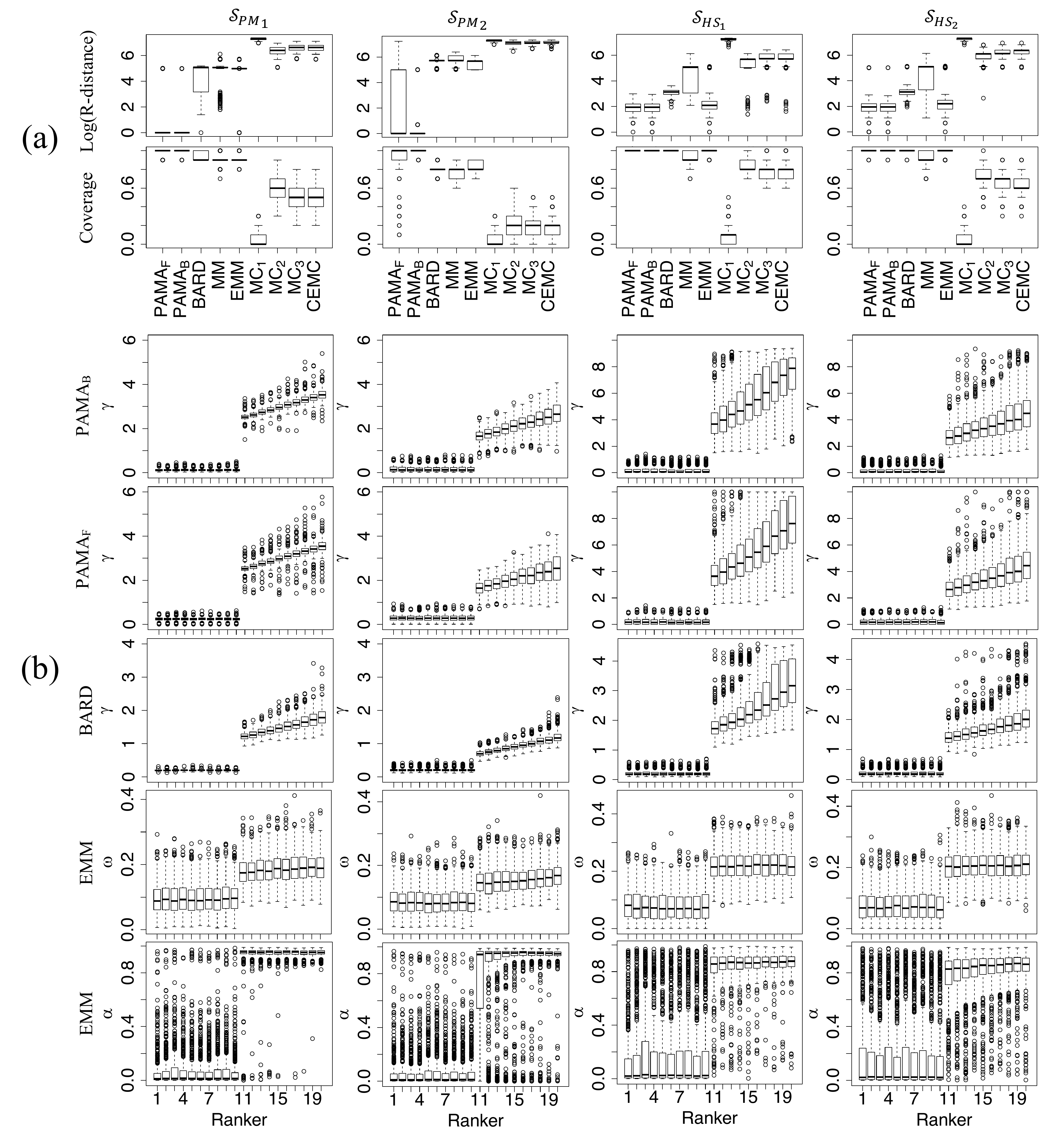} 
\caption{Results from different methods for simulated data under various scenarios with $m=20,n=300$, and $n_1=10$. The boxplots are based on 500 replications. (a) Recovery distances in log scale and coverage obtained from nine algorithms. (b) Quality parameters obtained by Partition-type models and EMM.}
\label{Fig:n300m20}
\end{figure}

\section{Robustness of $n_1$} \label{AP:robustness}
Figure \ref{fig:misspecifieddSM} shows boxplots of estimated $\cI$ for each mis-specified case (5, 12, 15). 
\begin{figure}
    \centering
    \includegraphics[width=0.8 \linewidth]{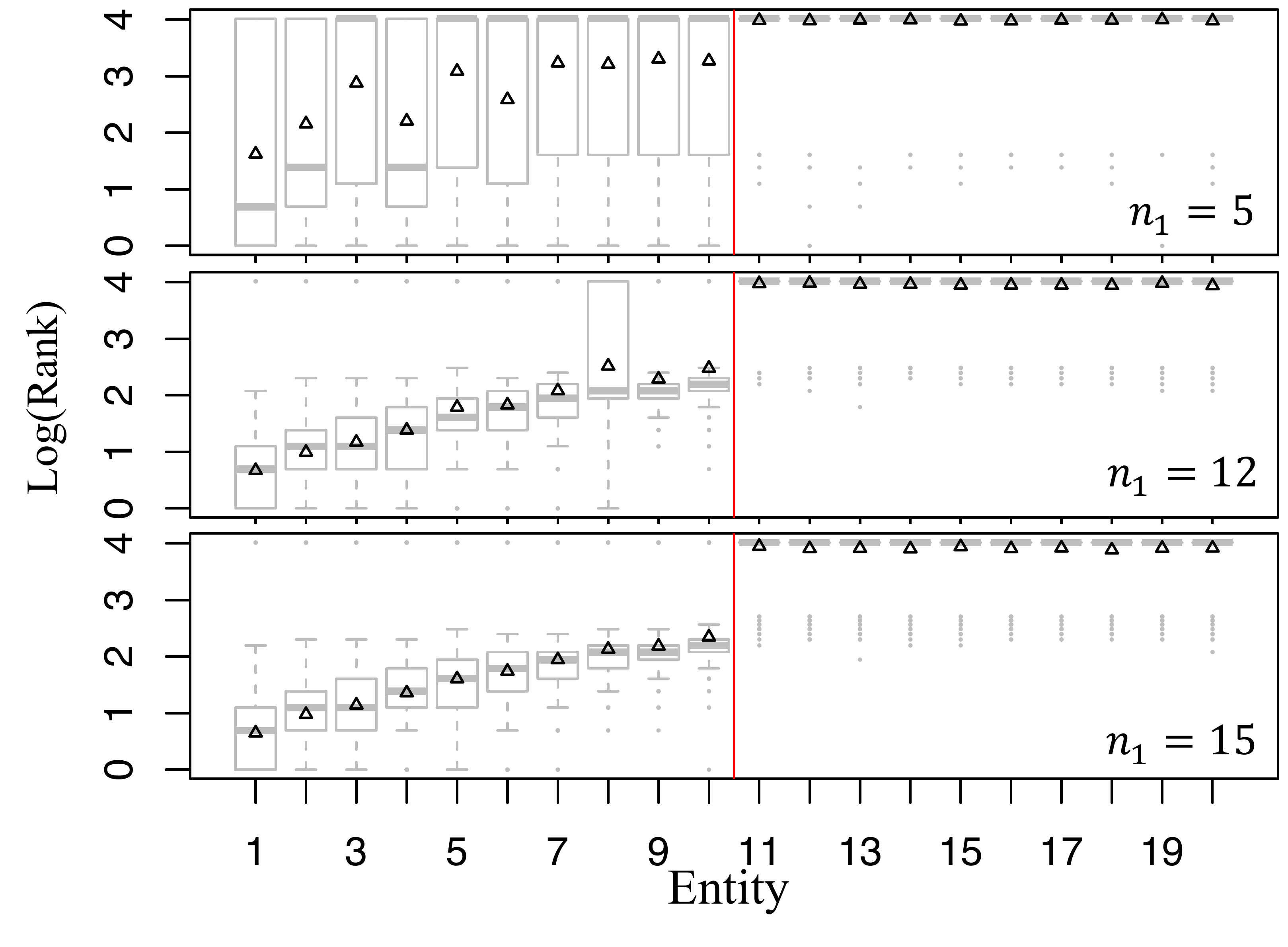}
    \caption{Boxplots of the estimated $\cI$ from  500 replications under the setting of $\cS_{HS_1}$ with $n_1$ being set as 5, 12 and 15, respectively. The true $n_1$ is $10$.
    The vertical lines separate relevant entities (left) from background ones.
    The Y-axis shows the logarithm of the entities' ranks. The rank of a background entity is replaced by their average $\frac{100+10+1}{2}$.
    The triangle denotes the mean rank of each entity.}
    \label{fig:misspecifieddSM}
\end{figure}

\bibliographystyle{apalike}
\bibliography{BARDMallows}
\end{document}